\documentclass[11pt,reqno]{amsart}
\usepackage{amsaddr}

\usepackage[a4paper,
			left=2.5cm, 
			right=2.5cm, 
			top=2.5cm]{geometry}


\usepackage{amsmath,amssymb,amsthm,amsfonts,amsxtra,bbm}

\usepackage{tikz-cd}

\usepackage[hyperfootnotes=true,
			pdffitwindow=true,
			plainpages=false,
			pdfpagelabels=true,
			pdfpagemode=UseOutlines,
			pdfpagelayout=SinglePage,
			hyperindex,]{hyperref}

\numberwithin{equation}{section}

\usepackage[shortlabels]{enumitem}

\theoremstyle{plain}
\newtheorem{theorem}{Theorem}
\newtheorem{lemma}[theorem]{Lemma}
\newtheorem{corollary}[theorem]{Corollary}
\newtheorem{proposition}[theorem]{Proposition}
\theoremstyle{definition}
\newtheorem{definition}{Definition}
\newtheorem{example}{Example}
\theoremstyle{remark}
\newtheorem{remark}{Remark}

\newcommand{\Z}{\mathbb Z}
\newcommand{\R}{\mathbb R}
\newcommand{\C}{\mathbb C}

\renewcommand{\O}{\mathrm O}
\newcommand{\U}{\mathrm U}
\newcommand{\SO}{\mathrm{SO}}

\newcommand{\USp}{\mathrm{USp}}
\newcommand{\herm}{\mathrm{Herm}}
\newcommand{\Mat}{\mathrm{Mat}}
\newcommand{\GL}{\mathrm{GL}}

\newcommand{\dv}{\mathrm{d}}
\newcommand{\tr}{\mathrm{Tr}}
\newcommand{\diag}{\mathrm{diag}}
\renewcommand{\o}{\mathfrak o}
\newcommand{\usp}{\mathfrak {usp}}
\newcommand{\s}{\mathcal S}
\newcommand{\f}{\mathcal F}
\newcommand{\h}{\mathcal H}
\newcommand{\m}{\mathcal M}
\newcommand{\sgn}{\mathrm{sgn}}


\begin{document}
	
\title{Derivative principles for invariant ensembles}

\author{Mario Kieburg}
\email{m.kieburg@unimelb.edu.au (M.Kieburg)}

\address{School of Mathematics and Statistics, University of Melbourne, 813 Swanston Street, Parkville, Melbourne VIC 3010, Australia}

\author{Jiyuan Zhang}
\email{jiyuanzhang.ms@gmail.com (J. Zhang)}

\address{Department of Mathematics, Katholieke Universiteit Leuven, Celestijnenlaan 200B, Leuven B-3001, Blegium}




\begin{abstract}
In the present work we show that the joint probability distribution of the eigenvalues can be expressed in terms of a differential operator acting on the distribution of some other matrix quantities. Those quantities might be the diagonal or pseudo-diagonal entries as it is the case for Hermitian matrices. These representations are called derivative principles. We show them for the additive spaces of the Hermitian, Hermitian antisymmetric, Hermitian anti-self-dual, and complex rectangular matrices as well as for the two multiplicative matrix spaces of the positive definite Hermitian matrices and of the unitary matrices. In all six cases we prove the uniqueness of the derivative principles.
\\
\\
\smallskip
\noindent \textbf{Keywords.} random matrices, harmonic analysis, matrix convolution, sums and products of matrices, spherical transform.
\end{abstract}

\maketitle

\


\section{Introduction}\label{s1}

In elementary probability theory, it is well known that for two independent real random vectors $x,y\in\R^n$ associated with probability distributions $f(x)$ and $g(y)$ respectively, their sum $x+y$ is a random vector with distribution equal to the additive convolution of $f$ and $g$, i.e.,
\begin{equation}\label{1.0.1}
f\ast g(z):=\int_{\R^n} f(z-t)g(t)\dv t.
\end{equation}
This result can be derived using multivariate harmonic analysis in terms of characteristic functions in the language of probability theory.

The study of the matrix analogue has drawn recent attention in terms of finding the eigenvalues of $A+B$ where $A$ and $B$ are Hermitian matrices. One branch of this is to consider $A$ and $B$ being fixed, which is called the Horn problem~\cite{Ho62}. We refer the rich literatures for its various connections with representation theory~\cite{CZ17,Fu00,Kl98}, combinatorics~\cite{KTT06,KT01}, and algebraic geometry~\cite{Kn99}. Another branch considers $A$ and $B$ drawn from specific ensembles which are invariant under conjugation of respective matrix groups---these ensembles only depend on their eigenvalue distributions while their eigenvectors become independent and are Haar distributed. For such eigenvalues to be fixed, i.e. the eigenvalue distributions are Dirac deltas, the sum of those matrices generalises Horn's problem to a sum of randomized orbits created by the conjugate action of $\O(n)$ and $\U(n)$, see~\cite{FG06,Zu17,Fa19,FZ19}. We would like to draw also attention to some recent works discussing a multiplicative version of Horn's problem~\cite{FZ19,ZKF19}. For random eigenvalues, a big class of unitarily invariant ensembles, the P\'olya ensemble on $\herm(n)$, has been identified in~\cite{KR19,FKK17}. In those studies analogously to elementary probability theory, harmonic analysis on symmetric spaces appears to be an essential tool, e.g., see~\cite{KR19,FKK17,ZKF19}.

The essential difference between sums of real random variables and the eigenvalues of a sum of random matrices is that, the latter does not admit the simple convolution formula~\eqref{1.0.1}. In other words, the eigenvalues of the sum generally is not the same as the sum of eigenvalues, which is, $\mathrm{eig}(A+B)\ne \mathrm{eig}(A)+\mathrm{eig}(B)$ for $A,B$ being two Hermitian matrices. Our motivation for the present work, is to bypass such non-additivity by uniquely mapping eigenvalues to another set of random variables which admits the convolution formula~\eqref{1.0.1}. Such a relation, called the \textit{derivative principle}, was firstly found in~\cite{Fa06} in terms of Hermitian random matrices, which did not gain wide attention. It was then rediscovered in the framework of symplectic geometry in the study of a one-body quantum marginal problem~\cite{CDKW14}, and was later rephrased using again random matrix theory~\cite{MZB16}. One purpose of this manuscript is to draw more attention to the following result.

\begin{proposition}[Derivative principle~\cite{Fa06,CDKW14,MZB16}]\label{p1.1}
	Let $X\in\herm(n)$ be an $n\times n$ unitarily invariant random Hermitian matrix, i.e. its distribution is unchanged under any conjugation of $n\times n$ unitary matrices. Let $f$ denote its joint probability density of the eigenvalues, and $f_{\diag}$ denote its joint probability density of the diagonal entries. Under certain analytical assumptions (specified later in Proposition~\ref{p3.1.1}), the two admit the following relation:
	\begin{equation}\label{1.1}
	f(\mathbf x)=\frac{1}{\prod_{j=1}^nj!}\Delta(\mathbf x)\Delta(-\partial_{\mathbf x})f_{\diag}(\mathbf x),
	\end{equation} 
	where $\Delta(\mathbf x):=\prod_{1\le j<k\le n}(x_k-x_j)$ is the Vandermonde determinant and $\Delta(-\partial_{\mathbf x}):=\prod_{1\le j<k\le n}(\partial_{x_j}-\partial_{x_k})$ is a differential operator acting on $f_\diag$.
\end{proposition}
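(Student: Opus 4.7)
The plan is to connect the characteristic functions of $f$ and $f_{\diag}$ through the Harish-Chandra--Itzykson--Zuber (HCIZ) integral, and then recognise the Fourier symbol of $\Delta(-\partial_{\mathbf x})$ in the resulting ratio. By unitary invariance I write $X = U\Lambda U^*$ with $\Lambda = \diag(x_1,\dots,x_n)$ distributed by $f$ and $U$ Haar on $\U(n)$, independent of $\Lambda$. Setting $T = \diag(t_1,\dots,t_n)$ so that $\sum_j t_j X_{jj} = \tr(TX)$, the characteristic function of the diagonal entries factors as
\[
\hat f_{\diag}(\mathbf t) = \E\bigl[e^{i\tr(TX)}\bigr] = \int_{\R^n} f(\mathbf x) \int_{\U(n)} e^{i\tr(TU\Lambda U^*)}\, dU\, d\mathbf x,
\]
and the inner integral is HCIZ for the pair $(iT,\Lambda)$, giving
\[
\int_{\U(n)} e^{i\tr(TU\Lambda U^*)}\, dU = \frac{\prod_{k=1}^{n-1} k!}{i^{n(n-1)/2}\,\Delta(\mathbf t)\,\Delta(\mathbf x)}\, \det\bigl[e^{it_j x_k}\bigr]_{j,k=1}^{n}.
\]

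Next I exploit the symmetry of $f$. Since $f$ is permutation-symmetric, $f/\Delta$ is antisymmetric in $\mathbf x$. Expanding $\det[e^{it_jx_k}]=\sum_\sigma \sgn(\sigma)\prod_j e^{it_jx_{\sigma(j)}}$ and relabelling $y_j = x_{\sigma(j)}$ in the $\sigma$-summand produces a factor $\sgn(\sigma)$ from $\Delta$ that cancels the $\sgn(\sigma)$ in the expansion, so all $n!$ summands coincide with $\int (f/\Delta)(\mathbf y)\, e^{i\mathbf t\cdot\mathbf y}\, d\mathbf y$. This yields
\[
\hat f_{\diag}(\mathbf t) = \frac{\prod_{k=1}^{n} k!}{i^{n(n-1)/2}\,\Delta(\mathbf t)}\, \int_{\R^n} \frac{f(\mathbf x)}{\Delta(\mathbf x)}\, e^{i\mathbf t\cdot\mathbf x}\, d\mathbf x.
\]

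The last step is to recognise $i^{n(n-1)/2}\Delta(\mathbf t)$ as the Fourier symbol of $\Delta(-\partial_{\mathbf x})$: since $\partial_{x_j}$ corresponds to multiplication by $-it_j$ under the characteristic-function transform, $-\partial_{x_j}$ corresponds to $it_j$, so $\Delta(-\partial_{\mathbf x})=\prod_{j<k}(\partial_{x_j}-\partial_{x_k})$ has symbol $\prod_{j<k} i(t_k-t_j) = i^{n(n-1)/2}\Delta(\mathbf t)$. Multiplying the previous identity by $i^{n(n-1)/2}\Delta(\mathbf t)$ and Fourier-inverting then yields $\Delta(-\partial_{\mathbf x}) f_{\diag}(\mathbf x) = \prod_{k=1}^{n} k!\cdot f(\mathbf x)/\Delta(\mathbf x)$, which rearranges to the claim.

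The main obstacle is to legitimate the division by $\Delta(\mathbf x)$ and the subsequent Fourier inversion, since $\Delta$ vanishes on the coincidence locus. For a genuinely invariant ensemble the Weyl integration formula forces $f$ to contain a factor $\Delta(\mathbf x)^2$, so $f/\Delta$ is a smooth antisymmetric function for which classical Fourier inversion applies; for merely distributional $f_{\diag}$ the identity must be read distributionally and obtained by approximation on a dense class of smooth test densities, which is precisely the sort of care that the uniqueness statements later in the paper require.
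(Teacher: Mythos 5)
Your proof is correct, and the essential mechanism is the same as the paper's: identify the characteristic function of the diagonal entries with the spherical transform of $f$, Laplace-expand the determinant in the Harish-Chandra--Itzykson--Zuber integral, use permutation symmetry to collapse the sum, and recognise $i^{n(n-1)/2}\Delta(\mathbf t)$ as the Fourier multiplier of $\Delta(-\partial_{\mathbf x})$. The difference is purely one of direction and packaging. The paper starts from the inverse spherical transform of $f$ (its eq.~\eqref{2.2.5}), plugs in $\s f=\f f_\diag$, and lets the built-in $\Delta(s)^2$ cancel the $\Delta(s)$ in the denominator of the spherical function so that no division by $\Delta(\mathbf x)$ ever occurs (the $\Delta(\mathbf x)^2$ in the inverse-transform kernel absorbs the $\Delta(\mathbf x)$ in the HCIZ denominator, leaving the prefactor $\Delta(\mathbf x)$ in~\eqref{3.1.5}). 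You instead compute $\hat f_\diag$ forward via HCIZ, isolate $\int (f/\Delta)e^{i\mathbf t\cdot\mathbf x}\,d\mathbf x$, and Fourier-invert; this is a little more self-contained, since it needs only the HCIZ formula and elementary Fourier calculus rather than the explicit inverse spherical transform, but it introduces the division by $\Delta(\mathbf x)$ that you then have to justify. Your justification is correct: by the Weyl integration formula~\eqref{1.1.2} the invariant density $f$ carries a factor $\Delta(\mathbf x)^2$, so $f/\Delta$ is a genuine (antisymmetric, integrable) function. The remaining analytic care --- absolute integrability of $\hat f_\diag$ so that the interchange of the derivative operator and the Fourier inversion integral is legitimate --- is exactly the regularity hypothesis $f_\diag\in L^{1,n(n-1)/2}_\f(\R^n)$ that the paper imposes in Proposition~\ref{p3.1.1}, which you should state explicitly.
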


Proposition~\ref{p1.1} provides a bijection between the joint distributions of eigenvalues and diagonal entries. Now if $A$ and $B$ are drawn from two unitarily invariant ensembles, we know from $\diag(A+B)=\diag(A)+\diag(B)$ that their diagonal entries satisfy the convolution formula~\eqref{1.0.1}. Therefore, using Proposition~\ref{p1.1} we can recover the eigenvalues of $A+B$. Proofs of Proposition~\ref{p1.1} can be found in~\cite{Fa06,MZB16}---however, both of them are lacking discussions on the analytical requirements. Such assumptions will be added in Proposition~\ref{p3.1.1}, where we will prove it for the sake of completeness and also for a parallel comparison with proofs of the other analogous cases that we will introduce later.
 
As an application of Proposition~\ref{p1.1}, we would like to point out its potential in transferring results from classical probability theory to random matrices. Particularly, one sees that the vector of diagonal entries is a classical random vector and satisfies a classical central limit theorem (or the generalised version in terms of stable laws) in contrast to the one of the eigenvalues. Yet, with the help of the derivative principle, such a result can be directly carried over to Hermitian invariant ensembles, as well as results in terms of its rate of convergence; see~\cite{KZ21}. 

Coming from another perspective, one may encounter the P\'olya ensembles on $\herm(n)$ that have a very similar structure as the right hand side of~\eqref{1.1}, cf., Refs.~\cite{KR19,FKK17}. Indeed, one only needs to replace $f_\diag$ by a product of weight functions depending on individual $x_k$, and the differential operator $\Delta(-\partial_x)$ by $\Delta(-x\partial_x):=\prod_{j<k}(x_k\partial_{x_k}-x_j\partial_{x_j})$. In recent works~\cite{FKK17,Ki19,KFI19,KK16,KK19,ZKF19}, harmonic analysis for random matrices has been extended to sums of other matrix spaces as well as to products of random matrices with the help of other Harish-Chandra--Itzykson--Zuber-type integrals~\cite{HC57,IZ80} like the Berezin-Karpelevich integral~\cite{BK58,GW96} or the Gelfand-Na\u\i mark integral~\cite{GN57}. Interestingly, the corresponding P\'olya ensembles that arise out of all these computations have a distribution with a similar structure as the right side of~\eqref{1.1}. Therefore, we would like to answer the following natural question:
\begin{center}
	\it Can one generalise the derivative principle on unitarily invariant $\herm(n)$ ensembles \\to invariant ensembles for other types of matrices? 
\end{center}
 
The main purpose of the present work, is to prove the affirmation of this question with consideration of two types of invariant ensembles: additive and multiplicative ensembles. We consider the additive matrix spaces of Hermitian matrices $\herm(n)$ and of the Hankel class $M_\nu$. The latter comprises the two Lie algebras of the orthogonal group which are the real antisymmetric matrices $\o(n)$ and of the unitary symplectic group $\USp(2n)$ which are the quaternion anti-self-dual matrices $\usp(2n)$ and the complex rectangular matrices $\Mat(n,n+\nu)$. It was already pointed out in~\cite{FKK17} that they can be considered in a unifying way. Moreover, we study two sets of multiplicative ensembles on $\herm_+(n)$, set of positive Hermitian matrices, and the unitary group $\U(n)$. Any random matrix drawn from \textit{those} ensembles are invariant under their corresponding group actions.

For each of these ensembles, we will prove a derivative principle and show that the relation between the two quantities that are connected by this principle is unique. In the case of the additive matrix spaces, the matrix addition becomes a simple additive convolution even for the eigenvalues. Alas, this is not so simple for the multiplicative cases where we have been only able to derive the derivative principles.

For each of these ensembles, we will prove a derivative principle by giving an explicit formula analogous to~\eqref{1.1}, relating the eigenvalue/squared singular value distributions to multivariate weight functions. We will also show that the latter quantities are unique in such relations. Those formulae can potentially be used in future studies of sums and products of specific random matrix ensembles, such as in calculations of their eigenvalue statistics. Also with those explicit formulae, we are now able to explain why P\'olya ensembles in various matrix spaces (summarised in \cite{FKK17}) are indeed natural to consider---as allowing the multivariate weight function to be decomposable into a product of univariate functions is one of the most natural way to study determinantal processes which are compatible with matrix sums or products.

We organise the present work as follows. In section~\ref{s2}, we introduce our notations used throughout the work. Furthermore, we briefly review the harmonic analysis on multivariate Euclidean spaces as well as matrix spaces as they are the essential tools for proving our statements. In section~\ref{s3}, we first review the theory for additive invariant ensembles on $\herm(n)$, and then analogously develop similar statements for the additive invariant ensembles on $i\o(n)$, $i\usp(2n)$ and $\Mat(n,n+\nu)$. These discussions are carried over to multiplicative invariant ensembles on $\herm_+(n)$ and $\U(n)$ in section~\ref{s4}. We summarise our findings in section~\ref{s5} and in the two appendices we give more insights and proofs for some technical tools that have been used. We will also emphasise the close relationship between Lie algebras and all three additive ensembles in section~\ref{s3.4} with comparison to classical results in harmonic analysis, while such relation remains to be unravelled for multiplicative ensembles.
\vspace{-0.5em}
\section{Preliminaries}\label{s2}

\subsection{Invariant ensembles}\label{s2.1}

Before we come to our main statements we will briefly outline the basic properties of the considered matrix spaces, their corresponding invariant class of group actions, and their underlying measure. We also adapt notations used in~\cite{FKK17}. First let $\O(n), \U(n)$ and $\USp(2n)$ be the orthogonal, unitary and unitary symplectic group respectively. Their associated Lie algebras are denoted by $\o(n)$, $\mathfrak u(n)$ and $\usp(2n)$, respectively.

Due to the intimate relation of the space of Hermitian $n\times n$ matrices $\herm(n):=i\mathfrak u(n)$ we quite often consider those instead of the unitary Lie-algebra. This matrix space is endowed with a conjugation $\U(n)$-action $X\mapsto KXK^{-1}$. By spectral theory, under this action each $X\in \herm(n)$ can be diagonalised, i.e., $X=K\diag(x)K^{-1}$ for some $K\in \U(n)$ where $x=(x_1,\ldots,x_n)$ denotes the eigenvalues of $X$.

To assign probability measures to $\herm(n)$, we first define an underlying measure $\dv X$ for $X\in\herm(n)$ to be the Lebesgue product measure of the upper triangular entries, i.e.
\begin{equation}
\dv X:=\prod_{j=1}^n\dv x_{j,j}\prod_{1\le j<k\le n}\dv x_{j,k}^{(r)}\dv x_{j,k}^{(i)},
\end{equation}
where $x_{j,k}^{(r)}$ and $x_{j,k}^{(i)}$ are the real and imaginary part of the $(j,k)$ entry $x_{j,k}$. In random matrix theory we often consider probability measures absolutely continuous with respect to $\dv X$, and invariant under the endowed $\U(n)$-action. Those measures have a probability density $F\in L^1(\herm(n))$ (i.e., $F$ is absolutely integrable), which satisfies the relation $F(X)=F(KXK^{-1})$ for any $K\in \mathrm U(n)$ and $X\in\mathrm {Herm}(n)$. The collection of all those random variables $X$ is called the \textit{invariant ensemble on $\herm(n)$}.

To assign an invariant probability density $F(X)$, it is equivalent to choose its corresponding eigenvalue distributions $f(x)$ since the Haar measure describing the eigenvectors is unique. For invariant ensembles, there is a simple relation between $F$ and $f$ given by
\begin{equation}\label{1.1.2}
f(x)=\frac{\pi^{n(n-1)/2}}{\prod_{j=0}^nj!}\Delta(x)^2F(\diag(x)),
\end{equation}
where $\diag(x)$ is the diagonal matrix with $x=(x_1,\ldots,x_n)$ being its diagonal entries, and $\Delta(x):=\prod_{1\le j<k\le n}(x_k-x_j)$ is the Vandermonde product. We will use the above notations to discuss the derivative principle on $\herm(n)$ in section~\ref{s3.1}.

The second set of matrices we consider and denote by $\Mat(n,n+\nu)$ are the $n\times (n+\nu)$ rectangular complex matrices. In~\cite{FKK17}, the four matrix groups $\Mat(n,n+\nu), i\o(2n)$, $i\o(2n+1)$ and $i\usp(2n)$ are considered as a general class---\textit{the Hankel class} $M_\nu$, corresponding to a parameter $\nu$. We introduce them together as follows.

\begin{enumerate}
	\item The case $\nu\in\mathbb{N}_0$ corresponds to $\Mat(n,n+\nu)$, which is endowed with a left and right $\U(n)\times \U(n+\nu)$ action $X\mapsto K_1XK_2^{-1}$ for $K_1\in \U(n)$ and $K_2\in\U(n+\nu)$, and by a singular value decomposition, one has $X=K_1YK_2^{-1}$ with $Y=[\lambda_{j}\delta_{j,k}]_{\substack{j=1,\ldots,n\\k=1,\ldots,n+\nu}}$, where $\lambda_1,\ldots,\lambda_{n}$ are the singular values of $X$ and $\delta_{j,k}$ is the Kronecker delta. To understand these in terms of a conjugation action, the matrix space $\Mat(n,n+\nu)$ can be equivalently viewed as a $\herm(2n+\nu)$ chiral matrices with the following block form
	\begin{equation}\label{1.1.4}
	\begin{bmatrix}
	0&X\\X^\dagger&0
	\end{bmatrix},\quad X\in\Mat(n,n+\nu)
	\end{equation}
	with $X^\dagger$ the Hermitian adjoint of $X$. The action is then a conjugation $\U(n)\times \U(n+\nu)$ action along
	\begin{equation}\label{1.1.5}
	\begin{bmatrix}
	0&X\\X^\dagger&0
	\end{bmatrix}=K\,\begin{bmatrix}
	0&Y\\Y^\dagger&0
	\end{bmatrix}\,K^{-1},\quad K=\begin{bmatrix}K_1&0\\0&K_2\end{bmatrix}\in \U(n)\times \U(n+\nu).
	\end{equation}
	
	\item The set $i\o(m)$ denotes the set of $m\times m$ imaginary anti-symmetric matrices which is equivalent to the Lie-algebra of the orthogonal matrices. We identify $\nu=-1/2$ with $i\o(2n)$ and $\nu=1/2$ with $i\o(2n+1)$, for $n\in\mathbb{N}$. They are endowed with a conjugation $\O(2n)$-action and $\O(2n+1)$-action respectively, and can be block-diagonalised as follows
	\begin{equation}\label{1.1.6}
	X=K\,\diag\left(\begin{bmatrix}
	0&i\lambda_1\\-i\lambda_1&0
	\end{bmatrix},\ldots,\begin{bmatrix}
	0&i\lambda_n\\-i\lambda_n&0
	\end{bmatrix}\right)\,K^{-1},\quad K\in\O(2n),
	\end{equation}
	\begin{equation}\label{1.1.7}
	X=K\,\diag\left(\begin{bmatrix}
	0&i\lambda_1\\-i\lambda_1&0
	\end{bmatrix},\ldots,\begin{bmatrix}
	0&i\lambda_n\\-i\lambda_n&0
	\end{bmatrix},0\right)\,K^{-1},\quad K\in\O(2n+1),
	\end{equation}
	respectively. Here $\pm \lambda_1,\ldots,\pm \lambda_n\in\R$ are the pairs of eigenvalues of $X$.
	
	\item The symmetric matrix space $i \usp(2n)$ denotes the set of $2n \times 2n$ complex Hermitian matrices $[q_{j,k}]_{j,k=1}^n$, with $2\times 2$ quaternionic sub-block structure
	\begin{equation}\label{1.1.8}q_{j,k}=\begin{bmatrix}z&w\\-\bar{w}&\bar{z}\end{bmatrix},\quad z,w\in\C.\end{equation}
	It also corresponds to $\nu=1/2$ as the joint probability density of the non-zero eigenvalues cannot distinguish between $i \usp(2n)$ and $i \o(2n+1)$. The conjugate $\USp(2n)$-action gives a diagonal matrix of pairs of its eigenvalues $\pm \lambda_1,\ldots,\pm \lambda_n\in\R$
	\begin{equation}\label{1.1.9}
	X=K\,\diag(\lambda_1,-\lambda_1,\ldots,\lambda_n,-\lambda_n)\,K^{-1},\quad K\in\USp(2n).
	\end{equation}
\end{enumerate}

For a unified discussion of the above cases, we collect the invariance groups ($\U(n)\times \U(n+\nu), \O(2n), \O(2n+1)$ and $\USp(2n)$, respectively) in a single symbol $K_\nu$ with $\nu\in\mathbb{N}_0\cup\{\pm 1/2\}$. It is convenient to consider the squared singular values $x=(x_1,\ldots,x_n)=(\lambda_1^2,\ldots,\lambda_n^2)$ of $X$ instead of the eigenvalues. We use a unified notation $\iota(x)$ to denote an embedding of $\sqrt{x_1}=\lambda_1,\ldots,\sqrt{x_n}=\lambda_n$ into the matrix spaces. That is, $\iota$ defines a map from $\R^N$ to the respective maximal Abelian subspace, which can be read off in~\eqref{1.1.5},~\eqref{1.1.6}~\eqref{1.1.7}, and~\eqref{1.1.9}, respectively. All $M_\nu$ matrix decompositions can then be written as $X=K\iota(x)K^{-1}$ for $K\in K_\nu$.

The underlying reference measure for $\Mat(n,n+\nu), i\o(2n)$ and $i\o(2n+1)$ are the Lebesgue product measures $\dv X:=\prod_{(j,k)\in A}\dv x_{j,k}$, with the index set $A$ specified in each case by
\begin{equation}
A=\begin{cases}
\{(j,k):j=1,\ldots,n, k=1,\ldots,n+\nu\},&M_\nu=\Mat(n,n+\nu),\\
\{(j,k): 1\le j<k\le 2n\},& M_\nu=i\o(2n),\\
\{(j,k): 1\le j<k\le 2n+1\},& M_\nu=i\o(2n+1),\\
\{(j,k): 1\le j<k\le n\},&M_\nu=i\usp(2n).
\end{cases}
\end{equation}
For $\Mat(n+\nu)$, $\dv x_{j,k}$ is the Lebesgue measure on $\C$ while for $i\o(2n)$ and $i\o(2n+1)$ it is the one on $\R$. For $i\usp(2n)$, $\dv x_{j,k}$ represents the Lebesgue measure on the quaternions.

Summarising the above, a probability density on $M_\nu$ is then a positive $L^1$-function $F(X)$ with $F(X)=F(KXK^{-1})$. Similarly to~\eqref{1.1.2}, we have relations between $F(X)$ and the joint probability density $f(x)$ of its squared singular value distribution $x$; it is
\begin{equation}\label{1.1.10}
	\begin{split}
f(x)&=\frac{\pi^{n(n+\nu)}}{n!C_\nu}\Delta(x)^2\prod_{j=1}^nx_j^{\nu}\,F(\iota(x)),\\ C_\nu&=\begin{cases}2^{n(n-1)}\prod_{j=0}^{n-1}{j!\,\Gamma(j+\nu+1)},&M_{1/2}=i\usp(2n),\\
\prod_{j=0}^{n-1}{j!\,\Gamma(j+\nu+1)},&\text{otherwise,}
\end{cases}
\end{split}
\end{equation}
(see e.g.~\cite{FKK17}). The collection of such random matrices is the set of \textit{invariant ensemble on $M_\nu$}.

We will use these notions to derive derivative principles on each case of the Hankel class in section~\ref{s3.2} and~\ref{s3.3}. For this purpose, we also introduce a notion---the \textit{pseudo-diagonal entries}, which are real entries that play a similar role as the diagonal entries in $\herm(n)$ matrices. In $\Mat(n,n+\nu)$, they represent the real parts of the matrix entries $x_{1,1}, x_{2,2}, \ldots,x_{n,n}$. In both $i\o(2n)$ and $i\o(2n+1)$, they are the matrix entries $x_{1,2},x_{3,4},\ldots,x_{2n-1,2n}$, all divided by the overall factor $i$ so that they are real. In $i\usp(2n)$, they are given by the diagonal entries $x_{2,2},x_{4,4},\ldots,x_{2n,2n}$ as they coincide up to a sign with the entries $x_{1,1},x_{3,3}\ldots x_{2n-1,2n-1}$, because of the quaternion $2\times 2$ block structure.

In the following sections we will identify both $\herm(n)$ and $M_\nu$ as additive matrix groups, equipped with matrix addition. Besides, in the present work we also introduce $\herm_+(n)$, the set of all $n\times n$ invertible matrices, and the unitary group $\U(n)$, as other two multiplicative matrix spaces. 

The endowed group invariance and the reference measure on $\herm_+(n)$ are the natural restrictions of those in $\herm(n)$. We consider \textit{invariant ensemble on $\herm_+(n)$} as a subclass of invariant ensembles on $\herm(n)$, provided their eigenvalues are positive, and a Weyl decomposition formula~\eqref{1.1.2} holds. In terms of multiplication, we consider the following Hermitised product: $(A,B)\mapsto A^{1/2}BA^{1/2}$. This is not a group operation, as associativity is not satisfied, but it is related by the matrix multiplication of $\GL(n)$, $n\times n$ invertible complex matrices. Consider two $\GL(n)$ matrices $G_1$, $G_2$. They form two $\herm_+(n)$ matrices $X_1=G_1^\dagger G_1$ and $X_2=G_2^\dagger G_2$. From the identity
\begin{equation}
\det(\lambda I_n-X_1^{1/2}X_2X_1^{1/2})=\det(\lambda I_n-(G_2G_1)^{\dagger}G_2G_1),
\end{equation}
one can see that the squared singular values of $G_2G_1$ are equal to the eigenvalues of $X_1^{1/2}X_2X_1^{1/2}$. Such an operation on invariant ensembles on $\herm_+(n)$ motivates a multiplicative derivative principle for $\herm_+(n)$, which will be introduced in section~\ref{s4.1}.

The unitary group $\U(n)$, equipped with the usual matrix multiplication, is also endowed with a conjugation $\U(n)$-action. As its reference measure we choose the normalised and uniquely given Haar measure $\mu(dK)$. We therefore consider probability measures absolutely integrable with respect to $\mu(dK)$, that are also invariant under the conjugate group action of $\U(N)$. Such random matrices give an \textit{invariant ensemble on $\U(n)$} with a probability density satisfying $F(X)=F(KXK^{-1})$ for any $X,K\in \U(n)$. Its eigenvalue distribution $f(x)$ is a function on the $n$ dimensional torus since the eigenvalues are distributed on the unit circle on the complex plane. The relation between $F$ and $f$ is given by
\begin{equation}\label{2.1.12}
f(x)=\frac{1}{(2\pi)^nn!}|\Delta(x)|^2F(\diag(x)),
\end{equation}
(see e.g.~\cite[\S 3]{Fo10}). We will derive a derivative principle for $\U(n)$ in section~\ref{s4.2}.

\subsection{Multivariate transforms}\label{s2.2}

In this subsection, we list all multivariate transforms relevant for our discussions, which are analogues to the matrix version spherical transform introduced in the next subsection. We will begin by giving our convention of the Fourier transform on $\R^n$.

\begin{definition}[Fourier transform]
	For a function $\tilde{f}\in L^1(\R^n)$, its multivariate Fourier transform is given by
	\begin{equation}\label{Fourier}
	\f \tilde{f}(s):=\int_{\R^n}\tilde{f}(x)\prod_{j=1}^ne^{ix_js_j}\,\dv x,\quad s\in\R^n.
	\end{equation}
	It has the following well-known properties
	\begin{enumerate}[(1)]
		\item \textit{Inversion}. The injectivity of $\f$ allows an inverse transform after proper restriction to the image $\f (L^1(\R^n))$, which is given by
		\begin{equation}
		\f^{-1}(\f \tilde{f})(x):=\lim_{\varepsilon \to0^+}\frac{1}{(2\pi)^n}\int_{\R}\f \tilde{f}(s)\prod_{j=1}^ne^{-ix_js_j-\varepsilon  s_j^2}\dv x.
		\end{equation}
		The regularisation $\prod_{j=1}^ne^{-\varepsilon  s_j^2}$ in $\varepsilon>0$ can be dropped when $\f \tilde{f}$ is absolutely integrable, too.
		\item \textit{Convolution}. The Fourier transform of an additive convolution of two functions is
		\begin{equation}\label{2.2.2}
		\f (\tilde{f}_1\ast \tilde{f}_2)=\f \tilde{f}_1\cdot \f \tilde{f}_2\quad\text{ with }\quad (\tilde{f}_1\ast \tilde{f}_2)(x):=\int_{\R^n} \tilde{f}_1(x-y)\tilde{f}_2(y)\dv y,
		\end{equation}
		where $\ast$ refers to the additive convolution, and $x-y$ refers to the entry-wise subtraction.
		\item \textit{Differentiation}. For $\tilde{f}\in L^{1,m}_\f(\R^n)$, one has
		\begin{equation}\label{2.2.4b}
		\f\Big[(i\partial_{x_j})^k \tilde{f}(x)\Big](s)=s_j^k\f \tilde{f}(s),\quad 0\le k\le m.
		\end{equation}
	\end{enumerate}
\end{definition}

In part (3), the underlying set $L^{1,m}_\f(\R^n)$ is given by 
\begin{equation}\label{2.2.4a}
L^{1,m}_{\f}(\R^n):=\bigg\{ \tilde{f}\in L^1(\R^n): \int_{\R^n} \left|x^a\partial_x^{b}\tilde{f}(x) \right|\dv x \le \infty, \forall a,b\in\mathbb{N}_0^n\ {\rm and}\ |a|,|b|\leq m\bigg\}.
\end{equation}
Here $a, b$ are multi-indices, i.e. $x^a=x_1^{a_1}x_2^{a_2}\ldots x_n^{a_n}$ and $|a|=a_1+a_2+\ldots+a_n$. Functions in this set are nice enough that even its Fourier transform is differentiable as well as sufficiently integrable for dropping the Gaussian regularisation in the inverse transform. Pushing $m$ to infinity gives the well-known Schwartz functions. We also remark here that such differentiability can be relaxed with the help of distribution theory, but for convenience we keep this differentiable version.

For the study of the Hankel class of matrix spaces $M_\nu$, another additive multivariate transform needs to be introduced -- the Hankel transform, which is the reason for the chosen name \textit{Hankel class}. Let us remark that the joint probability density of the squared singular values $f$ acts only on the $\mathbb{R}_+^n$, for which the Hankel transform is also defined.

\begin{definition}[Hankel transform]
	Let $\nu$ be a parameter in the set $\mathbb{N}_0\cup\{\pm 1/2\}$. For a function $\hat{f}\in L^1(\R_+^n)$, its Hankel transform with parameter $\nu$ and its inverse transform are given by
	\begin{equation}
	\begin{split}\label{2.2.8}
	\h_\nu \hat{f}(s)&=\int_{\R_+^n} \dv x\,\hat{f}(x)\prod_{j=1}^nJ_\nu(2\sqrt{x_js_j})(x_js_j)^{-\nu/2},\\
	\h_\nu^{-1}(\h_\nu \hat{f})(x)&=\lim_{\varepsilon \to^+0}\int_{\R_+^n} \dv s\,\h_\nu \hat{f}(s)\prod_{j=1}^nJ_\nu(2\sqrt{x_js_j})(x_js_j)^{\nu/2}e^{-\varepsilon  s_j},
	\end{split}
	\end{equation}
	where $J_\nu$ is the Bessel function of the first kind with parameter $\nu$. The regularisation $\prod_{j=1}^ne^{-\varepsilon s_j}$ in $\varepsilon>0$ can be again dropped when $\h_\nu \hat{f}(s)\prod_{j=1}^n (1+s_j)^{\nu/2-1/4}$ is absolutely integrable.
\end{definition}

The Hankel transform also satisfies a differentiation formula if the function $\hat{f}$ satisfies some boundary conditions so that the integration by parts can be done without producing additional terms, cf.~\cite[Definition 2.3]{FKK17}. For instance, choosing the function $\hat{f}(x)=x^\nu \tilde{f}(\sqrt{x})$ with an even (in each of the $n$ arguments) function $\tilde{f} \in L_\f^{1,2m}(\R^n)$, we have
\begin{equation}\label{2.2.5b}
\h_\nu[(-{x_j}^\nu\partial_{x_j}x_j^{1-\nu}\partial_{x_j})^k\hat{f}(x)](s)=s_j^k\h_\nu \hat{f}(s),\quad 0\le k\le m.
\end{equation}
This is simply due to the fact that $-{x}_j^\nu\partial_{x_j}x_j^{1-\nu}\partial_{x_j}(J_\nu(2\sqrt{x_js_j})(x_js_j)^{\nu/2})=s_jJ_\nu(2\sqrt{x_js_j})(x_js_j)^{\nu/2}$. The condition of $\tilde{f} \in L_\f^{1,2m}(\R^n)$and $\hat{f}(x)=x^\nu \tilde{f}(\sqrt{x})$ are indeed enough as can be readily checked when rewriting the Hankle transform as follows
\begin{equation}
\h_\nu \hat{f}(s)=\int_{\R^n} \dv \lambda\,\left(\prod_{j=1}^n\lambda_j^{2\nu}|\lambda_j|\right)\, \tilde{f}(\lambda)\prod_{j=1}^nJ_\nu(2\lambda_j\sqrt{s_j})(\lambda_j^2s_j)^{-\nu/2}.
\end{equation}
Then the differential operator becomes $(-\lambda_{j}^{2\nu-1}\partial_{\lambda_j}\lambda_j^{-2\nu}\partial_{\lambda_j}/4)^k$. The factor $x^\nu$ in front of $\tilde{f}(\sqrt{x})$ is reminiscent to the factor appearing in~\eqref{1.1.10}.

Despite the correspondence of the differentiation formula of the Fourier and Hankel transform, $\h_\nu$ does not allow a simply additive convolution formula~\eqref{2.2.2} unlike the Fourier transform. Indeed, it is well known that it corresponds to an additive convolution on space of real vectors of dimension $2\nu+2$.

Later in section~\ref{s3.3}, we will make use of a combination of $\h_\nu^{-1}$ and $\f$, which gives an integral transform called inverse Abel transform. It is a transform that describes the relation between spherical harmonics to plane waves and is usually employed for $\nu=0$ and $\nu=1/2$ in practice since they correspond to two and three dimensional analysis, e.g., see~~\cite[\S 8]{Po96}.
In literatures, there is also a very general definition for the inverse Abel transform in terms of spherical transforms~\cite{He84}. Thus, we will attach a proof of the following proposition in Appendix~\ref{aa}. As a side-remark, both, the inverse Hankel as well as the Fourier transform, are bijections in the space $L_\f^{1,2m}(\R^n)$, therefore, the inverse Abel transform is invertible, too.

\begin{proposition}[Inverse Abel transform]\label{p2.2.2}
	Let $\nu\in\mathbb{N}_0$ and $\tilde{f} \in L_\f^{1,\nu+1}(\R^n)$ be symmetric in each of its arguments, as well as $[\prod_{j=1}^n (x_j^{-1}\partial_{x_j})^{\nu+1}f]\in L^1(\R^n)$. Then, the inverse multivariate Abel transform is given by a combination of an inverse Hankel transform and a Fourier transform with the explicit integral expression
	\begin{equation}\label{2.2.8a}
	\mathcal{A}_\nu^{-1}\tilde{f}(x):=\h^{-1}_\nu [\f \tilde{f}(2\sqrt{s})](x)=\prod_{j=1}^nx_j^\nu\int_{x_1}^\infty\dv y_1\ldots\int_{x_n}^\infty\dv y_n\left(\prod_{j=1}^n\frac{1}{\sqrt{y_j-x_j}}\partial_{y_j}^{\nu+1}\right)\tilde{f}(\sqrt{y_1},\ldots,\sqrt{y_n}),
	\end{equation}
	for almost all $x\in\mathbb{R}_+^n$.
\end{proposition}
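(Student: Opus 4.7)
My plan is to reduce (2.2.8a) to a one-variable identity using the tensor-product structure of $\f$ and $\h_\nu$, and then to verify the one-dimensional identity by applying $\h_\nu$ to the proposed right-hand side and recognising the result as $\f\tilde g(2\sqrt{\cdot})$. Both $\f$ and $\h_\nu$ factor as products of one-dimensional transforms, and the formula on the right of (2.2.8a) is already a product over $j$; together with the hypotheses $\tilde f\in L_\f^{1,\nu+1}(\R^n)$ and $[\prod_j(x_j^{-1}\partial_{x_j})^{\nu+1}f]\in L^1(\R^n)$, Fubini lets us reduce the claim to the one-variable identity
\[
\h_\nu\Bigl[x\mapsto x^\nu\!\int_x^\infty\!\frac{\partial_y^{\nu+1}\tilde g(\sqrt y)}{\sqrt{y-x}}\,\dv y\Bigr](s)\;=\;\f\tilde g(2\sqrt s)
\]
for an even function $\tilde g\in L_\f^{1,\nu+1}(\R)$.

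To verify the 1D identity, I would apply $\h_\nu$ to the right-hand expression, swap the $x$- and $y$-integrations, and compute the inner integral $\int_0^y x^{\nu/2}J_\nu(2\sqrt{xs})(y-x)^{-1/2}\,\dv x$. The substitution $x=y\sin^2\theta$ converts this into Sonine's first finite integral, which evaluates in closed form to a constant multiple of $y^{\nu/2+1/4}s^{-1/4}J_{\nu+1/2}(2\sqrt{ys})$. The problem then reduces to analysing
\[
\int_0^\infty\!\partial_y^{\nu+1}\tilde g(\sqrt y)\,y^{\nu/2+1/4}J_{\nu+1/2}(2\sqrt{ys})\,\dv y,
\]
to which I would apply integration by parts $\nu+1$ times in $y$, using the Bessel identity $\partial_y\bigl[y^{\mu/2}J_\mu(2\sqrt{ys})\bigr]=\sqrt s\,y^{(\mu-1)/2}J_{\mu-1}(2\sqrt{ys})$ to transfer all derivatives off $\tilde g(\sqrt y)$ and lower the Bessel order from $\nu+1/2$ down to $-1/2$. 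Using $J_{-1/2}(z)=\sqrt{2/(\pi z)}\cos z$ and the substitution $t=\sqrt y$ then produces $\f\tilde g(2\sqrt s)$ up to collecting prefactors.

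The hard part of the plan will be controlling the boundary terms from the $\nu+1$ iterated integrations by parts. At $y=0$ each boundary expression contains a factor of $y^{(\nu+1/2-k)/2}J_{\nu+1/2-k}(2\sqrt{ys})$ whose small-argument behaviour is $y^{\nu+1/2-k}$ and so vanishes for $k\le\nu$; at $y=\infty$ the vanishing follows from the Schwartz-like decay built into $L_\f^{1,\nu+1}(\R)$ combined with the asymptotic $J_\mu(z)=O(z^{-1/2})$. A secondary technical point is that the Fubini swap and the Sonine evaluation are initially only distributionally valid, so I would insert the regularisation $e^{-\varepsilon s}$ already present in the definition of $\h_\nu^{-1}$, perform all manipulations, and pass to $\varepsilon\to 0$ at the end using the $L^1$-hypothesis on the iterated derivative. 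Once the identity $\h_\nu[\text{RHS}]=\f\tilde g(2\sqrt{\cdot})$ is established, applying $\h_\nu^{-1}$ to both sides and invoking the injectivity of $\h_\nu$ completes the proof of (2.2.8a).
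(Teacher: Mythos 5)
Your proposal is correct in overall structure and takes a genuinely different route from the paper's proof. The paper computes $\h_\nu^{-1}[\f\tilde f(2\sqrt{s})]$ head-on: after the substitution $s=u^2/4$ and insertion of a regulator, it evaluates the resulting $u$-integral by the Bateman--Erd\'elyi Laplace-transform formula $\int_0^\infty J_\nu(\sqrt{x}u)u^{\nu+1}e^{-(\varepsilon-i\lambda)u}\,\dv u\propto x^{\nu/2}(\varepsilon-i\lambda)\bigl((\varepsilon-i\lambda)^2+x\bigr)^{-\nu-3/2}$, integrates by parts $\nu+1$ times in $\lambda$, and analyses the pointwise $\varepsilon\to0$ limit of $(\varepsilon-i\lambda)/\sqrt{(\varepsilon-i\lambda)^2+x}$, which produces the Heaviside cut. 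You reverse direction: apply the forward $\h_\nu$ to the Abel-type expression, collapse the inner $x$-integral to a single $J_{\nu+1/2}$ by Sonine's first finite integral, then integrate by parts $\nu+1$ times in $y$ via $\partial_y\bigl[y^{\mu/2}J_\mu(2\sqrt{ys})\bigr]=\sqrt{s}\,y^{(\mu-1)/2}J_{\mu-1}(2\sqrt{ys})$ to descend to $J_{-1/2}$ and a Fourier cosine transform, concluding by injectivity of $\h_\nu$. This replaces the complex-argument Bateman--Erd\'elyi formula and branch-cut analysis with the purely real Sonine integral and a transparent Bessel recurrence, which is arguably cleaner; your boundary-term discussion (small $y$ from the $J_\mu$ power series using evenness of $\tilde g$, large $y$ from $J_\mu(z)=O(z^{-1/2})$ together with the $L^1$ hypotheses) is exactly the right analysis.

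One point that is \emph{not} merely a prefactor to be collected: each of your $\nu+1$ integrations by parts contributes a factor $-\sqrt{s}$, so carried to completion your calculation gives $\h_\nu[\mathrm{RHS}](s)=(-1)^{\nu+1}\f\tilde g(2\sqrt{s})$. This is not an error on your side. A direct check with $\nu=0$ and $\tilde f(\lambda)=e^{-\lambda^2}$ gives $\h_0^{-1}[\sqrt{\pi}e^{-s}](x)=\sqrt{\pi}e^{-x}$, whereas the right-hand side of~\eqref{2.2.8a} as printed evaluates to $\int_x^\infty(y-x)^{-1/2}\partial_y e^{-y}\,\dv y=-\sqrt{\pi}e^{-x}$. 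The derivative in~\eqref{2.2.8a} should therefore read $(-\partial_{y_j})^{\nu+1}$, which is consistent with the explicit formula $\mathcal{A}_{1/2}^{-1}f_\diag(\sqrt{x})=\pi^{n/2}\bigl(\prod_{j}-\sqrt{x_j}\partial_{x_j}\bigr)f_\diag(\sqrt{x})$ stated at the end of Section~\ref{s3.3}; the paper's own proof drops this sign in the $\lambda$-integration by parts. Your approach has the virtue of surfacing the sign cleanly. Finally, a small wording nit: the regulator you need to justify the Fubini swap sits in the outer $x$-integral of the forward $\h_\nu$, i.e.\ it should be $e^{-\varepsilon x}$, not the $e^{-\varepsilon s}$ that belongs to the definition of $\h_\nu^{-1}$; after Sonine, the $\varepsilon\to0$ limit follows by dominated convergence from your $L^1$ hypothesis on $\prod_j(x_j^{-1}\partial_{x_j})^{\nu+1}\tilde f$.
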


The function $\tilde{f}$ will be related to the probability $F$ on the matrix space $M_\nu$ but should not be confused with the joint probability density $f$ of the squared singular values $x$ of the random matrix $X$. The same also holds for the other transforms such as the Fourier transform we have introduced.

As a third and fourth transform, we introduce the multivariate Mellin transform and Fourier series that will be extremely helpful in dealing with the multiplicative convolutions. They are the multivariate analogues of the spherical transforms on $\herm_+(n)$ and $\U(n)$, respectively.

\begin{definition}[Mellin transform]
	 The multivariate transform Mellin transform of a function $\tilde{f}\in L^1(\R_+^n)$ is given by
	\begin{equation}
	\m \tilde{f}(s):=\int_{\R^n_+}\tilde{f}(x)\prod_{j=1}^nx_j^{s_j-1}\,\dv x,\quad s\in ((a,b)\times i\R)^n.\,
	\end{equation}
	where $(a,b)\times i\R$ is the fundamental strip for each variable with $-\infty\leq a\leq1\leq b\leq\infty$.
\end{definition}

The Mellin transform has the following three properties:
	\begin{enumerate}[(1)]
		\item \textit{Inversion}. The injectivityof  $\m$ allows an inverse transform on its image, which is given by
		\begin{equation}
		\m^{-1}[\m \tilde{f}](x):=\lim_{\varepsilon \to0^+}\frac{1}{(2\pi)^n}\int_{\R^n}\m \tilde{f}(c+is)\prod_{j=1}^nx_j^{-c-is_j}e^{-\varepsilon  s_j^2}\,\dv s,
		\end{equation}
		where $c$ can be any fixed real number located in the fundamental strip. The regularisation $\prod_{j=1}^Ne^{-\varepsilon s_j^2}$ in $\varepsilon >0$ can be dropped if $\m \tilde{f}(c+is)$ is absolutely integrable in $s\in \mathbb{R}$ for some suitable $c\in(a,b)$.
		\item \textit{Convolution}. The Mellin transform of the multiplicative convolution of two functions is equal to the multiplication of their Mellin transforms, i.e.
		\begin{equation}
		\m (\tilde{f}_1\circledast \tilde{f}_2)=\m \tilde{f}_1\cdot \m \tilde{f}_2,\quad (\tilde{f}_2\circledast \tilde{f}_2)(x):=\int_{\R^n} \tilde{f}_1(x/y)\tilde{f}_2(y)\dv y,
		\end{equation}
		where $\circledast$ refers to the multiplicative convolution, and $x/y$ refers to the entry-wise division.
		\item \textit{Differentiation}. For $f\in L^{1,m}_\m(\R_+^n)$, one has
		\begin{equation}
		\m\Big((-x_j\partial_{x_j})^k \tilde{f}(x)\Big)(s)=s_j^k\m \tilde{f}(s),\quad 0\le k\le m.
		\end{equation}
	\end{enumerate}

The set $ L^{1,m}_\m(\R_+^n)$ is given by
\begin{equation}
L^{1,m}_{\m}(\R_+^n):=\bigg\{ \tilde{f}\in C^{m}(\R_+^n): \int_{\R_+^n} \left|x^a(x\partial_x)^{b}\tilde{f}(x) \right|\dv x \le \infty, \forall a,b\in \mathbb{N}_0^n\text{ and }|a|,|b|=0,\ldots,m\bigg\}.
\end{equation}
Here $a,b$ are anew multi-indices. Similarly to $L^{1,m}_\f(\R^n)$, this set allows up to $m$ times the application of $(-x_j\partial_{x_j})$ on $\tilde{f}$, and the Mellin transforms obtained are all integrable.

Eventually, we come to the Fourier series which plays an important role for the multiplicative convolutions on the unitary group $\U(n)$. Those arise because the eigenvalues $x$ of a unitary matrix live on the $n$-dimensional torus.

\begin{definition}[Fourier series]
	Let $\tilde{f}\in L^1((-\pi,\pi]^n)$ and bounded, which is understood as $2\pi$-periodic piece-wise continuous function on $\R$. Then, its multivariate Fourier series and inversion are given by
	\begin{equation}\label{2.2.16}
	\begin{split}
	\f \tilde{f}(s):&=\int_{(-\pi,\pi]^n}\tilde{f}(x)\prod_{j=1}^ne^{ix_js_j}\,\dv x,\\ \f^{-1}(\f \tilde{f})(x):&=\lim_{\varepsilon \to0^+}\frac{1}{(2\pi)^n}\sum_{s\in\Z^n}\f \tilde{f}(s)\prod_{j=1}^ne^{-ix_js_j-\varepsilon  s_j^2}.
	\end{split}
	\end{equation}
	The regularisation $\prod_{j=1}^Ne^{-\varepsilon s_j^2}$ in $\varepsilon >0$ can be dropped when $\f \tilde{f}\in l^1(\mathbb{Z}^n)$. We slightly abuse the notation $\f$, but from the context one can distinguish it from the Fourier transform.
\end{definition}

Fourier series also satisfy a convolution property similar to~\eqref{2.2.2} with $(-\pi,\pi]^n$ replacing $\R^n$ in the integral and a differentiation property given exactly by~\eqref{2.2.4b}. The set of functions for the differentiation property is then denoted by
\begin{equation}\label{2.2.16b}
L^{1,m}_{\f}((-\pi,\pi]^n):=\bigg\{ \tilde{f}\in L^1((-\pi,\pi]^n): \int_{(-\pi,\pi]^n} \left|x^a\partial_x^{b}\tilde{f}(x) \right|\dv x \le \infty, \forall a,b\in\mathbb{N}_0^n\ {\rm and}\ |a|,|b|\leq m\bigg\}.
\end{equation}
Here, the differentiation has to hold at the boundary $\pm\pi$, too, since the functions are considered to be $2\pi$ periodic in each of its $n$ arguments. This is very natural as they can be originally understood as functions on an $n$-dimensional torus.

\subsection{Spherical transforms}

The author of~\cite{He84} provided a theoretic framework for a generalisation of the Fourier transform on all locally compact Lie groups with a group invariance which is called the \textit{spherical transform}. Roughly speaking this theory generalises the above ideas, allowing us to discuss operations of invariant random variables on a Lie group. One example is $\herm(n)$, considered to be the additive matrix group.

In practice, the spherical transform for invariant ensembles on $\herm(n)$ has been repeatedly used in the context of random matrix theory, which is the Harish-Chandra--Itzykson-Zuber (HCIZ) integral~\cite{HC57,IZ80}. Choosing two real vectors $x=(x_1,\ldots,x_n), s=(s_1,\ldots,s_n)\in\R^n$ with pairwise different components, the HCIZ integral is given by
\begin{equation}\label{2.2.3}
\int_{\U(n)}\mu(\dv K)\exp\left[i\tr\,K\diag(x)K^{-1} \diag(s)\right]=\left(\prod_{j=1}^{n-1}j!\right)\frac{\det[e^{ix_js_k}]_{j,k=1}^n}{\Delta(ix)\Delta(s)}=:\phi(x,s).
\end{equation}

When considering $\herm(n)$ as a matrix group equipped with matrix addition, one can define the spherical transform either by applying the HCIZ integral to a matrix Fourier transform, or by making use of the general framework introduced in~\cite{He84}. In comparison to the univariate case, a spherical transform opens up the possibility for analysing sums of invariant Hermitian random matrices.

\begin{definition}[Spherical transform on $\herm(n)$]
	For an invariant random matrix $X\in\herm(n)$ with distribution $F$ and joint probability density $f\in L^1(\R^n)$ of the eigenvalues $x$ of $X$, its spherical transform is defined by
	\begin{equation}\label{2.2.4}
	\s f(s):=\int_{\herm(n)}\dv X\,F(X)\exp(i\tr X\diag (s))=\int_{\R^n} \dv x\,f(x)\phi(x,s),
	\end{equation}
	for $s\in\R^n$ and $\phi$ given in~\eqref{2.2.3}. Its inverse transform is given by
	\begin{equation}\label{2.2.5}
	f(x)=\lim_{\varepsilon \to0^+}\frac{1}{\prod_{j=1}^{n}(j!)^2}\Delta(x)^2\int_{\R^n}\frac{\dv s}{(2\pi)^n}\,\s f(s)\phi(-x,s)\Delta(s)^2 \prod_{j=1}^ne^{-\varepsilon  s_j^2},
	\end{equation}
	(see e.g.~\cite{KR19}). The regularisation $\prod_{j=1}^ne^{-\varepsilon  s_j^2}$ in $\varepsilon >0$ is only necessary when the remaining integrand is not absolutely integrable.
\end{definition}

The benefit of this transformation shows when adding two invariant Hermitian random matrices $A$ and $B$, namely the convolution theorem, that is the analogue of~\eqref{2.2.2} with $\f$ replaced by $\s$, reads
	\begin{equation}\label{2.2.5a}
	f_{A+B}=\s^{-1}(\s f_A\cdot \s f_B).
	\end{equation}
We employ here a convenient notation $f_A$ to highlight that it is the joint probability distribution of the eigenvalues of the random matrix $A$.

The general framework of Helgason's textbook allows realisations of spherical transforms on other matrix groups, as well. In particular, for the Hankel class $M_\nu$, see~\cite{FKK17}, one can find group integrals corresponding to~\eqref{3.1.1}. For example, the group integral over the orthogonal group $\O(n)$ or unitary symplectic group give two other HCIZ integral (see e.g.~\cite{FEFZ07}), and the group integral over $\U(n)\times \U(n+\nu)$ is a Berezin-Karpelevich integral integral~\cite{BK58,GW96}. With the notions introduced in section~\ref{s2.1}, a general integral formula for all such cases can be stated as follows, distinguished by a parameter $\nu\in\mathbb{N}_0\cup\{\pm1/2\}$,
\begin{equation}
	\begin{split}
\int_{K_\nu} \mu(\dv K)\,&\exp\bigg(i\tr\, K\,\iota(x)\,K^{-1}\,\iota(s)\bigg)
\\
&=\left(\prod_{j=0}^{n-1}j!\Gamma(j+\nu+1)\right)\frac{\det\left[J_\nu(2\sqrt{x_js_k})/(x_js_k)^{\nu/2}\right]_{j,k=1}^n}{\Delta(x)\Delta(s)}=:\phi(x,s).\label{2.2.6}
\end{split}
	\end{equation}
The function $J_\nu$ denotes the Bessel function of the first kind with parameter $\nu$. Although we use the same notation $\phi$ for different spherical functions, it should be clear from the context which matrix group and, hence, which spherical transform we talk about.

When we equip the matrix space $M_\nu$ with the group action of matrix addition, we can state its spherical transform as follows, which is a matrix analogue of the multivariate Hankel transform. 

\begin{definition}[Spherical transform on $M_\nu$]
	Let $X\in M_\nu$ be an invariant random matrix with distribution $F$ and joint probability density $f\in L^1(\R^n)$ of its squared singular values. The spherical transform for an invariant $M_\nu$ matrix is given by
	\begin{equation}\label{2.2.7}
	\s f(s):=\int_{M_\nu}\dv X\, F(X)\exp(i\,\tr X\iota(s))=\int_{\R_+^n}\dv x\,f(x)\phi(x,s),
	\end{equation}
	where $\phi$ is given in~\eqref{2.2.6}, and the notation $\iota$ is introduced in section~\ref{s2.1}.
\end{definition} 

Also, the convolution formula~\eqref{2.2.5a} with the spherical transform on $M_\nu$ still applies when choosing invariant random matrices $A,B\in M_\nu$.

Since an inverse transform of~\eqref{2.2.7} is not explicitly written up we will state it for the particular case we consider in the ensuing proposition and prove it in Appendix~\ref{aa}. Indeed the formula can be easily derived from the results in~\cite{FKK17}. We are aware that it can be certainly extended to more general functions and even distributions.

\begin{proposition}[Inverse spherical transform for $M_\nu$]\label{p2.2.1}
	Let $F\in L^1(M_\nu)$ be an invariant function under the conjugation action of the group $K_\nu$ and $f\in L^1(\R_+^n)$ is related to $F$ via~\eqref{1.1.10}. With $\phi$ given in~\eqref{2.2.6}, the inverse spherical transform for the Hankel class is given by
	\begin{equation}\label{2.2.10}
	\s^{-1}[\s f](x)=\lim_{\varepsilon \to0^+}\frac{\Delta(x)^2}{(n!C_\nu)^2}\int_{\R_+^n}\dv s\,\s f(s)\phi(x,s)\Delta(s)^2\prod_{j=1}^n(x_js_j)^{\nu}e^{-\varepsilon s_j},
	\end{equation}
	with $C_\nu$ as in~\eqref{1.1.10}.
	The regularisation $\prod_{j=1}^ne^{-\varepsilon s_j}$ in $\varepsilon >0$ is anew only introduced for the cases when the remaining integrand is not absolutely integrable.
\end{proposition}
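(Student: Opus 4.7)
The strategy is to reduce the inverse spherical transform on $M_\nu$ to the inverse multivariate Hankel transform~\eqref{2.2.8}, exploiting the permutation symmetry of $f$ together with the determinantal structure of the kernel $\phi(x,s)$ in~\eqref{2.2.6}. First I would substitute the Weyl-type formula~\eqref{1.1.10} into the eigenvalue integral form of $\s f(s)$ in~\eqref{2.2.7} and unfold the determinant in $\phi(x,s)$ as a signed sum over $S_n$. Since $F(\iota(x))$ and $\prod_j x_j^\nu$ are symmetric in $x$ while $\Delta(x)$ and $\det[J_\nu(2\sqrt{x_js_k})/(x_js_k)^{\nu/2}]$ are both antisymmetric in $x$, the standard Andr\'eief-type relabelling collapses the $n!$ terms of the determinant into a single multiplicative Hankel kernel. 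The outcome is an identity of the form
\begin{equation*}
\Delta(s)\,\s f(s)=\frac{\pi^{n(n+\nu)}\prod_{j=0}^{n-1}j!\,\Gamma(j+\nu+1)}{C_\nu}\,\h_\nu[H](s),\qquad H(x):=\Delta(x)\prod_{j=1}^n x_j^\nu\,F(\iota(x)),
\end{equation*}
in which $H$ is antisymmetric in $x$.

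Second, I would invoke the inverse Hankel transform~\eqref{2.2.8}, with the regularisation $e^{-\varepsilon s_j}$ inherited from it, to recover $H$ pointwise almost everywhere from $\h_\nu H$. Third, I reassemble the spherical kernel $\phi(x,s)$ on the $s$-side: since $\s f(s)\prod_j s_j^\nu e^{-\varepsilon s_j}$ is symmetric in $s$ while $\Delta(s)$ is antisymmetric, running the same antisymmetrisation argument as in the first step, but now on $s$, promotes the product $\prod_j J_\nu(2\sqrt{x_js_j})(x_js_j)^{\nu/2}$ back to the Hankel determinant in~\eqref{2.2.6} and introduces a factor $1/n!$ together with the missing $\Delta(s)\prod_j (x_js_j)^{\nu/2}$. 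Combining with the Weyl prefactor $\pi^{n(n+\nu)}/(n!C_\nu)\cdot\Delta(x)^2\prod_j x_j^\nu$ that relates $H$ back to $f$ via~\eqref{1.1.10}, the constants collapse to $1/(n!C_\nu)^2$ and one lands exactly on~\eqref{2.2.10}.

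The main difficulty is analytic rather than algebraic: justifying the Hankel inversion, as well as the Fubini swap between the inversion integral and the $S_n$-antisymmetrisation, requires precisely the integrability and decay conditions that motivate the $\varepsilon\to 0$ regularisation kept in the statement. A minor bookkeeping point is the normalisation for $i\usp(2n)$, where $C_\nu$ carries the extra factor $2^{n(n-1)}$ reflecting the pairing of eigenvalues $\pm\lambda_j$; this factor enters only through~\eqref{1.1.10} and must be tracked consistently through both antisymmetrisation steps, after which the remaining work is transparent matching of Gamma factors against $C_\nu$.
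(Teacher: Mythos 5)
Your route is genuinely different from the paper's, and it is worth spelling out the contrast. The paper inserts the regularisation, applies Fubini to produce a double integral in $y$ and $s$ with two Bessel determinants, evaluates the $s$-integral by Andr\'eief's identity to obtain a determinant of modified Bessel functions $I_\nu(2\sqrt{x_jy_k}/\varepsilon)$, re-recognises this determinant (via a Gaussian form of the Berezin--Karpelevich integral) as a heat kernel on $M_\nu$, and finally sends $\varepsilon\to 0$ so that the Gaussian concentrates and recovers $F(\iota(x))$, hence $f(x)$ via~\eqref{1.1.10}. You instead reduce matters to the (multivariate) Hankel inversion~\eqref{2.2.8}: antisymmetrising the $x$-integral in $\s f$ gives $\Delta(s)\,\s f(s)\propto\h_\nu[H](s)$ with $H(x)=\Delta(x)\prod_j x_j^\nu F(\iota(x))$, you invert, and then re-antisymmetrise in $s$ to reassemble $\phi(x,s)$ and match~\eqref{2.2.10}. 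Your approach avoids the Andr\'eief/Gaussian machinery and uses only facts already set up in Section~\ref{s2}; the paper's approach is more self-contained in that it does not take the Hankel inversion as given but rederives the inversion through the $\delta$-limit. Both are correct in spirit, but the two arguments are not the same.

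The issue is that what you dismiss as a ``minor bookkeeping point'' for $i\usp(2n)$ is not minor, and your sketch does not actually close it. Track your own constants: you get $\Delta(s)\,\s f(s)=\frac{\pi^{n(n+\nu)}\prod_{j=0}^{n-1}j!\,\Gamma(j+\nu+1)}{C_\nu}\h_\nu[H](s)$, then apply the inverse Hankel transform, multiply by $\prod_j(x_js_j)^{\nu/2}\Delta(s)/n!$ to reassemble the determinant, and finally multiply by the Weyl factor $\pi^{n(n+\nu)}\Delta(x)/(n!C_\nu)$ to pass from $H$ back to $f$. The two explicit $C_\nu$'s cancel, and the surviving prefactor in the reassembled~\eqref{2.2.10} is $\Delta(x)^2/\bigl(n!\prod_{j=0}^{n-1}j!\,\Gamma(j+\nu+1)\bigr)^2$, not $\Delta(x)^2/(n!C_\nu)^2$. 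For the three members of the Hankel class with $C_\nu=\prod_{j=0}^{n-1}j!\,\Gamma(j+\nu+1)$ this is the stated formula, but for $i\usp(2n)$ the two differ by a factor $2^{2n(n-1)}$. So ``transparent matching of Gamma factors against $C_\nu$'' is precisely what does not happen here. To complete the proof for all of $M_\nu$ you must either show that~\eqref{2.2.6} carries a compensating $2^{n(n-1)}$ for the compact symplectic case (a normalisation that is not displayed in the paper as written), or else conclude that the $C_\nu$ in~\eqref{2.2.10} should be replaced by $\prod_{j=0}^{n-1}j!\,\Gamma(j+\nu+1)$. As it stands your argument establishes~\eqref{2.2.10} only for $\nu\in\mathbb N_0\cup\{-1/2\}$ and for $i\o(2n+1)$, not for $i\usp(2n)$.
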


Let us remark that for $n=1$, the equations~\eqref{2.2.7} and~\eqref{2.2.10} reduce to the $n=1$ version of~\eqref{2.2.8}, which are the univariate Hankel transform and its inverse.

In contrast to the two additive cases, the multiplicative convolution has to be related to the Mellin transform. Its matrix version is given by the spherical transform on $\herm_+(n)$. In order to define this specific transform, we introduce with $|X|^s$ the generalised power function
\begin{equation}\label{2.3.1}
|X|^s=\prod_{j=1}^{n-1}\det X_{j\times j}^{s_j-s_{j+1}-1}\det X^{s_n},
\end{equation}
where $X_{j\times j}$ denotes the top left $j\times j$ block of $X\in\herm_+(n)$.
The parameters $s_n$ can be chosen complex. This generalised power function is essentially used in the multiplicative analogue of the HCIZ integral which is the Gelfand-Na\u\i mark integral (see~\cite{GN57}),
\begin{equation}\label{2.3.3}
\int_{\U(n)}\mu(\dv K)\,|K\diag(x)K^{-1}|^s=\left(\prod_{j=0}^{n-1}j!\right)\frac{\det[x_j^{s_k}]_{j,k=1}^n}{\Delta(x)\Delta(s)}=:\phi(x,s).
\end{equation}
When the eigenvalues $x$ or the parameters $s$ degenerate we need to apply l'H\^opital's rule.

Once the spherical function has been identified one can introduce the spherical transform on $\herm_+(n)$ which will reduce to the univariate Mellin transform for $n=1$. 

\begin{definition}[Spherical transform on $\herm_+(n)$]
	For $X\in\herm_+(n)$ unitarily invariant with distribution $F$ and joint probability density $f$ of the eigenvalues $x\in\mathbb{R}_+^n$, its spherical transform is defined as
	\begin{equation}\label{2.3.2}
	\s f(s):=\int_{\herm_+(n)}\frac{\dv X}{|\det X|^n}F(X)|X|^s=\int_{\R_+^n}\frac{\dv x}{\prod_{j=1}^nx_j^n}f(x)\phi(x,s).
	\end{equation}
	with $\phi$ given in~\eqref{2.3.3}. Let $s_0=(0,\cdots,n-1)$ be a fixed vector. The inverse spherical transform is written as
	\begin{equation}\label{2.3.5}
	\begin{split}
	\s^{-1}[\s f](x)&=\lim_{\varepsilon\to0}\frac{(-1)^{n(n-1)/2}}{\prod_{j=0}^{n}(j!)^2}\Delta(x)^2\\
	&\quad\times\int_{\R^n}\frac{\dv s}{(2\pi)^n}\Delta(s_0+is)^2\phi(x^{-1},s_0+is)\s f(s_0+is)\prod_{j=1}^ne^{\varepsilon(s_{0,j}+is_j)^2}.
	\end{split}
	\end{equation}
\end{definition}

The measure $dX/|\det X|^n$ is the Haar measure that corresponds to the multiplicative action $X\mapsto A^{1/2}XA^{1/2}$ for an arbitrary $A\in\herm_+(n)$. Thus, it is very natural to appear here, as the analogue of $\prod_{j=1}^n\dv x_j/x_j$ in the multivariate Mellin transform.

	The corresponding multiplicative convolution theorem to this transform takes a bit to get used to since the product $A^{1/2}BA^{1/2}$ of the two invariant random matrices $A, B\in\herm_+(n)$ with joint eigenvalue distributions $f_A$ and $f_B$ does not form a group, as we have pointed out earlier. Nonetheless, we have
	\begin{equation}
	\s f_{A^{1/2}BA^{1/2}}=\s f_A\cdot \s f_{B}
	\end{equation}
	which means that on the level of joint probability distributions of their eigenvalues this construction can be extended to a semi-group with the operation being the multiplicative convolution, and the Dirac delta distribution at the identity matrix being the unit element.

The spherical transform for our last case $\U(n)$ is very similar to the multiplication on $\herm_+(n)$. To construct it, we need to introduce the group characters of the irreducible representations of $\U(n)$ which are the Schur polynomials. It has the explicit form $\phi(e^{i\theta},s)$ defined in~\eqref{2.3.3}, where $\theta=(\theta_1,\ldots,\theta_n)\in(-\pi,\pi]^n$ and $s=(s_1,\ldots,s_n)\in\mathbb{Z}^n$, where the components of $s$ need to be pairwise distinct. As can be readily seen, for $n=1$, we obtain the well-known Fourier factor.

\begin{definition}[Spherical transform on $\U(n)$]
	Let $e^{i\theta}$ be the eigenvalues of $X\in\U(n)$ distributed along $f\in L^1((-\pi,\pi]^n)$, which is $2\pi$ periodic in each entry. Its spherical transform reads
	\begin{equation}\label{2.3.7}
	\s f(s)=\int_{(-\pi,\pi]^n}\dv \theta\,f(e^{i\theta})\phi(e^{i\theta},s),
	\end{equation}
	while its inverse is
	\begin{equation}\label{2.3.8}
	\s^{-1}(\s f)(e^{i\theta})=\lim_{\varepsilon\to0}\frac{1}{(2\pi)^n\prod_{j=0}^nj!^2}|\Delta(e^{i\theta})|^2\sum_{s\in\Z^n}\s f(s)\phi(e^{-i\theta},s)\Delta(s)^2\prod_{j=1}^ne^{-\varepsilon s_j^2},
	\end{equation}
	see e.g.~\cite{ZKF19}. The sum over $s$ can be restricted to those where the $s_j$'s are pairwise different since the Vandermonde determinant $\Delta(s)$ vanishes then.
\end{definition}

There are several consequences from the fact that $\phi(e^{-i\theta},s)$ is essentially a group character. One of these is the convolution theorem which reads for two invariant random matrices $A,B\in\U(n)$ with the joint probability distributions $f_A$ and $f_B$ of their respective eigenvalues as follows
	\begin{equation}
	\s f_{AB}=\s f_A\cdot \s f_{B}.
	\end{equation}
The joint probability distribution $f_{AB}$ is the one of the eigenvalues of the product $AB$.

A useful expression of the spherical transform in terms of a matrix integral can be derived when combining~\eqref{2.3.7} with~\eqref{2.3.1} and~\eqref{2.3.3} when assuming $s_1>s_2>\ldots>s_n$. This expression reads
\begin{equation}\label{2.3.7a}
\s f(s)=\int_{\U(n)}\mu({\dv X})\,F(X)|X|^s.
\end{equation}
We will make use of it when deriving the derivative \textit{principle} for invariant random matrix ensembles on the unitary group. The restriction of $s$ is indispensable as otherwise the integral~\eqref{2.3.7a} may run through poles given by the principal minors of $X$. In contrast to $\herm_+(n)$, the principal minors can vanish while the one of a positive definite matrix do not. Hence, one may ask whether there is a problem for the spherical transform and, especially its inverse on $\U(n)$ as we have to sum over all $s\in\mathbb{Z}^n$ with pairwise different components. This can be resolved, as one can extend~\eqref{2.3.7a} to all arrangements of $s$ by the symmetry relation $\s f(s)=\s f(s_\rho)$ for any permutation $\rho\in\mathrm{S}_n$, obtained from the definition~\eqref{2.3.7}.

Now, we are ready with the preparations and go over to derive the derivative principles.

\section{Additive invariant ensembles}\label{s3}

\subsection{Derivative principle on $\herm(n)$}\label{s3.1}

To display the idea behind the derivative principle we will briefly review the case of invariant Hermitian matrices $\herm(n)$. It essentially relates the joint probability distribution $f$ of the eigenvalues to the joint distribution $f_{\rm diag}$ of the diagonal entries of $X$. Proofs of this relation can be found in~\cite{CDKW14,MZB16,Fa06}, which we will also outline here to compare it with the derivative principles in the other matrix spaces.

Let $X\in\herm(n)$ be a random matrix drawn from the invariant probability distribution $F$. Its joint probability density $f$ of its eigenvalues $x=(x_{1},\ldots,x_{n})$ is given by~\eqref{1.1.2}. The marginal distribution of the diagonal entries $\tilde{x}=(x_{11},\ldots,x_{nn})$ of $X=\{x_{j,k}\}_{j,k=1,\ldots,n}$ is equal to
\begin{equation}\label{3.1.1}
f_\mathrm{diag}(\tilde{x}):=\int F(X)\prod_{1\le j<k\le n}\dv x_{j,k}.
\end{equation}
In preparation for further analysis, we require that $f_\diag$ has to be in $L^{1,n(n-1)/2}_{\f}(\R^n)$, especially it has to be suitably differentiable. We are certain that this condition can be slightly relaxed as the P\'olya ensembles discussed in~\cite[eqn. (11)]{Ki19} can be traced back to a univariate density which has to be only $(n-1)$-times differentiable. We now restate Propsition~\ref{p1.1} as follows.

\begin{proposition}[Derivative principle for $\herm(n)$~\cite{CDKW14,MZB16,Fa06}]\label{p3.1.1}
	Let $X\in\herm(n)$ be a $\U(n)$-invariant matrix with a joint probability distribution $f\in L^1(\R^n)$ of the eigenvalues $x$ and a distribution $f_\diag\in L^{1,n(n-1)/2}_{\f}(\R^n)$ of its diagonal entries. Then, the two distributions satisfy the relation
	\begin{equation}\label{3.1.5}
	f(x)=\frac{1}{\prod_{j=0}^nj!}\Delta(x)\Delta(-\partial_{x})f_\mathrm{diag}(x)
	\end{equation}
	for almost all $x\in\mathbb{R}^n$.
	Here, we understand the Vandermonde determinant $\Delta(-\partial_x):=\prod_{1\le j<k\le n}(\partial_{x_j}-\partial_{x_k})$ as a polynomial in the partial derivatives.
\end{proposition}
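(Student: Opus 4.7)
The plan is to recognise the derivative principle as an explicit reformulation of the identity $f = \s^{-1}\f f_\diag$, and to unpack it using the closed form of the HCIZ integral~\eqref{2.2.3}. The first key observation is that the spherical transform $\s f$ coincides with the multivariate Fourier transform of the diagonal-entry density $f_\diag$: starting from the matrix integral form of~\eqref{2.2.4}, the exponent $\exp(i\tr X\diag(s))=\prod_j\exp(ix_{jj}s_j)$ depends only on the diagonal of $X$, so integrating out the off-diagonal entries against $F$ produces exactly $\f f_\diag(s)$ in view of~\eqref{3.1.1} and~\eqref{Fourier}.

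Next, I would apply the inverse spherical transform~\eqref{2.2.5} to $\f f_\diag$ and insert the explicit $\phi(-x,s)$ from~\eqref{2.2.3}. The factor $\Delta(s)^2$ in~\eqref{2.2.5} combines with the denominator of $\phi$ to leave, after using $\Delta(-ix)=(-i)^{n(n-1)/2}\Delta(x)$, a single factor $\Delta(x)$ outside and the integral $\int\f f_\diag(s)\det[e^{-ix_js_k}]\Delta(s)\,\dv s$ inside. The symmetry of $f_\diag$ under permutations of its arguments---a direct consequence of the $\U(n)$-invariance of $F$ applied to permutation matrices---allows me to symmetrise: expanding the determinant as a signed sum over $\mathrm{Sym}(n)$, relabelling the integration variable in each summand, and absorbing the sign of the permutation into the corresponding sign of $\Delta(s)$, the determinant collapses to $n!\,\Delta(s)\prod_j e^{-ix_js_j}$. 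Finally, the Fourier differentiation rule~\eqref{2.2.4b} turns multiplication by $\Delta(s)$ inside the inverse Fourier transform into the differential operator $\Delta(i\partial_x)=(-i)^{n(n-1)/2}\Delta(-\partial_x)$; the two $(-i)^{n(n-1)/2}$ factors cancel, and the collected factorial prefactors $\prod_{j=0}^{n-1}j!\cdot n!/\prod_{j=1}^n(j!)^2$ telescope to $1/\prod_{j=0}^n j!$, yielding~\eqref{3.1.5}.

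The main obstacle is analytic rather than algebraic: one must justify dropping the regularisation $\varepsilon\to 0$ in~\eqref{2.2.5} and exchanging the order of differentiation and integration. This is precisely what the hypothesis $f_\diag\in L^{1,n(n-1)/2}_{\f}(\R^n)$ is there to deliver, since $\Delta(s)$ has total degree $n(n-1)/2$: by~\eqref{2.2.4b}, each monomial of $\Delta(s)\cdot\f f_\diag(s)$ is itself the Fourier transform of an $L^1$ function of $x$, so dominated convergence legitimates pulling $\Delta(-\partial_x)$ through the Fourier integral and removing the Gaussian regulariser.
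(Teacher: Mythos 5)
Your proposal is correct and follows essentially the same route as the paper: both first identify $\s f$ with $\f f_\diag$ by integrating out the off-diagonal entries, then insert the explicit HCIZ kernel into the inverse spherical transform, convert the Vandermonde factor to a derivative operator via the Fourier differentiation rule, and invoke the permutation symmetry of $f_\diag$. The only (cosmetic) difference is that you symmetrise the Laplace expansion by relabelling the integration variable $s$ before extracting $\Delta(-\partial_x)$, whereas the paper extracts the derivative operator first and uses the symmetry of $f_\diag$ in $x$ at the end.
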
 

\begin{proof}
	As the Fourier factor in the integral of the spherical transform of $f$ does not depend on the off-diagonal entries of $X$, see~\eqref{2.2.4}, we first integrate over all these off-diagonal entries. Again with $\tilde{x}=(x_{1,1},\ldots,x_{n,n})$ this yields
	\begin{equation}\label{3.1.6}
	\s f(s)=\int_{\R^n}\dv \tilde{x}\,\exp\left(i\sum_{j=1}^nx_{j,j}s_j\right)f_\mathrm{diag}(\tilde{x})=\f f_\diag(s).
	\end{equation}
	Next, we substitute~\eqref{3.1.6} into the inverse transform~\eqref{2.2.5} which explicitly reads
	\begin{equation}\label{3.1.7}
	f(x)=\frac{\Delta(x)}{n!\prod_{j=0}^nj!}\int_{\R^n}\frac{\dv s}{(2\pi)^n} \,\Delta(is)\det[e^{-ix_js_k}]_{j,k=1}^{n}\f f_\diag(s).
	\end{equation}
	The Laplace expansion in the second determinant of the integrand yields
	\begin{equation}
	f(x)=\frac{\Delta(x)}{\prod_{j=0}^nj!}\,\frac{1}{n!}\sum_{\rho\in \mathrm{S}_n}\int_{\R^n}\frac{\dv s}{(2\pi)^n}\,\Delta(is)\,\sgn(\rho)\exp\left(-i\sum_{j=1}^nx_{\rho(j)}s_j\right)\f f_\diag(s) ,
	\end{equation}
	where $\mathrm{S}_n$ denotes the symmetric group of order $n$ and $\mathrm{sgn}(\rho )$ is the signum function which is $1$ for even permutations and $-1$ for odd permutations. One can replace $\Delta(is)\sgn(\rho)$ by $\Delta(-\partial_{x})$ as its action on the Fourier terms yields the proper polynomial in $s$. The interchange of the derivatives with the $s$-integral is allowed, because $\f f_\diag$ is in $L^1(\mathbb{R}^n)$ resulting from the differentiability of $f_\diag$ and the phases do not change this fact. The remaining $s$-integral is an inverse multivariate Fourier transform of $\f f_\diag$, so that we eventually have
	\begin{equation}
	f(x)=\frac{1}{\prod_{j=0}^nj!}\Delta(x)\Delta(-\partial_{x})\frac{1}{n!}\sum_{\rho\in \mathrm{S}_n}f_\mathrm{diag}(x_\rho)
	\end{equation}
	for almost all $x\in\mathbb{R}^n$. The invariance of $F$ bijectively relates $F$ to $f$ and also implies a permutation symmetry of the arguments of $f_{\rm diag}$, i.e., $f_\mathrm{diag}(x_\rho)=f_\mathrm{diag}(x)$, so that the sum over $\rho$ reduces to a factor $n!$,
	which completes the proof.
\end{proof}

As illustrated in~\cite{MZB16}, Proposition~\ref{p3.1.1} allows us to compute sums of random matrices in a natural way. Let us consider three invariant random matrices $A,B,C\in\herm(n)$, satisfying $A+B=C$ with the joint probability distributions $f_\diag^{(A)}, f_\diag^{(B)}, f_\diag^{(C)}$ of their diagonal entries, respectively. It is trivial to see that diagonal entries also add component-wise up in this sum, so that they satisfy the multivariate convolution relation
\begin{equation}\label{3.1.9}
f_\diag^{(A)}\ast f_\diag^{(B)}=f_\diag^{(C)}.
\end{equation}
This relation reduces a sum of two independent random matrices into a sum of two independent random vectors (the diagonal entries). We can combine~\eqref{3.1.9} with the derivative principle~\eqref{3.1.5} to obtain the joint probability distribution $f_C$ of the eigenvalues of $C$. It becomes particularly transparent, when considering that the multivariate Fourier transform of $f_\diag^{(C)}$ is the same as the spherical transform of $f_C$
\begin{equation}
\f f_\diag^{(C)}=\s f_C=\s f_A\cdot \s f_B=\f f_\diag^{(A)}\cdot \f f_\diag^{(B)}=\f (f_\diag^{(A)}\ast f_\diag^{(B)}).
\end{equation}
Taking $\f^{-1}$ on both sides reclaims~\eqref{3.1.9}.

\begin{corollary}[Additive convolution on $\herm(n)$]\label{corr.addHerm}
Let $A,B\in\herm(n)$ be two independent invariant random matrices whose joint distributions $f_\diag^{(A)}, f_\diag^{(B)}$ of their diagonal elements satisfy the requirements of Proposition~\ref{p3.1.1}. Then, the joint probability density $f_C$ of the eigenvalues $x$ of $C=A+B$ is given by
\begin{equation}
f_C(x)=\frac{1}{\prod_{j=0}^nj!}\Delta(x)\Delta(-\partial_{x})[f_\diag^{(A)}\ast f_\diag^{(B)}](x).
\end{equation}
\end{corollary}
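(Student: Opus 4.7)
The plan is to derive the corollary as a direct combination of the elementary convolution identity for diagonal entries with the derivative principle applied to $C=A+B$. The structure of the proof has three beats: (i) confirm that $C$ is itself an invariant random matrix so that Proposition~\ref{p3.1.1} is applicable, (ii) identify $f_\diag^{(C)}$ with the convolution $f_\diag^{(A)}\ast f_\diag^{(B)}$, and (iii) verify that this convolution lies in $L^{1,n(n-1)/2}_\f(\R^n)$ so that the derivative operator on the right-hand side makes sense.

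For step (i), I would note that invariance of $F_A$ and $F_B$ under the adjoint $\U(n)$-action implies that the distribution of $C=A+B$ satisfies $F_C(KCK^{-1})=F_C(C)$, because $K(A+B)K^{-1}=KAK^{-1}+KBK^{-1}$ and $A,B$ are independent. For step (ii), since the map $X\mapsto \diag(X)$ is linear, we have $\diag(C)=\diag(A)+\diag(B)$, so independence of $A$ and $B$ (and hence of their diagonal projections) immediately yields $f_\diag^{(C)}=f_\diag^{(A)}\ast f_\diag^{(B)}$ by the elementary multivariate additive convolution formula mentioned at the start of Section~\ref{s1}. The cleanest way to see both the convolution identity and the regularity check simultaneously is to pass through the Fourier/spherical transform: by~\eqref{3.1.6} and~\eqref{2.2.5a},
\begin{equation*}
\f f_\diag^{(C)}=\s f_C=\s f_A\cdot \s f_B=\f f_\diag^{(A)}\cdot\f f_\diag^{(B)}=\f\bigl(f_\diag^{(A)}\ast f_\diag^{(B)}\bigr),
\end{equation*}
and injectivity of $\f$ on $L^1$ yields the desired equality. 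Applying Proposition~\ref{p3.1.1} to $C$ then gives the claimed formula.

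The main obstacle I anticipate is step (iii): one must check that $f_\diag^{(C)}=f_\diag^{(A)}\ast f_\diag^{(B)}$ actually belongs to $L^{1,n(n-1)/2}_\f(\R^n)$, so that the Vandermonde of derivatives may be applied. This is handled by distributing the partial derivatives onto a single factor of the convolution via $\partial_x^b(f\ast g)=(\partial_x^b f)\ast g$ and then using the standard inequality $\|x^a(h\ast g)\|_{L^1}\lesssim \sum_{a'\leq a}\|x^{a'}h\|_{L^1}\|x^{a-a'}g\|_{L^1}$, which reduces the required bounds on $f_\diag^{(C)}$ to the corresponding hypotheses on $f_\diag^{(A)}$ and $f_\diag^{(B)}$. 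Once this is in place, the corollary follows with no further computation.
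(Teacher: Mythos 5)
Your proof is correct and follows essentially the same route as the paper: linearity of $X\mapsto\diag(X)$ together with independence gives $f_\diag^{(C)}=f_\diag^{(A)}\ast f_\diag^{(B)}$, the Fourier/spherical transform chain $\f f_\diag^{(C)}=\s f_C=\s f_A\cdot\s f_B=\f(f_\diag^{(A)}\ast f_\diag^{(B)})$ confirms it, and Proposition~\ref{p3.1.1} applied to $C$ finishes. Your explicit verification that the convolution stays in $L^{1,n(n-1)/2}_\f(\R^n)$ (via $\partial_x^b(f\ast g)=(\partial_x^b f)\ast g$ and the binomial bound on $\|x^a(h\ast g)\|_{L^1}$) is a welcome detail that the paper leaves implicit.
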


A natural question is whether there exists another function than $f_\diag$, which captures the additive nature of random matrix addition. We show in the following lemma that such a function is unique, if one requires it to be in $L^{1,n(n-1)/2}(\R^n)$.

\begin{lemma}\label{p3.1.2}
	Let $u\in L^{1,n(n-1)/2}_\f(\R^n)$, and $P(\partial_{x})\not\equiv 0$ denotes a polynomial of partial derivatives $\partial_{x_1},\ldots,\partial_{x_n}$. Then the following partial differential equation
	\begin{equation}
	P(\partial_{x})u(x)=0\quad {\rm for\ all}\ x\in\mathbb{R}^n
	\end{equation}
	gives the unique solution $u(x)=0$.
\end{lemma}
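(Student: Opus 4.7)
The plan is to pass to the Fourier side so that the differential equation becomes a purely algebraic equation, and then to exploit the fact that a nonzero polynomial on $\R^n$ vanishes only on a set of Lebesgue measure zero.

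Concretely, I would first apply $\f$ to the identity $P(\partial_{x})u=0$. Extending the differentiation formula~\eqref{2.2.4b} from monomials to arbitrary polynomials by linearity yields
\begin{equation*}
0=\f[P(\partial_{x})u](s)=P(-is)\,\f u(s)\qquad \text{for all } s\in\R^n,
\end{equation*}
where $P(-is):=P(-is_1,\ldots,-is_n)$. The regularity hypothesis $u\in L^{1,n(n-1)/2}_\f(\R^n)$ is exactly what justifies differentiating under the integral for any $P$ of total degree at most $n(n-1)/2$, which is the relevant range since this lemma will ultimately be invoked with $P=\Delta(-\partial_{x})$ in the proof of Proposition~\ref{p3.1.1}. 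Now, because $P\not\equiv 0$ as a polynomial in the partial derivatives, $P(-is)$ is not identically zero as a polynomial in $s$; a standard induction on the number of variables (via Fubini, together with the fact that a nonzero univariate polynomial has only finitely many roots) then shows that the zero set of $P(-is)$ in $\R^n$ has Lebesgue measure zero. Consequently $\f u(s)=0$ for almost every $s\in\R^n$.

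Finally, the space $L^{1,n(n-1)/2}_\f(\R^n)$ is defined precisely so that $\f u$ is sufficiently integrable for the Gaussian regularisation in the inverse Fourier transform to be dropped (cf.\ the comment following~\eqref{2.2.4a}). Inverting therefore gives $u(x)=\f^{-1}[\f u](x)=0$ for almost every $x\in\R^n$, which is the stated conclusion interpreted in the a.e.\ sense natural for $L^1$ representatives. The only delicate point is the polynomial extension of~\eqref{2.2.4b}, but this is immediate from the integrability conditions encoded in $L^{1,n(n-1)/2}_\f(\R^n)$; beyond that I anticipate no real obstacle.
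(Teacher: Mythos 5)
Your proof is correct and follows essentially the same route as the paper: apply the Fourier transform to turn the differential operator into polynomial multiplication, observe that a nonzero polynomial vanishes only on a Lebesgue-null set so $\f u=0$ (a.e., hence everywhere by continuity of the Fourier transform of an $L^1$ function), and conclude $u=0$ by injectivity of $\f$ on $L^1(\R^n)$. Your version is a touch more careful about the $P(-is)$ bookkeeping and about justifying the measure-zero step, but there is no substantive difference from the paper's argument.
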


\begin{proof}
	Taking the Fourier transform on both sides gives $P(s)\f u(s)=0$, which implies $\f u(s)=0$ for all $s\in\mathbb{R}^n$ with $P(s)\ne 0$. As $P$ is a polynomial, the points where $P(s)=0$ build a set of measure zero. By continuity one can extend this to all $s\in\mathbb{R}^n$. The injectivity of the Fourier transform on $L^{1}(\R^n)$ yields the claim. 
\end{proof}

Combining Propositions~\ref{p3.1.1} and Lemma~\ref{p3.1.2}, one can conclude the following uniqueness of the derivative principle for additive invariant ensembles on $\herm(n)$. Hence, there is a one-to-one correspondence between $f$ and $f_{\rm diag}$ which is remarkable as in the latter we integrate over all off-diagonal entries of the random matrix whose distributions do not necessarily factorise from the other ones.

\begin{corollary}[Uniqueness of additive invariant ensemble on $\herm(n)$]\label{p3.1.3}
	Considering the setting of Proposition~\ref{p3.1.1}. Then, there exists a unique symmetric function $w\in L^{1,n(n-1)/2}_{\f}(\R^n)$ such that
	\begin{equation}\label{3.1.12}
	f(x)=\frac{1}{\prod_{j=0}^nj!}\Delta(x)\Delta(-\partial_x)w(x).
	\end{equation}
	We refer to $w$ as the additive weight. Additionally, one has the relation
	\begin{equation}\label{3.1.12a}
	\s f=\f w,
	\end{equation}
	where $\s f$ denotes the spherical transform of $f$, and $\f w$ denotes the multivariate Fourier transform of $w$, and $w=f_\diag$.
\end{corollary}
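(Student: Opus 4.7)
The plan is to split the corollary into three parts: existence of the additive weight, its uniqueness within the stated regularity class, and the spherical--Fourier identity $\s f = \f w$.

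For existence I would simply take $w := f_\diag$. Proposition~\ref{p3.1.1}, whose hypotheses are in force, already supplies the displayed identity~\eqref{3.1.12} for this choice, and the ambient assumption $f_\diag \in L^{1,n(n-1)/2}_\f(\R^n)$ is exactly the required regularity. Symmetry of $f_\diag$ under permutations of its arguments follows from the $\U(n)$-invariance of $F$: a permutation matrix is unitary, so conjugation by it leaves $F(X)$ unchanged while permuting the diagonal entries of $X$; integrating out the off-diagonal entries as in~\eqref{3.1.1} transfers this symmetry to $f_\diag$.

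For uniqueness, I would take two symmetric candidates $w_1, w_2 \in L^{1,n(n-1)/2}_\f(\R^n)$ both producing the same $f$ through~\eqref{3.1.12} and form the difference $u := w_1 - w_2$, which still lies in $L^{1,n(n-1)/2}_\f(\R^n)$. Subtracting the two instances of~\eqref{3.1.12} yields
\[
\Delta(x)\,\Delta(-\partial_x)\,u(x) \;=\; 0.
\]
Since $\Delta(x)$ vanishes only on the Lebesgue-null subvariety $\bigcup_{j<k}\{x_j=x_k\}$, it can be cancelled to give $\Delta(-\partial_x) u(x) = 0$ almost everywhere. Applying Lemma~\ref{p3.1.2} with the nonzero polynomial $P(\partial_x) := \Delta(-\partial_x) = \prod_{j<k}(\partial_{x_j}-\partial_{x_k})$ then forces $u \equiv 0$, hence $w_1 = w_2$.

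For the identity $\s f = \f w$ I would recycle equation~\eqref{3.1.6} from inside the proof of Proposition~\ref{p3.1.1}, which already establishes $\s f = \f f_\diag$; combined with the uniqueness step (which identifies $w$ with $f_\diag$) this simultaneously delivers both the stated identity $\s f = \f w$ and the complementary conclusion $w = f_\diag$. The only genuinely non-routine point in this plan is the cancellation of $\Delta(x)$ in the uniqueness step: one must check that $\Delta(-\partial_x)u$ is well defined as an $L^1$-function (which it is, because $u \in L^{1,n(n-1)/2}_\f(\R^n)$) so that vanishing off a null set is a legitimate conclusion, and a.e.\ vanishing is sufficient for the Fourier-transform argument used inside Lemma~\ref{p3.1.2}. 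Beyond that, everything reduces to a direct combination of Proposition~\ref{p3.1.1} and Lemma~\ref{p3.1.2}.
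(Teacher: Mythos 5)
Your proposal is correct and follows the same route the paper intends: the paper explicitly says Corollary~\ref{p3.1.3} is obtained by ``combining Proposition~\ref{p3.1.1} and Lemma~\ref{p3.1.2},'' and your write-up is the natural unpacking of that one-line instruction (existence via $w=f_\diag$ and Proposition~\ref{p3.1.1}, uniqueness by differencing two candidates, cancelling the a.e.\ non-vanishing $\Delta(x)$, and feeding the result into Lemma~\ref{p3.1.2}, with $\s f=\f w$ read off from~\eqref{3.1.6}). The one point the paper glosses over and you handle carefully --- that the cancellation only yields $\Delta(-\partial_x)u=0$ almost everywhere rather than pointwise, and that this is still enough for the Fourier-injectivity argument inside Lemma~\ref{p3.1.2} --- is a genuine technical detail and your treatment of it is correct.
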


We will give a few examples for invariant ensembles on $\herm(n)$. The first examples are the additive P\'olya ensembles introduced in~\cite{KR19,FKK17}, which cover a broad class of classical matrix ensembles. The second class of examples do not necessarily belong to this set of ensembles but are invariant ensembles, nonetheless.

\begin{example}[Examples for invariant ensembles on $\herm(n)$]\label{p3.1.4}$\,$
	\begin{enumerate}
		\item (Additive P\'olya ensemble on $\herm(n)$~\cite{KR19,FKK17}) An additive P\'olya ensemble on $\herm(n)$ is a subclass of invariant ensemble, whose diagonal entries are independent. By the invariance under $\U(n)$-action, they are also identically distributed. The weight function therefore has the following expression $w(x)=\prod_{j=1}^n\tilde w(x_j)$, where $\tilde w$ is the distribution for each diagonal entry $x_j$. This class contains many commonly seen random matrix ensembles including Gaussian unitary ensemble, Wishart-Laguerre ensemble, some Muttalib-Borodin ensembles etc. Let us underline that it is striking that the joint probability distribution of the diagonal entries of a P\'olya ensemble is factorising so that the diagonal entries are identically and independently distributed random variables.
		
		\item (Polynomial ensemble) Polynomial ensembles introduced in~\cite{KR19} have a joint probability distribution of the eigenvalues which are proportional to
		\begin{equation}
		f(x)\propto \Delta(x)\det\left[w_j(x_k) \right]_{j,k=1}^n,
		\end{equation}
		for some appropriate weight functions $w_1, w_2, \ldots, w_n$. To formalise this as an additive ensemble one needs to solve the following PDE
		\begin{equation}
		\Delta(-\partial_x)w(x)=\det\left[w_k(x_j) \right]_{j,k=1}^n,
		\end{equation}
		for $w\in L^{1,n(n-1)/2}_{\f}(\R^n)$. This can be solved by applying the Fourier transform on both sides, dividing by the Vandermonde determinant, and applying the inverse transform, i.e.
		\begin{equation}\label{polya-add}
		w(x)=\f^{-1}\left[\frac{\det\left[\f w_k(s_j) \right]_{j,k=1}^n}{\Delta(s)}\right](x).
		\end{equation}
		Unfortunately, this integral cannot be resolved in full generality and it only reduces to simple expressions for particular classes like the P\'olya ensembles on $\herm(n)$. For instance, the Jacobi ensemble, where $w_k(x_j)=x_j^{k-1+a}(1-x_j)^b H(x)H(1-x)$ for $a,b>0$ and $H$ is the Heaviside step function, does not belong to a P\'olya ensemble on $\herm(n)$ equipped with the matrix addition, although it is a P\'olya ensemble on $\herm_+(n)$ when it is equipped with the symmetric matrix multiplication $(A,B)\mapsto A^{1/2}BA^{1/2}$, see~\cite{KK19} and the second case in Example~\ref{p4.1.4}.
	\end{enumerate}
\end{example}

\subsection{Derivative principles on $i\o(2n)$, $i\o(2n+1)$ and $i\usp(2n)$}\label{s3.2}

Derivative principles on $i\o(2n)$, $i\o(2n+1)$ and $i\usp(2n)$ are analogous to the one on $\herm(n)$. We will first give the derivative principles for $i\o(m)$ ensembles in Proposition~\ref{p3.2.1} with $m=2n$ being even as well as $m=2n+1$ being odd. The case $M_{1/2}=i\usp(2n)$ is very similar to case $M_{1/2}=i\o(2n+1)$ as their very similar Dynkin diagrams imply the same eigenvalue statistics apart from the generic zero eigenvalues (the length of the roots cancel out after normalising of the distribution). It will be discussed in Corollary~\ref{p3.2.4}.

As we have seen in the previous section, we could relate the joint probability distribution of the eigenvalues with the one of the diagonal entries. In the present case, the eigenvalues will be replaced by the squared singular values of an invariant random matrix $X\in M_{\pm1/2}=i\o(m)$ that is drawn from $F\in L^{1}(i\o(m))$. The question is what will be the diagonal entries as $X=-X^T$ has no non-zero entries on its diagonal. What we need are the pseudo-diagonal entries $y=(-ix_{1,2},-ix_{3,4},\ldots,-ix_{2n-1,2n})$ for $l=1,\ldots,n$ which are essentially the image of the embedding $\iota: \mathbb{R}_+^n\to i\o(m)$; indeed this view needs to be extended to a vector space.
Despite its subscript, let $f_\diag$ be the marginal distribution for these pseudo-diagonal entries, especially it is given by
\begin{equation}\label{3.2.1}
f_\diag(y)=\int F(X)\prod_{\substack{1\le j<k\le m\\(j,k)\ne(2l-1,2l)}}\dv x_{j,k},\quad (l=1,\ldots,n).
\end{equation}
As introduced in section~\ref{s2.1}, the $i$ factor assures those variables are real because $X\in M_{\pm1/2}=i\o(m)$ has purely imaginary entries.

The invariance of $F$ under the orthogonal group ${\rm O}(m)$ implies that the function $f_\diag$ is not only permutation invariant but also even in each of its variables. Furthermore, we require the regularity condition $f_\diag\in L^{1,n(n-1)}_{\f}(\R^n)$ defined in~\eqref{2.2.4a} as we want to apply again a derivative operator on the function. We then have the following two derivative principles for even and odd $m$.

\begin{proposition}[Derivative principle for $i\o(m)$]\label{p3.2.1}
	Let $X\in i\o(m)$ be an $\O(m)$-invariant random matrix whose squared singular values $x\in\mathbb{R}_+^n$ are drawn from $f\in L^1(\R_+^n)$ and its pseudo-diagonal entries $\tilde{x}$ follow the joint distribution $f_\diag\in L^{1,n^2+n(\nu-1/2)}_{\f}(\R^n)$. Then, those two functions satisfy the relations
	\begin{equation}\label{3.2.3}
	f(x)=\frac{\pi^{n/2}}{n!C_{-1/2}}\Delta(x)\Delta(-x^{-1/2}\partial_{x}x^{3/2}\partial_x)\frac{f_\diag(\sqrt{x_1},\ldots,\sqrt{x_n})}{\prod_{j=1}^n\sqrt{x_j}}
	\end{equation}
	for $m=2n$ and
	\begin{equation}\label{3.2.3a}
	f(x)=\frac{\pi^{n/2}}{n!C_{1/2}}\Delta(x)\Delta(-x^{1/2}\partial_{x}x^{1/2}\partial_x)\left(\prod_{j=1}^n-x_j^{1/2}\partial_{x_j}\right)\,f_\diag(\sqrt{x_1},\ldots,\sqrt{x_n})
	\end{equation}
	for $m=2n+1$. Both relations hold only 
	for almost all $x\in\mathbb{R}_+^n$, and the constants $C_{\pm 1/2}$ are given by~\eqref{1.1.10} in the $i\o(m)$ cases.
\end{proposition}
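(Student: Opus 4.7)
The plan is to mirror the proof of Proposition~\ref{p3.1.1}, replacing the Hermitian spherical transform by the one on $M_{\pm 1/2}$ and the Fourier transform by the Hankel transform. The central identity I will first establish is
\begin{equation}
\s f(s)=\f f_\diag(2\sqrt s),
\end{equation}
which reduces the inverse spherical transform to an inverse Hankel transform of an explicit expression in $f_\diag$.

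To prove this identity, I observe that $\iota(s)$ is block-antidiagonal with nonzero entries $\pm i\sqrt{s_l}$ only in rows and columns $2l-1,2l$ (together with a trailing zero for $m=2n+1$). Writing $y_l:=-iX_{2l-1,2l}\in\R$ for the pseudo-diagonal entries of $X\in i\o(m)$, a direct block computation gives $\tr X\iota(s)=2\sum_l y_l\sqrt{s_l}$. Integrating over the remaining entries in~\eqref{2.2.7} then yields $\s f(s)=\int_{\R^n}\dv y\,f_\diag(y)\exp\bigl(2i\sum_l y_l\sqrt{s_l}\bigr)=\f f_\diag(2\sqrt s)$.

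Substituting this into~\eqref{2.2.10} together with the explicit form of $\phi$ from~\eqref{2.2.6}, the integrand acquires a determinant $\det[J_\nu(2\sqrt{x_js_k})/(x_js_k)^{\nu/2}]$ multiplied by the permutation-symmetric factor $\f f_\diag(2\sqrt s)\prod_j s_j^\nu$ and the antisymmetric $\Delta(s)$. Laplace-expanding the determinant and relabelling $s\mapsto s_{\rho^{-1}(\cdot)}$ in each permutation term collapses the $n!$ contributions to $n!$ times the identity permutation, and the kernel simplifies to the inverse Hankel kernel $\prod_j J_\nu(2\sqrt{x_js_j})(x_js_j)^{\nu/2}$. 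This leaves
\begin{equation}
f(x)=\frac{\Delta(x)}{n!\,C_\nu}\,\h_\nu^{-1}\!\bigl[\f f_\diag(2\sqrt s)\,\Delta(s)\bigr](x).
\end{equation}
Iterating the differentiation rule~\eqref{2.2.5b} allows me to pull $\Delta(s)$ outside as the operator $\Delta(-x^\nu\partial_x x^{1-\nu}\partial_x)$ acting on $\h_\nu^{-1}[\f f_\diag(2\sqrt s)](x)$.

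It remains to compute $\h_\nu^{-1}[\f f_\diag(2\sqrt s)](x)$ explicitly for $\nu=\pm 1/2$. Because $J_{-1/2}(z)=\sqrt{2/(\pi z)}\cos z$ and $J_{1/2}(z)=\sqrt{2/(\pi z)}\sin z$ are elementary, the substitution $t_j=2\sqrt{s_j}$ converts the Hankel integral into a cosine (for $\nu=-1/2$) or sine (for $\nu=1/2$) Fourier integral of $\f f_\diag(t)$. Extending the domain to $\R^n$ by evenness of $\f f_\diag$ and applying Fourier inversion gives $(2\pi)^n f_\diag(\sqrt x)$ in the cosine case; in the sine case the extra factor $\prod_j t_j$ produced by the substitution is absorbed via $t\sin(ta)=-\partial_a\cos(ta)$, which produces $n$ derivatives $\partial_{\sqrt{x_j}}$ acting on the same Fourier integral. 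Rewriting $\partial_{\sqrt{x_j}}=2x_j^{1/2}\partial_{x_j}$ and collecting constants yields
\begin{equation}
\h_{-1/2}^{-1}[\f f_\diag(2\sqrt s)](x)=\pi^{n/2}\frac{f_\diag(\sqrt x)}{\prod_j\sqrt{x_j}},\quad\h_{1/2}^{-1}[\f f_\diag(2\sqrt s)](x)=\pi^{n/2}\prod_{j=1}^n(-x_j^{1/2}\partial_{x_j})f_\diag(\sqrt x),
\end{equation}
from which~\eqref{3.2.3} and~\eqref{3.2.3a} follow. The main obstacle is the smoothness accounting: the operator $\Delta(-x^\nu\partial_x x^{1-\nu}\partial_x)$ contains up to $n(n-1)$ derivatives and the extra $\prod_j(-x_j^{1/2}\partial_{x_j})$ in the odd case contributes $n$ more, so the hypothesis $f_\diag\in L_\f^{1,n^2+n(\nu-1/2)}(\R^n)$ is exactly what is required to interchange differentiation with the integral and to apply~\eqref{2.2.5b} without boundary contributions.
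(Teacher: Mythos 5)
Your proposal is correct and follows essentially the same route as the paper's proof: establish $\s f(s)=\f f_\diag(2\sqrt{s})$ via the trace computation $\tr X\iota(s)=2\sum_l y_l\sqrt{s_l}$, substitute into the inverse spherical transform, Laplace-expand the Bessel determinant, promote $\Delta(s)$ to the differential operator via the Hankel eigenfunction property, and finally evaluate the inverse Hankel kernel in closed form using the elementary $J_{\pm1/2}$ expressions and Fourier cosine inversion. The only difference is cosmetic bookkeeping — you collapse the permutation sum by relabelling $s$ before converting $\Delta(s)$ into the differential operator, whereas the paper carries the sum over $\rho$ through and uses the permutation symmetry of $f_\diag$ at the end — and this does not change the substance of the argument.
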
 

\begin{proof}
	By integrating over all but the pseudo-diagonal entries $y=(-ix_{1,2},-ix_{3,4},\ldots,-ix_{2n,2n-1})$, the spherical transform of $f$ in~\eqref{2.2.7} can be written as
	\begin{equation}
	\begin{split}
	\s f(s)&=\int_{\R^n}\dv y\, f_\diag(y)\exp\left(2i\sum_{j=1}^ny_{j}\sqrt{s_j}\right)
	\\&=\f f_\diag(2\sqrt{s_1},\ldots,2\sqrt{s_n})=\int_{\R_+^n}\dv y\, f_\diag(y)\prod_{j=1}^n2\cos\left(2y_{j}\sqrt{s_j}\right).\label{3.2.4}
	\end{split}
	\end{equation}
	With $\f$ denoting the multivariate Fourier transform~\eqref{Fourier}, we also emphasise by the third equal sign that the Fourier transform of the even function $f_\diag$ coincides with its Fourier cosine transform. Substituting~\eqref{3.2.4} into the inverse spherical transform~\eqref{2.2.10} yields
	\begin{equation}
	\begin{split}\label{3.2.5}
	f(x)=\frac{\Delta(x)}{(n!)^2C_{\pm1/2}}\int_{\R_+^n}{\dv s} \,\Delta(s)\det[J_{\pm1/2}(2\sqrt{x_js_k})(x_js_k)^{\pm1/4}]_{j,k=1}^{n}\f f_\diag(2\sqrt{s_1},\ldots,2\sqrt{s_n}).
	\end{split}
	\end{equation}
	The regularisation could be dropped since the differentiability guarantees the absolute integrability of the integrand above.
	
	The determinant in the Bessel functions can be Laplace-expanded as a sum over permutations in the symmetric group of $n$ elements, $\mathrm{S}_n$. Thereafter, we can exploit that $J_{\pm1/2}(2\sqrt{x_js_{\rho(j)}})(x_js_{\rho(j)})^{\pm1/4}$ is an eigenfunction for the differential operator $-x_j^{\pm1/2}\partial_{x_j}x_j^{1\mp 1/2}\partial_{x_j}$, with the eigenvalue $s_{\rho(j)}$. Therefore, $\Delta(s)\sgn(\rho)$ can be replaced by $\Delta(-x^{\pm1/2}\partial_{x}x^{1\mp1/2}\partial_x)$. Additionally, we change variables $2\sqrt{s_j}\mapsto t_j$ so that we get
	\begin{equation}
	\begin{split}
	f(x)=&\frac{\Delta(x)}{2^{n(1\pm1/2)}n!C_{\pm1/2}}\,\frac{1}{n!}\sum_{\rho\in \mathrm{S}_n}\int_{\R_+^n}\dv t_j\,\Delta(-x^{\pm1/2}\partial_{x}x^{1\mp1/2}\partial_x)
	\\&\times 
	\left(\prod_{j=1}^nJ_{\pm1/2}(\sqrt{x_{\rho(j)}}t_{j}) {x_{\rho(j)}}^{\pm1/4}t_{j}^{1\pm1/2}\right)\f f_\diag(t_1,\ldots,t_n).\label{3.2.8}
	\end{split}
	\end{equation}
	
	Eventually, we make use of the explicit representation of the Bessel function in terms of trigonometric functions. For the case $\nu=-1/2$ it is \begin{equation}
	J_{-1/2}(\sqrt{x_{\rho(j)}}\,t_{j}){x_{\rho(j)}}^{-1/4}t_{j}^{1/2}=\sqrt{2/(\pi x_{\rho(j)})}\cos(\sqrt{x_{\rho(j)}}t_{j}).
	\end{equation} 
	Then the $y$-integral gives the multivariate Fourier cosine transform of $f_\diag$,
	\begin{equation}
	\begin{split}\label{3.2.9}
	f(x)=\frac{\Delta(x)}{n!C_{-1/2}}\,\frac{1}{n!}\sum_{\rho\in \mathrm{S}_n}\int_{\R_+^n}\dv t\,\Delta(-x^{-1/2}\partial_{x}x^{3/2}\partial_x)
	\left(\prod_{j=1}^n\frac{\cos(\sqrt{x_{\rho(j)}}t_{j})}{\sqrt{\pi x_{\rho(j)}}}\right)\f f_\diag(t_1,\ldots,t_n).
	\end{split}
	\end{equation}
	Again we exploit the absolute integrability of $\f f_\diag$ so that we can interchange the differential operator $\Delta(-x^{-1/2}\partial_{x}x^{3/2}\partial_x)$ with the $t$-integral. This remaining integral is an inverse multivariate Fourier cosine transform which leads to
	\begin{equation}
	f(x)=\frac{\pi^{n/2}}{n!C_{-1/2}}\Delta(x)\Delta(-x^{-1/2}\partial_{x}x^{3/2}\partial_x)\frac{1}{n!}\sum_{\rho\in \mathrm{S}_n}\frac{f_\diag(\sqrt{x_{\rho(1)}},\ldots,\sqrt{x_{\rho(n)}})}{\prod_{j=1}^n\sqrt{x_{\rho(j)}}}.
	\end{equation}
	By the permutation invariance of $f_\diag$ one reclaims~\eqref{3.2.3}.
	
 The same computation can be carried out for the case $\nu=1/2$, where we have now \begin{equation}
 J_{1/2}(\sqrt{x_{\rho(j)}}t_{j})x_{\rho(j)}^{1/4}t_{j}^{3/2}=\sqrt{2/\pi}\,t_j\sin(\sqrt{x_{\rho(j)}}t_{j}).
 \end{equation} 
 The sine function can be rewriting in terms of the first derivative of a cosine so that Eq.~\eqref{3.2.9} becomes
	\begin{equation}
	\begin{split}
	f(x)=&\frac{\Delta(x)}{2^{3n/2}n!C_{1/2}}\,\frac{1}{n!}\sum_{\rho\in \mathrm{S}_n}\int_{\R_+^n}\dv t\Delta(-x^{1/2}\partial_{x}x^{1/2}\partial_x)
	\\
	&\times\left(\prod_{j=1}^n-\frac{2^{3/2}}{\sqrt{\pi}}\sqrt{x_j}\partial_{x_j}\cos(\sqrt{x_j}t_{\rho(j)})\right)\f f_\diag(t_1,\ldots,t_n).
	\end{split}
	\end{equation}
	After switching all derivatives with the $t$-integral, one obtains an inverse multivariate Fourier cosine transform, and after applying the invariance of $f_\diag$, one obtains
	\begin{equation}
	\begin{split}
	f(x)&=\frac{\pi^{n/2}}{n!C_{1/2}}\Delta(x)\Delta(-x^{1/2}\partial_{x}x^{1/2}\partial_x)\left(\prod_{j=1}^n-\sqrt{x_j}\partial_{x_j}\right)\\
	&\quad\times\frac{1}{n!}\sum_{\rho\in \mathrm{S}_n}f_\diag(\sqrt{x_{\rho(1)}},\ldots,\sqrt{x_{\rho(n)}}).
	\end{split}
	\end{equation}
	By invariance of $f_\diag$,~\eqref{3.2.3a} is reclaimed.
\end{proof}

The normalisation of the two expression on the right hand side of~\eqref{3.2.3} and~\eqref{3.2.3a} can be checked by expanding the two determinants and integration by parts. What remains is in both cases the integral
\begin{equation}
\int_{\mathbb{R}_+^n}\frac{dx_1}{\sqrt{x_1}}\cdots\frac{dx_n}{\sqrt{x_n}}f_\diag(\sqrt{x_1},\ldots,\sqrt{x_n})=1.
\end{equation}
Indeed, $f_\diag(\sqrt{x_1},\ldots,\sqrt{x_n})/\sqrt{\prod_{j=1}^nx_j}$ is the distribution of the squared pseudo-diagonal entries $(-ix_{1,2},-ix_{3,4},\ldots,-ix_{2n,2n-1})$. In the particular case of $m=2$, we obtain a trivial equation since the matrix space is one-dimensional and we average only over the sign of $-ix_{1,2}$ which drops out in any case as $f_\diag$ is an even function. For the next simple case $m=2n+1=3$ we obtain from~~\eqref{3.2.3a} the remarkable identity
	\begin{equation}
	f(x_1)=-f_\diag'(\sqrt{x_1})
	\end{equation}
	because $C_{1/2}|_{n=1}=\sqrt{\pi}/2$.
	The left hand side is non-negative so that the distribution of $x_{1,2}$ can only be monotonously decreasing on the positive real line.
	
	The case $m=2n+1=3$ relates the single squared singular value distribution to one of the marginal distributions of the matrix entries of an invariant imaginary antisymmetric matrix $X\in i\o(3)$. Plugging this knowledge into the general case $m=2n+1$, we notice that $f(x)$ essentially depends on a joint probability distribution of eigenvalues of a list of in general correlated invariant $i\o(3)$ matrices $X_1,\ldots,X_n$, and their pseudo-diagonal entries $x_{1,2}^{(1)},\ldots,x_{1,2}^{(n)}$ of those matrices have the joint distribution $f_\diag$.
	
	Furthermore, we would like to point out that $i\o(m)$ matrices can be regarded as $M_\nu$ matrices with $\nu=\pm1/2$. Then, both~\eqref{3.2.3} and~\eqref{3.2.3a} can be rewritten into a unified expression
	\begin{equation}\label{3.2.13a}
	f(x)=\frac{1}{n!C_\nu}\Delta(x)\Delta(-x^\nu\partial_xx^{1-\nu}\partial_x){\mathcal A}_\nu^{-1} f_\diag(x),
	\end{equation}
	where ${\mathcal A}_\nu$ is the multivariate inverse Abel transform~\eqref{2.2.8a}. This formula can be implicitly seen in the proof of Proposition~\ref{p3.2.1}. For example for $\nu=-1/2$, Eq.~\eqref{3.2.9} can be rewritten by pulling out the differential operator. Then, the $t$-integral is an inverse Hankel transform. Similar strategies applies to $\nu=1/2$ case. We point out formula~\eqref{3.2.13a} since it caries over to $M_\nu=\Mat(n,n+\nu)$ which will be discussed in the next section.

\begin{remark}\label{p3.2.4a}
	We would like to add one final remark before we briefly restate Proposition~\ref{p3.2.1} for the case $M_{1/2}=i\usp(2n)$. Equations~\eqref{3.2.3} and~\eqref{3.2.3a} can be rewritten into the joint distribution $f_{\mathrm{ev}}$ of the non-zero eigenvalues $\pm\lambda$ of the antisymmetric matrix $X\in i\o(m)$ with $\lambda=(\lambda_1,\ldots,\lambda_n)\in\mathbb{R}^n$ and $m=2n$ or $m=2n+1$. This distribution is either 
	\begin{align}
	\label{3.2.13}
	f_{\mathrm{ev}}(\lambda)&=
	\displaystyle\frac{\pi^{n/2}}{2^{n(n-1)} n!C_{-1/2}}\Delta(\lambda^2)\Delta(-\partial_\lambda^2)f_\diag(\lambda)
	\end{align}
	for even $m=2n$ or 
	\begin{align}
	f_{\mathrm{ev}}(\lambda)&=\frac{\pi^{n/2}}{2^{n^2}n!C_{1/2}}\left(\prod_{j=1}^n\lambda_j\right)\Delta(\lambda^2)\Delta(-\partial_\lambda^2)\left(\prod_{j=1}^n-\partial_{\lambda_j}\right)f_\diag(\lambda) \label{3.2.14}
	\end{align}
	for odd $m=2n+1$.
	Here, the square functions for both $\lambda^2$ and $\partial_\lambda^2$ are taken entry-wise. Indeed after changing variables $x\mapsto \lambda^2$ and noticing that $\lambda$ can be positive as well as negative while $x$ is always non-negative we obtain
	\begin{equation}
	f_\mathrm{ev}(\lambda)=f(\lambda^2)\prod_{j=1}^n|\lambda_j|.
	\end{equation}
	Moreover, one needs
	\begin{equation}
	\begin{split}
	x_j^{-1/2}\partial_{x_j} x_j^{3/2}\partial_{x_j}=\frac{1}{4}x_j^{-1/2}\partial_{\sqrt{x_j}}^2\sqrt{x_j}\quad{\rm and}\quad x_j^{-1/2}\partial_{x_j} x_j^{1/2}\partial_{x_j}=\frac{1}{4}\partial_{\sqrt{x_j}}^2.
	\end{split}
	\end{equation}
	as well as $x^{1/2}_j\partial_{x_j}=2^{-1}\partial_{\sqrt{x_j}}$.
\end{remark}

Despite the difference between $i\usp(2n)$ and $i\o(2n+1)$ matrix models on the level of matrices, those two cases are almost identical in terms of their squared singular value distributions and spherical transforms. Therefore without further discussion, one can find the analogous result for $i\usp(2n)$.

\begin{corollary}[Derivative principle for $i\usp(2n)$]\label{p3.2.4}
	Let the random matrix $X\in i\usp(2n)$ be invariant under $\USp(2n)$ conjugation action and $f_\diag\in L^{1,n(n-1)}_\f(\R^n)$ shall be the distribution of its diagonal entries $y=(x_{2,2},\ldots,x_{2n,2n})$, i.e.
	\begin{equation}\label{3.2.18}
	f_\diag(y_1,\ldots,y_n)=\int F(X)\prod_{1\le j\le k\le n}\dv q_{j,k},\quad (l=1,\ldots,n).
	\end{equation}
	The $2\times 2$ blocks $q_{j,k}=q_{k,j}^\dagger$ are given in the quaternion form~\eqref{1.1.8} and $x_{2l,2l}=\{q_{l,l}\}_{2,2}$. The joint probability distribution $f$ of the squared eigenvalues and, the joint distribution $f_\mathrm{ev}$ of the eigenvalues are related to $f_\diag$ by~\eqref{3.2.3a} and~\eqref{3.2.14}, respectively.
\end{corollary}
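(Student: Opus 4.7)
\bigskip\noindent\textbf{Proof plan.} Since $i\usp(2n)$ belongs to the Hankel class $M_\nu$ with the same parameter $\nu = 1/2$ as $i\o(2n+1)$, the spherical function $\phi(x,s)$ given by~\eqref{2.2.6}, the spherical transform~\eqref{2.2.7}, and the inverse transform~\eqref{2.2.10} are formally identical in both cases (only the normalising constant $C_{1/2}$ from~\eqref{1.1.10} carries an extra $2^{n(n-1)}$). The plan is therefore to run the proof of Proposition~\ref{p3.2.1} for odd $m=2n+1$ verbatim, checking only the one genuinely case-specific step: the reduction of the spherical transform to a Fourier transform of $f_\diag$.

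First, I would show the analogue of~\eqref{3.2.4}. Because $F$ is $\USp(2n)$-invariant and the embedding $\iota(s)=\diag(\sqrt{s_1},-\sqrt{s_1},\ldots,\sqrt{s_n},-\sqrt{s_n})$ is diagonal, the phase $\exp(i\tr X\iota(s))$ in~\eqref{2.2.7} depends only on the diagonal entries $x_{2l-1,2l-1}$ and $x_{2l,2l}$. The quaternionic block constraint~\eqref{1.1.8} forces $x_{2l-1,2l-1}=x_{2l,2l}\in\R$, so only the $n$ real variables $y_l:=x_{2l,2l}$ survive. Integrating over all off-diagonal blocks $q_{j,k}$ with $j<k$ (using the definition~\eqref{3.2.18} of $f_\diag$) gives
\begin{equation*}
\s f(s)=\int_{\R^n}\dv y\,f_\diag(y)\,\prod_{j=1}^n 2\cos(2y_j\sqrt{s_j})=\f f_\diag(2\sqrt{s_1},\ldots,2\sqrt{s_n}),
\end{equation*}
where the evenness of $f_\diag$ in each argument (forced by $\USp(2n)$-invariance, since conjugation by the appropriate symplectic element sends $y_l\mapsto -y_l$) converts the exponential into a cosine transform. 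This is exactly the identity~\eqref{3.2.4} in the $\nu=1/2$ case.

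Once this is in hand, the rest of the argument is a line-by-line repetition of the proof of Proposition~\ref{p3.2.1} for $m=2n+1$: substitute into the inverse spherical transform~\eqref{2.2.10} with $\nu=1/2$, Laplace-expand the determinant of the Bessel kernels, use that $J_{1/2}(2\sqrt{x_js_{\rho(j)}})(x_js_{\rho(j)})^{1/4}$ is an eigenfunction of $-x_j^{1/2}\partial_{x_j}x_j^{1/2}\partial_{x_j}$ with eigenvalue $s_{\rho(j)}$ to promote $\Delta(s)\sgn(\rho)$ to the operator $\Delta(-x^{1/2}\partial_x x^{1/2}\partial_x)$, change variables $2\sqrt{s_j}\mapsto t_j$, rewrite $J_{1/2}$ as a derivative of a cosine producing the extra factor $\prod_j(-\sqrt{x_j}\partial_{x_j})$, and recognise the remaining $t$-integral as an inverse multivariate Fourier cosine transform of $f_\diag$. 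Averaging the permutation sum using the symmetry of $f_\diag$ yields~\eqref{3.2.3a} with $C_{1/2}$ now being the $i\usp(2n)$ value from~\eqref{1.1.10}. The formula~\eqref{3.2.14} for $f_\mathrm{ev}$ is then obtained from~\eqref{3.2.3a} by the change of variables $x_j=\lambda_j^2$ exactly as in Remark~\ref{p3.2.4a}, using $\sqrt{x_j}\partial_{x_j}=\tfrac12\partial_{\lambda_j}$ and $x_j^{1/2}\partial_{x_j}x_j^{1/2}\partial_{x_j}=\tfrac14\partial_{\lambda_j}^2$ together with $f_\mathrm{ev}(\lambda)=f(\lambda^2)\prod_j|\lambda_j|$.

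The only real obstacle is verifying the first step, namely the elimination of all non-pseudo-diagonal entries of $X$ in $\s f(s)$. This requires care with the quaternionic block convention: one must check that the constraint $x_{2l-1,2l-1}=x_{2l,2l}$ together with the opposite signs in $\iota(s)$ does not collapse the phase, and that the off-diagonal quaternionic entries truly integrate out against $F$. Both facts follow from the $\USp(2n)$-invariance of $F$ applied to block-diagonal symplectic elements $\mathrm{diag}(U_1,\ldots,U_n)\in\USp(2)^n\subset\USp(2n)$, which act independently on each diagonal block and thereby reduce the relevant integration to the real degree of freedom $y_l=x_{2l,2l}$; once this identification is made, everything else is mechanical.
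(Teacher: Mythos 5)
Your overall plan is sound and actually more detailed than what the paper provides: the paper's own ``proof'' is a single sentence asserting that $i\usp(2n)$ and $i\o(2n+1)$ share the same squared singular value statistics and spherical transform ($\nu=1/2$), so the result carries over ``without further discussion.'' Filling in the case-specific verification, as you attempt, is the right instinct. However, the one step you flag as the genuine obstacle contains a contradiction that you do not resolve.

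You assert that the quaternionic block constraint~\eqref{1.1.8} forces $x_{2l-1,2l-1}=x_{2l,2l}=:y_l$, and then immediately conclude
\begin{equation*}
\s f(s)=\int_{\R^n}\dv y\,f_\diag(y)\prod_{j=1}^n 2\cos\bigl(2y_j\sqrt{s_j}\bigr).
\end{equation*}
But with $\iota(s)=\diag(\sqrt{s_1},-\sqrt{s_1},\ldots,\sqrt{s_n},-\sqrt{s_n})$ the phase in the spherical transform is $\exp\bigl(i\,\tr X\iota(s)\bigr)=\exp\bigl(i\sum_l (x_{2l-1,2l-1}-x_{2l,2l})\sqrt{s_l}\bigr)$, which \emph{vanishes identically} under your constraint $x_{2l-1,2l-1}=x_{2l,2l}$. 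That would give $\s f\equiv 1$, not a Fourier cosine transform of $f_\diag$; the displayed equality is a non sequitur.

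The resolution is that~\eqref{1.1.8}, read literally for the diagonal block, describes the self-dual (GSE) structure $q_{l,l}=zI_2$, which is incompatible with the embedding~\eqref{1.1.9} and with the ``Hermitian anti-self-dual'' description in the abstract. For the anti-self-dual space $i\usp(2n)=\{X\in\herm(2n):X^T=JXJ\}$ with $J=I_n\otimes\left[\begin{smallmatrix}0&1\\-1&0\end{smallmatrix}\right]$, the diagonal block must have the form $q_{l,l}=\left[\begin{smallmatrix}a_l&b_l\\ \bar b_l&-a_l\end{smallmatrix}\right]$ with $a_l\in\R$, $b_l\in\C$, so that $x_{2l-1,2l-1}=-x_{2l,2l}$ (opposite signs, not equal). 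With this, $\tr X\iota(s)=-2\sum_l y_l\sqrt{s_l}$, and the evenness of $f_\diag$ under $\USp(2)^n$-conjugation then produces the cosine and recovers $\s f(s)=\f f_\diag(2\sqrt{s_1},\ldots,2\sqrt{s_n})$. Once this sign is corrected (and, while you are at it, note that~\eqref{3.2.18} as written integrates over \emph{all} $q_{j,k}$ including the diagonal blocks carrying $y_l$ --- it should integrate over the off-diagonal $q_{j,k}$ and the off-diagonal complex entry $b_l$ of each diagonal block), the remainder of your argument, which repeats the $m=2n+1$ case of Proposition~\ref{p3.2.1} mechanically, is correct.
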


\subsection{Derivative principle on $\Mat(n,n+\nu)$}\label{s3.3}

Let us come to the matrix spaces $M_\nu=\Mat(n,n+\nu)$ with $\nu\in\mathbb{N}_0$. When considering a $\U(n)\times \U(n+\nu)$ invariant random matrix $X\in M_\nu$ we know what the joint probability density $f$ of its squared singular values $x$ is in terms of the matrix density $F\in L^1(M_\nu)$, namely, it is given by~\eqref{1.1.10}.

Anew, we have to choose, in the first step, the new pseudo-diagonal entries, which are $y=({\rm Re}(x_{1,1}),\ldots,{\rm Re}(x_{n,n}))$. This choice is again very natural since it is the image of $\iota:\mathbb{R}_+^n\to M_\nu$ extended to a real vector space. Then, their marginal distribution $f_\diag$ is given by
\begin{equation}\label{3.3.1}
f_\diag(y)=\int F(X)\prod_{{j=1}}^n\dv x_{j,j}^{(i)}\prod_{\substack{j,k=1,\ldots,n\\j\ne k}}\dv x_{j,k}^{(r)}\dv x_{j,k}^{(i)},
\end{equation}
where $x_{j,k}^{(r)}={\rm Re}(x_{j,k})$ and $x_{j,k}^{(i)}={\rm Im}(x_{j,k})$ represent the real and imaginary parts of $x_{j,k}$ respectively. To guarantee the derivatives we need to choose $f_\diag\in L^{1,n(n-1)+n(\nu+2)}_\f(\R^n)$ , this time. The reason that the additional order of $n(\nu+2)$ is to guarantee that the inverse Abel transform satisfies $\mathcal{A}_\nu^{-1} f_\diag\in L^{1,n(n-1)}_\f(\R^n)$. As mentioned before, we will find essentially the derivative principle~\eqref{3.2.13a}. This order is also reflected in encountering the Fourier transform of the $\nu$-th derivative of $f$ in each variable when deriving our result.

One important property of $f_\diag$ is, like in $i\o(m)$ cases, that it is an even function in each argument as well as permutation invariant. This can be applied in a similar way as in~\eqref{3.2.4}. Thence, the spherical transform is a real transform (in fact the spherical function is real). Hence the right side of~\eqref{3.2.4} must also be a real function.

One can straightforwardly carry over the proof of Proposition~\ref{p3.2.1} to the present case. In particular, we will end up with a similar formula as in~\eqref{3.2.8} which can be expressed as
\begin{equation}
	\begin{split}
f(x)
&=\frac{\Delta(x)\Delta(-x^\nu\partial_{x}x^{1-\nu}\partial_x)}{n!C_{\nu}}\,\frac{1}{n!}\sum_{\rho\in \mathrm{S}_n}\h_\nu^{-1}[\f f_\diag(2\sqrt{s})](\sqrt{x_\rho}),\label{3.3.3a}
\end{split}
	\end{equation}
where $\h_\nu^{-1}$ is the inverse multivariate Hankel transform and $\f$ is the multivariate Fourier transform which combines to the inverse multivariate Abel transform $\mathcal{A}_\nu^{-1} $. Their composition explicitly reads
\begin{equation}\label{3.3.3}
\begin{split}
\h_\nu^{-1} [\f f_\diag(2\sqrt{s})](\sqrt{x_\rho})=&\mathcal{A}_\nu^{-1} f_\diag(\sqrt{x_\rho})\\
=&\int_{\R_+^n}\dv s_j\,\left(\prod_{j=1}^nJ_\nu(2\sqrt{x_{\rho(j)}s_j}) (x_{\rho(j)}s_{j})^{\nu/2}\right)\f f_\diag(2\sqrt{s_1},\ldots,2\sqrt{s_n})\\
=&\prod_{j=1}^nx_j^{\nu/2}\int_{\sqrt{x_1}}^\infty\dv y_1\ldots\int_{\sqrt{x_n}}^\infty\dv y_n\left(\prod_{j=1}^n\frac{1}{\sqrt{y_j-\sqrt{x_j}}}\partial_{y_j}^{\nu+1}\right) f_\diag(\sqrt{y_1},\ldots,\sqrt{y_n}).
\end{split}
\end{equation} 
In the second line we have applied~\eqref{2.2.8a}.
The permutation $\rho$ drops out due to the invariance of the integrand. Thence, the sum over $\rho$ yields a factor $n!$.

In~\eqref{3.3.3a}, the differential operator $\Delta(-x^\nu\partial_{x}x^{1-\nu}\partial_x)$ can be pulled outside the inverse Hankel transfrom which is the $s$-integral, because the remaining integral is bounded by a constant times $(\prod_{j=1}^n(1+s_j)^{\nu/2-1/4})\f f_\diag(\sqrt{y_1},\ldots,\sqrt{y_n})$ which is evidently absolutely integrable due to the differentiability of $f_\diag$. This differentiability implies that $ \f f_\diag(2\sqrt{s_1},\ldots,2\sqrt{s_n})$ is bounded by a constant times $\prod_{j=1}^n(1+s_j)^{-n-\nu/2-1/2}$ for $s\geq1$.

As the proof of the following proposition works exactly along the same lines as for Proposition~\ref{p3.2.1}, apart from the subtle difference we have pointed out, we omit it here.

\begin{proposition}[Derivative principle for $\Mat(n,n+\nu)$]\label{p3.4.1}
	Let $X\in\Mat(n,n+\nu)$ be a $\U(n)\times \U(n+\nu)$-invariant random matrix with a joint squared singular value distribution $f\in L^1(\R^n)$ and a joint pseudo-diagonal entry distribution $f_\diag\in L^{1,n(n+\nu+1)}_\f(\R^n)$. These two joint distributions satisfy the relation
	\begin{equation}\label{3.3.2}
	f(x)=\frac{1}{n!C_\nu}\Delta(x)\Delta(-x^\nu\partial_xx^{1-\nu}\partial_x)\,\mathcal{A}_\nu^{-1} f_\diag(\sqrt{x})
	\end{equation}
	for almost all $x\in\mathbb{R}_+^n$.
\end{proposition}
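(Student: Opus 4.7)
The plan is to mirror the proof of Proposition~\ref{p3.2.1}, replacing the special Bessel functions $J_{\pm 1/2}$ by the generic $J_\nu$ and reinterpreting the resulting integral via Proposition~\ref{p2.2.2}. First, starting from the spherical transform in~\eqref{2.2.7}, I would integrate out every matrix entry except the pseudo-diagonal coordinates $y_j=\mathrm{Re}(x_{j,j})$. Since $\iota(s)$ places $\sqrt{s_1},\ldots,\sqrt{s_n}$ on the rectangular ``diagonal'', the phase $\exp(i\tr X\iota(s))$ depends only on $y$, and the evenness of $f_\diag$ in each argument converts the plane wave into a product of cosines. This gives
\begin{equation}
\s f(s)=\f f_\diag\bigl(2\sqrt{s_1},\ldots,2\sqrt{s_n}\bigr),
\end{equation}
in direct analogy with~\eqref{3.2.4}.

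Next, I would substitute this expression into the inverse spherical transform~\eqref{2.2.10}. Laplace-expanding the determinant $\det[J_\nu(2\sqrt{x_js_k})(x_js_k)^{\nu/2}]_{j,k=1}^n$ over $\mathrm{Sym}(n)$ and using the eigenfunction identity
\begin{equation}
-x_j^\nu\partial_{x_j}x_j^{1-\nu}\partial_{x_j}\Bigl(J_\nu(2\sqrt{x_js_{\rho(j)}})(x_js_{\rho(j)})^{\nu/2}\Bigr)=s_{\rho(j)}\,J_\nu(2\sqrt{x_js_{\rho(j)}})(x_js_{\rho(j)})^{\nu/2},
\end{equation}
the factor $\Delta(s)\,\mathrm{sgn}(\rho)$ can be replaced by the differential operator $\Delta(-x^\nu\partial_x x^{1-\nu}\partial_x)$ pulled to the front. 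After this replacement, the inner $s$-integral is precisely the inverse Hankel transform composed with the Fourier transform of $f_\diag(2\sqrt{\,\cdot\,})$, which by Proposition~\ref{p2.2.2} equals the inverse Abel transform $\mathcal{A}_\nu^{-1}f_\diag(\sqrt{x_\rho})$. Finally, the permutation invariance of $f_\diag$ collapses the sum over $\rho\in\mathrm{Sym}(n)$ to a factor $n!$, and gathering constants produces the prefactor $1/(n!C_\nu)$ of~\eqref{3.3.2}.

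The main obstacle is analytic rather than algebraic: one must justify (i) dropping the $\varepsilon$-regularisation in~\eqref{2.2.10}, (ii) commuting $\Delta(-x^\nu\partial_x x^{1-\nu}\partial_x)$ past the $s$-integral, and (iii) using Fubini to collapse the iterated transform to $\mathcal A_\nu^{-1} f_\diag$. All three require sufficient decay of $\f f_\diag(2\sqrt{s})$, and this is exactly the purpose of the hypothesis $f_\diag\in L^{1,n(n+\nu+1)}_\f(\R^n)$. Concretely, the order $n(n-1)$ coming from the Vandermonde differential operator plus the additional $n(\nu+2)$ provided by this assumption yields the bound $\f f_\diag(2\sqrt{s})=O\!\bigl(\prod_{j=1}^n(1+s_j)^{-n-\nu/2-1/2}\bigr)$ for large $s$, which controls the Hankel kernel $J_\nu(2\sqrt{x_js_k})(x_js_k)^{\nu/2}$ and ensures $\mathcal A_\nu^{-1}f_\diag\in L^{1,n(n-1)}_\f(\R^n)$ so that the outer differential operator acts legitimately. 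Once these interchanges are established, the remaining identity holds pointwise a.e.\ on $\mathbb{R}_+^n$, and the argument parallels the one already spelled out for $i\o(m)$ without further change.
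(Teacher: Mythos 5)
Your proposal reproduces exactly the argument the paper sketches: integrate out all but the pseudo-diagonal entries to obtain $\s f(s)=\f f_\diag(2\sqrt{s})$, substitute into the inverse spherical transform, Laplace-expand the Bessel determinant, trade $\Delta(s)\sgn(\rho)$ for $\Delta(-x^\nu\partial_x x^{1-\nu}\partial_x)$ via the Bessel eigenfunction identity, and recognise the remaining $s$-integral as $\mathcal A_\nu^{-1}f_\diag$ through Proposition~\ref{p2.2.2}, with permutation invariance collapsing the $\rho$-sum. Your analytic justifications (the decay bound $\f f_\diag(2\sqrt{s})=O(\prod_j(1+s_j)^{-n-\nu/2-1/2})$ and the bookkeeping $n(n-1)+n(\nu+2)=n(n+\nu+1)$ for the regularity order) also coincide with the paper's remarks preceding the proposition, so this is the same proof.
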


The derivative principle~\eqref{3.3.2} as well as~\eqref{3.2.13a} for the cases $\nu=\pm1/2$ have again a direct implication for the sum of two invariant matrices $C=A+B$. The pseudo-diagonal matrix entries evidently add up so that their joint distributions $f_\diag^{(A)}$ and $f_\diag^{(B)}$ experience a simple additive convolution of two $n$-dimensional vectors. In combination with the derivative principles we can conclude the following corollary.

\begin{corollary}[Additive convolution on $M_\nu$]\label{corr.addMnu}
Let $\nu\in\mathbb{N}_0\cup\{\pm1/2\}$ and $A,B\in M_\nu$ be two independent invariant random matrices. Additionally, we assume that their joint distributions $f_\diag^{(A)}, f_\diag^{(B)}$ of their pseudo-diagonal elements satisfy the requirements of Proposition~\ref{p3.2.1} or~\ref{p3.4.1}, respectively. The joint probability density $f_C$ of the squared singular values $x$ of $C=A+B$ is then
\begin{equation}
f_C(x)=\frac{1}{n!C_\nu}\Delta(x)\Delta(-x^\nu\partial_xx^{1-\nu}\partial_x)\,\mathcal{A}_\nu^{-1}[f_\diag^{(A)}\ast f_\diag^{(B)}](\sqrt{x}).
\end{equation}
\end{corollary}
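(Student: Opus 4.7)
I would follow the same template that delivered Corollary~\ref{corr.addHerm}, transposed to the Hankel class. Three observations carry the argument. First, in each of the four matrix spaces collected under $M_\nu$, the pseudo-diagonal entries are real linear coordinates on the ambient vector space: the real parts of $x_{j,j}$ for $\Mat(n,n+\nu)$, the entries $-ix_{2l-1,2l}$ for $i\o(m)$, and the entries $x_{2l,2l}$ for $i\usp(2n)$. Hence the pseudo-diagonal vector of $C=A+B$ is exactly the component-wise sum of those of $A$ and $B$, and the independence of $A$ and $B$ forces, by elementary multivariate probability,
\begin{equation}
f_\diag^{(C)}=f_\diag^{(A)}\ast f_\diag^{(B)}.
\end{equation}
Second, the sum of two independent $K_\nu$-invariant random matrices is again $K_\nu$-invariant, so $C\in M_\nu$ is an invariant ensemble whose squared singular value density $f_C$ is linked to its matrix density by~\eqref{1.1.10}, making the spherical machinery of section~\ref{s2.2}--\ref{s2.3} applicable to $C$.

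Third, and this is the only technical point, I would verify that $f_\diag^{(C)}$ satisfies the regularity hypothesis of Proposition~\ref{p3.2.1} or~\ref{p3.4.1}, i.e.\ membership in $L^{1,m}_\f(\R^n)$ for the $m$ appropriate to the given $\nu$. Derivatives distribute onto either convolution factor via $\partial_x^b(f_\diag^{(A)}\ast f_\diag^{(B)})=(\partial^{b}f_\diag^{(A)})\ast f_\diag^{(B)}$, and polynomial weights are handled by expanding $x^a=((x-y)+y)^a$ binomially inside the convolution integral, so that each resulting term reduces to a standard $L^1$ convolution of functions of the form $y^c\partial_y^{d} f_\diag^{(A)}$ and $y^{a-c}f_\diag^{(B)}$; both are integrable by hypothesis, and Young's inequality then supplies the required $L^1$ bound on $x^a\partial_x^b f_\diag^{(C)}$ for all multi-indices $|a|,|b|\le m$.

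With these three ingredients in place, applying the derivative principle of Proposition~\ref{p3.2.1} (for $\nu=\pm1/2$) or Proposition~\ref{p3.4.1} (for $\nu\in\mathbb{N}_0$) to the invariant ensemble $C$ with additive weight $f_\diag^{(C)}=f_\diag^{(A)}\ast f_\diag^{(B)}$ produces precisely the stated formula for $f_C$. The genuine obstacle, if any, is the regularity bookkeeping in the third step; the rest is immediate from linearity of the pseudo-diagonal map and the convolution theorem for additive convolutions of independent random vectors.
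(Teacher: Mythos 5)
Your argument mirrors the paper's reasoning exactly: the pseudo-diagonal entries add component-wise, so independence forces $f_\diag^{(C)}=f_\diag^{(A)}\ast f_\diag^{(B)}$, the sum remains $K_\nu$-invariant, and the derivative principle (Proposition~\ref{p3.2.1} or~\ref{p3.4.1}) then applies to $C$. Your third step — verifying that the convolution inherits membership in $L^{1,m}_\f(\R^n)$ by distributing derivatives onto one factor and expanding $x^a=((x-y)+y)^a$ binomially, then invoking Young's inequality — is a useful explicit check that the paper leaves implicit in the corollary's hypotheses, but it does not change the route.
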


In analogy to the derivative \textit{principle} for invariant Hermitian matrices, especially Corollary~\ref{p3.1.3}, we would like to conclude this section with a uniqueness statement for the function $f_{\rm diag}$. The ensuing considerations apply for both, the present section as well as section~\ref{s3.2}. For this aim, we first give the counterpart of Lemma~\ref{p3.1.2}.

\begin{lemma}\label{p3.4.3}
	Let $u\in L^{1,n(n-1)}_\f(\R^n)$ be a function that is even in each entry and $\nu\in\mathbb{N}_0\cup\{\pm 1/2\}$. Moreover, we consider a polynomial $P(x^\nu\partial_{x}x^{1-\nu}\partial_x)\not\equiv 0$ of the partial derivatives $x_1^\nu\partial_{x_1}x_1^{1-\nu}\partial_{x_1},\ldots,x_n^\nu\partial_{x_n}x_n^{1-\nu}\partial_{x_n}$. Then, the partial differential equation
	\begin{equation}\label{3.5.1}
	P(x^\nu\partial_{x}x^{1-\nu}\partial_x)u(x)=0\ {\rm for\ all}\ x\in\mathbb{R}_+^n
	\end{equation}
	gives the unique solution $u(x)=0$ for all $x\in\mathbb{R}^n$.
\end{lemma}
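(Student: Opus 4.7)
The plan is to mirror the proof of Lemma~\ref{p3.1.2}, with the multivariate Hankel transform $\h_\nu$ replacing the Fourier transform; the key input is the identity stated just after~\eqref{2.2.5b}, namely that each Bessel kernel $J_\nu(2\sqrt{x_js_j})(x_js_j)^{\nu/2}$ is an eigenfunction of $-x_j^\nu\partial_{x_j}x_j^{1-\nu}\partial_{x_j}$ with eigenvalue $s_j$. Under $\h_\nu$ the operator $P(x^\nu\partial_x x^{1-\nu}\partial_x)$ therefore becomes multiplication by $P(-s)$, turning the PDE into a purely algebraic equation in the transform variable.

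Concretely, evenness of $u$ in each argument first reduces matters to showing $u|_{\R_+^n}=0$, and places the restriction in a class on which the Hankel transform and its inverse are bijective (the regularity $u\in L_\f^{1,n(n-1)}(\R^n)$ is chosen exactly to accommodate the differentiation formula~\eqref{2.2.5b}). Next, I would use the inverse representation $u(x) = \h_\nu^{-1}[\h_\nu u](x)$ and pull the differential operator under the integral. Each factor $x_j^\nu\partial_{x_j}x_j^{1-\nu}\partial_{x_j}$ that moves onto the Bessel kernel produces a multiplicative factor $-s_j$ by the eigenfunction identity, so that
\begin{equation*}
0 \;=\; P(x^\nu\partial_x x^{1-\nu}\partial_x)\,u(x) \;=\; \h_\nu^{-1}\!\bigl[\,P(-s)\,\h_\nu u(s)\,\bigr](x).
\end{equation*}
Injectivity of $\h_\nu^{-1}$ forces $P(-s)\,\h_\nu u(s)=0$ almost everywhere on $\R_+^n$. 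Since $P\not\equiv 0$, its zero locus has Lebesgue measure zero, so $\h_\nu u$ vanishes on a set of full measure and, by the continuity of $\h_\nu u$ coming from $u\in L^1(\R_+^n)$, identically on $\R_+^n$. A second application of injectivity, now of $\h_\nu$ on $L^1(\R_+^n)$, gives $u=0$ on $\R_+^n$, and evenness propagates this to $\R^n$.

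The main technical obstacle is justifying the commutation of $P(x^\nu\partial_x x^{1-\nu}\partial_x)$ with the $\h_\nu^{-1}$-integral, which unwinds to $2\deg P$ successive integrations by parts with vanishing boundary terms. Decay at infinity is encoded directly in the space $L_\f^{1,n(n-1)}(\R^n)$. The boundary contributions at $x_j=0$ are more delicate and rest on the evenness hypothesis: in the variables $\lambda_j=\sqrt{x_j}$ used in the reformulation of $\h_\nu$ just above~\eqref{2.2.5b}, evenness of $u$ forces every odd $\lambda_j$-derivative to vanish at the origin, which is precisely what kills the boundary terms generated by the factors $x_j^{1-\nu}$ and $x_j^\nu$ in the operator. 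This is the same mechanism that makes formula~\eqref{2.2.5b} itself work, so once it is verified in the degree-one case it transfers to arbitrary polynomial $P$ by iteration.
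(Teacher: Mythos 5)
Your proof is correct and takes essentially the same route as the paper: Hankel-transform the equation so that $P(x^\nu\partial_x x^{1-\nu}\partial_x)$ becomes multiplication by a polynomial in $s$, observe that the zero locus of a nonzero polynomial has measure zero, conclude $\h_\nu u\equiv0$ by continuity, invoke injectivity of $\h_\nu$, and use evenness to extend from $\R_+^n$ to $\R^n$. The paper applies the forward transform to~\eqref{3.5.1} directly via the differentiation rule~\eqref{2.2.5b}, whereas you pass through the inverse representation $u=\h_\nu^{-1}[\h_\nu u]$ and move the differential operator under the integral; these are equivalent, and your sign $P(-s)$ is in fact the more careful one. One small correction to your final paragraph: commuting $P(x^\nu\partial_x x^{1-\nu}\partial_x)$ with the $\h_\nu^{-1}$-integral is a dominated-convergence argument (the differentiation variable $x$ and the integration variable $s$ are distinct), not an integration by parts; the integration by parts with vanishing boundary terms at $x_j=0$ and at infinity, which does rely on the evenness and the decay built into $L_\f^{1,n(n-1)}(\R^n)$, is what underlies~\eqref{2.2.5b} itself and hence the paper's forward-transform route.
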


\begin{proof}
	The proof is essentially the same as the one for Lemma~\ref{p3.1.2} where we now apply the Hankel transform on~\eqref{3.5.1}. This transforms $P(x^\nu\partial_{x}x^{1-\nu}\partial_x)$ to $P(s)$ due to~\eqref{2.2.5b}. In this way, one can show that $\h_\nu u(s)=0$ for all $s$ with $P(s)\neq0$ and since $P$ has been a non-zero polynomial one can extend the result to $\h_\nu u(s)=0$ for all $s\in\mathbb{R}_+^n$. The injectivity of the Hankel transform on $L^{1,n(n-1)}_\f(\R_+^n)$ leads to $u(x)=0$ for all $x\in\mathbb{R}_+^n$, and the symmetry of $u$ extents it to $x\in\mathbb{R}^n$.
\end{proof}

\begin{corollary}[Uniqueness of additive invariant ensembles on $M_\nu$]\label{p3.4.4}
	Considering the setting of Proposition~\eqref{p3.2.1} for $\nu=\pm1/2$ and of Proposition~\ref{p3.4.1} for $\nu\in\mathbb{N}_0$. Then there exists a unique permutation invariant function $w\in L^{1,n(n-1)}_\f(\R^n)$ which is even in each argument such that the joint probability distribution of the squared singular values $x$ of the invariant random matrix $X\in M_\nu$ is
	\begin{equation}
	\begin{split}
	f(x)=
	\frac{1}{n!C_{\nu}}\Delta(x)\Delta(-x^\nu\partial_xx^{1-\nu}\partial_x)w(x).
	\end{split}
	\end{equation}
	Moreover, it is $w(x)=\mathcal{A}_\nu^{-1}f_\diag(\sqrt{x})$.
\end{corollary}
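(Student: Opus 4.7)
The plan is to assemble the corollary from two ingredients already in hand: the existence of a suitable $w$ comes directly from Propositions~\ref{p3.2.1} and~\ref{p3.4.1}, while uniqueness follows from Lemma~\ref{p3.4.3} by the same strategy that yielded Corollary~\ref{p3.1.3} from Proposition~\ref{p3.1.1} and Lemma~\ref{p3.1.2} in the Hermitian case.

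For existence, I would define $w(x) := \mathcal{A}_\nu^{-1}f_\diag(\sqrt{x})$ on $\mathbb{R}_+^n$ and extend by evenness in each argument to all of $\mathbb{R}^n$. The representation~\eqref{3.2.13a} (for $\nu = \pm 1/2$) and~\eqref{3.3.2} (for $\nu \in \mathbb{N}_0$) then give the claimed derivative identity, while permutation invariance of $w$ is inherited from that of $f_\diag$, which in turn follows from the $K_\nu$-invariance of $F$. The regularity claim $w \in L^{1,n(n-1)}_\f(\R^n)$ is the one step that demands verification, but the differentiability hypothesis on $f_\diag$ in each case was designed precisely for this: feeding the integral formula~\eqref{2.2.8a} for $\mathcal{A}_\nu^{-1}$ through standard derivative-under-integral bounds yields the required $n(n-1)$ orders of integrable derivatives for $w$, as already flagged in the paragraphs preceding Propositions~\ref{p3.2.1} and~\ref{p3.4.1}.

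For uniqueness, I would suppose $w_1$ and $w_2$ both satisfy the conclusion and set $u := w_1 - w_2$. Linearity of the differential operator gives
\begin{equation}
\Delta(x)\Delta(-x^\nu\partial_x x^{1-\nu}\partial_x)u(x) = 0 \quad \text{for almost every } x \in \mathbb{R}_+^n,
\end{equation}
and since the polynomial $\Delta(x)$ vanishes only on a Lebesgue-null set, it can be cancelled to leave
\begin{equation}
\Delta(-x^\nu\partial_x x^{1-\nu}\partial_x)u(x) = 0 \quad \text{for almost every } x \in \mathbb{R}_+^n.
\end{equation}
Extending this by the evenness of $u$ to almost every $x \in \mathbb{R}^n$ and invoking Lemma~\ref{p3.4.3} with $P = \Delta$ then forces $u \equiv 0$. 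The main obstacle in the whole argument is not algebraic but the bookkeeping of function spaces: one must make explicit that $\mathcal{A}_\nu^{-1}f_\diag(\sqrt{\cdot})$ really does land in $L^{1,n(n-1)}_\f(\R^n)$, which is the content of the extra $n(\nu+2)$ orders of integrable derivatives demanded of $f_\diag$ in the hypotheses.
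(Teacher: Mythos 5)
Your proposal matches the paper's intended route exactly: the paper itself states this corollary without a separate proof, presenting it as the direct combination of the existence content of Propositions~\ref{p3.2.1} and~\ref{p3.4.1} with the trivial-kernel Lemma~\ref{p3.4.3}, mirroring how Corollary~\ref{p3.1.3} follows from Proposition~\ref{p3.1.1} and Lemma~\ref{p3.1.2}. Your only superfluous step is the extension of the PDE to $\mathbb{R}^n$ by evenness before invoking Lemma~\ref{p3.4.3}; that lemma already takes the equation on $\mathbb{R}_+^n$ as its hypothesis, so the cancellation of $\Delta(x)$ on $\mathbb{R}_+^n$ is all that is needed.
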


Let us underline that the representation~\eqref{2.2.8a} for the Abel transform only works for $\nu\in\mathbb{N}_0$. For the two cases $\nu=\pm1/2$ it is given by $\mathcal{A}_{-1/2}^{-1}f_\diag(\sqrt{x})=\pi^{n/2}f_\diag(\sqrt{x})/\sqrt{\prod_{j=1}^nx_j}$ and $\mathcal{A}_{1/2}^{-1}f_\diag(\sqrt{x})=\pi^{n/2}(\prod_{j=1}^n-\sqrt{x_j}\partial_{x_j})f_\diag(\sqrt{x})$.

\subsection{Relationship with Rossmann's theorem}\label{s3.4}

The Harish-Chandra class is a class of Lie groups used in representation theory. In Rossmann's article~\cite{Rossmann} a relationship of Fourier transforms within this class is proposed. Let $G$ be a Lie group of the Harish-Chandra class, $\mathfrak g$ be its Lie algebra and $\mathfrak t$ be the Cartan subalgebra of $\mathfrak g$. Fix a root system of positive roots for the complexification $\mathfrak t_\C$ of $\mathfrak t$, and denote $\pi:\mathfrak t_\C\mapsto \C$ the product of those positive roots. For an invariant function $\phi$ defined on $\mathfrak g$ (i.e. $f(\mathrm{Ad}(g)x)=f(x)$ where $\mathrm{Ad}$ is the Adjoint representation of the Lie group), Rossmann's theorem implies the following equation
\begin{equation}\label{3.4.1}
	\pi(t)\int_\mathfrak g e^{i(t,y)}f(y)\dv y=\int_\mathfrak t e^{i(t,s)}\pi(s)f(s)\dv s,\quad t\in\mathfrak t.
\end{equation}
Also, the same equation is implicitly implied in Harish-Chandra's work (see e.g.~\cite{Frenkel}).

This equation \eqref{3.4.1} reduces to our \eqref{3.1.12a}, \eqref{3.2.4} and the respective formula in the $i \usp(2n)$ case. Upon a symmetrisation over the Weyl group $W$, equation \eqref{3.4.1} multiplied with a suitable scalar can be rewritten as
\begin{equation}\label{3.4.2}
	\int_\mathfrak g e^{i(t,x)}f(x)\dv x= \int_\mathfrak t \phi(t,s)f(s)\pi(s)^2\dv s,
\end{equation}
where $\phi$ is Harish-Chandra's general spherical function. Then by the Weyl integration formula, the right hand side of \eqref{3.4.2} gives exactly a spherical transform of the random matrix with matrix distribution $f$, when the Lie algebra is taken as $\mathfrak u(n), \o(2n), \o(2n+1), \usp(2n)$. The left hand side of \eqref{3.4.2} is considered to be the matrix Fourier transform of the same random matrix with only an element $t$ from the Cartan subalgebra $\mathfrak t$. The inner product there is identified as $\tr xi(h)$, where $i$ is a natural embedding of the Cartan subalgebra into the Lie algebra $\mathfrak g$, which also coincides with our notions of diagonal or pseudo-diagonal entries. Hence, the matrix Fourier transform carries over to the Fourier transforms on Cartan sub-algebras.

This discussion shows it is promising to not only unify our discoveries in those previous additive ensembles, but also make a more general statement for Lie groups in the Harish-Chandra class. This open question certainly suggests further studies while it is beyond the scope of this present work. Finally, let us underscore that the case of the set $\mathrm{Mat}(n,n+\nu)$ case is not included in Rosmann's theorem, as it is not a Lie algebra but rather a symmetric space.

\section{Multiplicative unitarily invariant ensembles}\label{s4}

\subsection{Derivative principle on the multiplicative space $\herm_+(n)$}\label{s4.1}

In the proofs of the previous additive cases, an essential fact is that the spherical transform of the random matrix is equal to a multivariate Fourier transform of some additive variables. Those variables were the diagonal entries in the cases of $\herm(n)$ and $i\usp(2n)$ and they were the pseudo-diagonal entries for the remaining matrix spaces $M_\nu$.
The question that has to be solved is which kind of multivariate transform is equal to spherical transform in the multiplicative setting.

To answer this question for the multiplicative $\herm_+(n)$ case, we turn to the LU decomposition of a matrix. Indeed, any positive definite matrix $X=LU$ can be decomposed into a lower triangular matrix $L$ with $1$'s on its diagonal and an upper triangular matrix $U$ with diagonal entries $u=(u_{1,1},\ldots,u_{n,n})\in\mathbb{R}_+^n$ being all positive. If $X$ has been Hermitian, this decomposition reduces to an equivalent representation of the Cholesky decomposition (see e.g.~\cite{Fo10}). The change of the measure becomes
\begin{equation}\label{4.1.1}
\dv X=\prod_{1\le j\le k\le n}\dv u_{j,k},
\end{equation}
where $u_{j,k}$ is the $(j,k)$-th entry of $U$. This measure is independent of the matrix entries of $L$ as the off-diagonal entries satisfy the relation $u_{j,j}^*l_{k,j}=u_{j,k}^*$ with $(.)^*$ the complex conjugation.

The LU decomposition enables us to compute the principal minors (determinants of the top left sub-block) that play a crucial role in the spherical transform~\eqref{2.3.2}. This becomes transparent in
\begin{equation}
\det X_{j\times j}=\det L_{j\times n}U_{n\times j}=\det L_{j\times j}\det U_{j\times j}=\prod_{k=1}^j u_{j,j}.
\end{equation}
So the generalised power~\eqref{2.3.1} can be be made explicit as follows
\begin{equation}
|X|^s=\prod_{j=1}^{n-1}\left(\prod_{k=1}^ju_{k,k}^{s_j-s_{j+1}-1}\right)\prod_{k=1}^nu_{k,k}^{s_n}=\prod_{j=1}^{n}u_{j,j}^{s_j-n+j}.
\end{equation}
This expression has two consequences. Firstly, the spherical transform~\eqref{2.3.2} is simply an average of a function of powers of the $u_{j,j}$'s which is on the other hand the multivariate Mellin transform in these quantities. Secondly, the spherical transform does not care about the distribution of the other entries of the matrix $U$. Thus, we can exploit the measure~\eqref{4.1.1} and integrate over all off-diagonal entries of the $U$ matrix to obtain the marginal distribution of the diagonal entries of $U$,
\begin{equation}\label{4.1.4a}
f_{u}(u):=\int F(LU)\prod_{1\le j<k\le n}\dv u_{j,k}.
\end{equation} 
The problem we encounter is that $f_u$ is not permutation invariant in its arguments. This is only true for the function
\begin{equation}\label{4.1.5}
g(u):=f_u(u)\prod_{j=1}^n u_{j}^{-n+j},
\end{equation} 
which is quite natural as aforementioned equality of the Mellin transform with the spherical transform already exhibits this product. Particularly, we have
\begin{equation}
\mathcal{S} f(s)=\m f_u(s_1-2n+2,\ldots,s_n-2n+n+1)=\m g(s_1-n+1,\ldots,s_n-n+1)
\end{equation}
when $f\in L^1(\mathbb{R}_+^n)$ is the joint distribution of the eigenvalues of $X\in\herm_+(n)$. This formula can be also seen as a consequence of~\cite[Lemma 5.3]{KK19}. 

 As before and as a last preparation, we need a regularity condition to guarantee the derivatives which is with $g\in L^{1,n(n-1)/2}_{\m}(\R^n_+)$.

\begin{proposition}[Derivative principle for $\herm_+(n)$]\label{prop:hermplus}
	Let $X\in\herm_+(n)$ be a $\U(n)$-invariant matrix with eigenvalue distribution $f\in L^1(\R^n_+)$ and function $g\in L^{1,n(n-1)/2}_{\m}(\R^n_+)$ as defined in~\eqref{4.1.5}. Then, the derivative principle reads
	\begin{equation}\label{4.1.6}
	f(x)=\frac{1}{\prod_{j=0}^{n}j!}\Delta(x)\Delta(-x\partial_x)g(x)
	\end{equation}
	for almost all $x\in\mathbb{R}_+^n$.
\end{proposition}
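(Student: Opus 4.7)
The plan is to mirror the Fourier-based proof of Proposition~\ref{p3.1.1}, replacing the multivariate Fourier transform by the multivariate Mellin transform. The LU-decomposition discussion preceding the proposition already yields the key identification $\s f(s) = \m g(s - (n-1)\mathbf{1})$, so up to a shift of the Mellin variable the spherical transform of $f$ coincides with the Mellin transform of $g$. I would substitute this into the explicit inverse spherical transform~\eqref{2.3.5}, pull out the differential operator $\Delta(-x\partial_x)$ from the Weyl denominator, and recognise what remains as an inverse Mellin transform of~$g$.

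Concretely, I would start from~\eqref{2.3.5} with the explicit spherical function $\phi(x^{-1},s_0+is)$ from~\eqref{2.3.3}, which cancels one factor of $\Delta(s_0+is)$ against the $\Delta(s_0+is)^2$ present in the inversion formula. Simplifying with $\Delta(x)^2/\Delta(x^{-1}) = (-1)^{n(n-1)/2}\Delta(x)\prod_j x_j^{n-1}$, absorbing the extra $\prod_j x_j^{n-1}$ into the determinant as a row scaling (which turns $x_j^{-(s_{0,k}+is_k)}$ into $x_j^{-c_k-is_k}$), and using translation invariance $\Delta(s_0+is)=\Delta(c+is)$ with $c := s_0 - (n-1)\mathbf{1} = (1-n,2-n,\ldots,0)$, one arrives at
\begin{equation*}
f(x) = \frac{\Delta(x)}{(n!)^2 \prod_{j=0}^{n-1} j!} \int \frac{ds}{(2\pi)^n}\, \Delta(c+is)\, \det\!\bigl[x_j^{-c_k-is_k}\bigr]_{j,k=1}^n\, \m g(c+is).
\end{equation*}

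I would then Laplace-expand the determinant as $\sum_\sigma \sgn(\sigma)\prod_k x_{\sigma(k)}^{-c_k-is_k}$. For each fixed $\sigma$, the function $y \mapsto \prod_k y_k^{-c_k-is_k}$ is an eigenfunction of $\Delta(-y\partial_y)=\prod_{j<k}(y_j\partial_{y_j}-y_k\partial_{y_k})$ with eigenvalue $\Delta(c+is)$, so pulling this operator outside the $s$-integral (justified by the regularity class $L^{1,n(n-1)/2}_\m(\R^n_+)$, which also allows the Gaussian regularisation to be dropped) and then identifying the remaining $s$-integral with Mellin inversion at the strip $c$ gives
\begin{equation*}
f(x) = \frac{\Delta(x)}{(n!)^2 \prod_{j=0}^{n-1} j!} \sum_{\sigma\in\mathrm{Sym}(n)} \sgn(\sigma)\, [\Delta(-y\partial_y)g](x_\sigma).
\end{equation*}

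The final step collapses this sum. Since both numerator and denominator of the Weyl formula~\eqref{2.3.3} are antisymmetric in $s$, the spherical transform $\s f$ is symmetric in $s$; hence $\m g$ is symmetric, and by injectivity of the multivariate Mellin transform on $L^{1,n(n-1)/2}_\m(\R^n_+)$ the function $g$ itself is symmetric in its $n$ arguments. Under any transposition $y_j\leftrightarrow y_k$ the factor $y_j\partial_{y_j}-y_k\partial_{y_k}$ reverses sign while the other factors merely swap in pairs, so $\Delta(-y\partial_y)$ is totally antisymmetric as an operator and, applied to the symmetric $g$, produces an antisymmetric function. Thus $[\Delta(-y\partial_y)g](x_\sigma) = \sgn(\sigma)\,\Delta(-x\partial_x)g(x)$ for every $\sigma\in\mathrm{Sym}(n)$, the sum contributes $n!$, and combined with $\prod_{j=0}^n j! = n!\prod_{j=0}^{n-1} j!$ one recovers~\eqref{4.1.6}. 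The main technical obstacle, beyond bookkeeping of prefactors, is the commutation of the Mellin contour integral with the differential operator $\Delta(-x\partial_x)$ and the removal of the Gaussian regularisation in~\eqref{2.3.5}; both rely on the decay and differentiability built into the hypothesis $g \in L^{1,n(n-1)/2}_\m(\R^n_+)$, and transferring symmetry from $\m g$ to $g$ via Mellin injectivity requires that $c$ lie inside the fundamental strip.
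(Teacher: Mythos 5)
Your proof is correct and follows essentially the same route as the paper's: substitute the identification $\s f(s)=\m g(s-(n-1)\mathbf{1})$ into the inverse spherical transform~\eqref{2.3.5}, Laplace-expand the determinant from the Gelfand--Na\u\i mark spherical function, replace $\Delta(c+is)\sgn(\sigma)$ by the differential operator $\Delta(-x\partial_x)$ acting on monomials, pull it past the $s$-integral using the $L^{1,n(n-1)/2}_{\m}$ hypothesis, recognise the residual integral as the inverse Mellin transform yielding $g$, and collapse the permutation sum via the symmetry of $g$. The only (cosmetic) difference is the bookkeeping of the sign: the paper keeps $\sgn(\rho)$ attached to the monomial and absorbs it when converting $\Delta(c+is)\sgn(\rho)$ into $\Delta(-x\partial_x)$, ending with $n!$ identical terms $\Delta(-x\partial_x)g(x_\rho)=\Delta(-x\partial_x)g(x)$, whereas you pull $\Delta(-y\partial_y)$ through the integral first and then cancel $\sgn(\sigma)$ against the antisymmetry of $\Delta(-y\partial_y)g$; both arguments use exactly the same facts and give the same constants.
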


Comparing this proposition and its requirements with the derivative principle of the additive $\herm(n)$, see Proposition~\ref{p3.1.1}, shows very strong similarities. Indeed, a change of variables $x\to e^{\tilde{x}/L}$ in the limit $L\to\infty$ reduces Eq.~\eqref{4.1.6} to~\eqref{3.1.5}.

\begin{proof}
	We start from
	\begin{equation}\label{4.1.4}
	\s f(s)=\int_{\R_+^n} \dv u\,f_u(u)\prod_{j=1}^{n}u_{j,j}^{s_j-2n+j}=\m g(s_1-n+1,\ldots,s_n-n+1).
	\end{equation}
	Note that the left hand side of~\eqref{4.1.4} is permutation invariant in $s$ by the definition of the spherical transform, and so $g$ is also a permutation invariant function.
	
	When we substitute Eq.~\eqref{4.1.4} into the inverse spherical transform~\eqref{2.3.5}, we can expand the determinant $\det[x_j^{-s_k}]_{j,k=1}^{n}$ and obtain
	\begin{equation}
	f(x)=\frac{\Delta(x)}{\prod_{j=0}^{n}j!}\frac{1}{n!}\sum_{\rho\in\mathrm{S}_n}\int_{\R^n}\frac{\dv s}{(2\pi)^n}\Delta(is_1,\ldots,n-1+is_n)\sgn(\rho)\prod_{j=1}^nx_{\rho(j)}^{n-j-is_j}\m g(1-n+is_1,\ldots,is_n).
	\end{equation}
	Next, we rewrite
	\begin{equation}
	\begin{split}
	\Delta(is_1,&\ldots,n-1+is_n)\sgn(\rho)\prod_{j=1}^nx_{\rho(j)}^{n-j-is_j}
	\\&=\Delta(1-n+is_1,\ldots,is_n)\sgn(\rho)\prod_{j=1}^nx_{\rho(j)}^{n-j-is_j}
	=\Delta(-x\partial_x)\prod_{j=1}^nx_{\rho(j)}^{n-j-is_j}.
	\end{split}
	\end{equation}
	For the first equality we have employed the translation invariance of the Vandermonde determinant. To interchange this derivative operator with the integral we need to exploit the regularity condition of $\m g$ which renders the remaining integrand absolutely integrable. This remaining integral is the inverse multivariate Mellin transform which yields $g(x_{\rho(1)},\ldots,x_{\rho(n)})$ for almost all $x\in\mathbb{R}_+^n$. Due to the permutation invariance of $g$ the sum of $\rho$ is trivial and gives a factor $n!$ which concludes the proof.
\end{proof}

By a similar argument to Lemma~\ref{p3.1.2}, one can show that for $u\in L^{1,n(n-1)/2}_{\m}(\R_+^n)$, the differential equation $\Delta(-x\partial_x)u(x)=0$ has a unique solution $u\equiv 0$. This shows the uniqueness of the expression~\eqref{4.1.6} for all invariant ensembles on $\herm_+(n)$. We summarise this in the following corollary.

\begin{corollary}[Uniqueness of multiplicative invariant ensembles on $\herm(n)$]\label{p4.1.3}
	Considering the situation of Proposition~\ref{prop:hermplus}, the function $ w\in L^{1,n(n-1)/2}_{\m}(\R_+^n)$ satisfying
	\begin{equation}\label{4.1.8}
	f(x)=\frac{1}{\prod_{j=0}^nj!}\Delta(x)\Delta(-x\partial_x)w(x)
	\end{equation}
	is unique.
	We refer $w$ as the multiplicative weight, and it satisfies the relation
	\begin{equation}\label{4.1.9}
	\s f(s)=\m w(s_1-n+1,\ldots,s_n-n+1),
	\end{equation}
	which implies $w=g$.
\end{corollary}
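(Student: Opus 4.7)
The plan is to mirror the proof of Corollary~\ref{p3.1.3} for the additive $\herm(n)$ case, with the multivariate Mellin transform $\m$ playing the role that the Fourier transform $\f$ played there. The key ingredient is a Mellin-analogue of Lemma~\ref{p3.1.2}: for any $u\in L^{1,n(n-1)/2}_{\m}(\R_+^n)$, the equation $\Delta(-x\partial_x)u(x)=0$ forces $u\equiv 0$. The author has already announced this in the paragraph preceding the corollary, so the first step is to spell out that lemma and then package uniqueness, existence, and the spherical-Mellin identity.

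To establish the injectivity lemma, I would apply $\m$ to $\Delta(-x\partial_x)u(x)=0$ and use the differentiation property $\m[(-x_j\partial_{x_j})^k \tilde f](s)=s_j^k\,\m \tilde f(s)$, which is licensed by $u\in L^{1,n(n-1)/2}_{\m}(\R^n_+)$. Since $\Delta(-x\partial_x)=\prod_{j<k}(x_j\partial_{x_j}-x_k\partial_{x_k})$, the Mellin image becomes multiplication by $\prod_{j<k}(s_k-s_j)=\Delta(s)$. On the fundamental strip $s=c+i\sigma$ with $\sigma\in\R^n$, the function $\m u(c+i\sigma)$ is continuous and the identity $\Delta(c+i\sigma)\,\m u(c+i\sigma)=0$ holds. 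Because $\Delta$ is a non-vanishing polynomial whose zero locus has measure zero, continuity gives $\m u\equiv 0$ on the strip, and injectivity of the Mellin transform yields $u\equiv 0$.

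With this lemma in hand, uniqueness is immediate: if $w_1,w_2\in L^{1,n(n-1)/2}_{\m}(\R_+^n)$ both satisfy~\eqref{4.1.8}, then $u:=w_1-w_2$ lies in the same class and solves $\Delta(-x\partial_x)u=0$ by linearity, hence vanishes identically. Existence together with the identification $w=g$ is delivered directly by Proposition~\ref{prop:hermplus}, which exhibits the function $g$ of~\eqref{4.1.5} in $L^{1,n(n-1)/2}_{\m}(\R_+^n)$ as a solution to~\eqref{4.1.8}; uniqueness then forces $w=g$. The Mellin-spherical relation~\eqref{4.1.9} is nothing but equation~\eqref{4.1.4} in the proof of Proposition~\ref{prop:hermplus}, rewritten with $g$ replaced by $w$.

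I expect the only mildly delicate point to be the Mellin-inversion step in the injectivity lemma, since $\m u$ is a priori defined on a product of fundamental strips rather than on all of $\C^n$ and the polynomial $\Delta$ might vanish on a real-codimension-one subset of each strip. However, continuity of $\sigma\mapsto\m u(c+i\sigma)$ for fixed $c$ in the common strip plus the density of $\{\sigma: \Delta(c+i\sigma)\ne 0\}$ reproduces the argument of Lemma~\ref{p3.1.2} verbatim. Everything else is bookkeeping: the regularity class $L^{1,n(n-1)/2}_{\m}$ was chosen precisely so that the differentiation rule applies and the spherical transform coincides with the Mellin transform of $g$ at the shifted argument, so no further analytic input is needed.
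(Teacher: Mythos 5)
Your proposal matches the paper's own approach: the paper states (in the paragraph preceding the corollary) that uniqueness follows from the Mellin-transform analogue of Lemma~\ref{p3.1.2}, exactly the lemma you formulate and prove, and the relation~\eqref{4.1.9} together with $w=g$ is read off from~\eqref{4.1.4} in the proof of Proposition~\ref{prop:hermplus}. You simply spell out the details (the computation $\m[\Delta(-x\partial_x)u]=\Delta(s)\m u$, the continuity-on-the-strip argument, and the injectivity of $\m$) that the paper leaves implicit.
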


\begin{example}[Examples for invariant ensembles on $\herm_+(n)$]\label{p4.1.4}$\,$
	\begin{enumerate}
		\item (Polynomial ensemble) When we choose the notation of case 3 in Example~\ref{p3.1.4}, we can easily derive the counterpart of~\eqref{polya-add}. It is
		\begin{equation}
		w(x)=\m^{-1}\left[\frac{\det\left[\m w_k(s_j) \right]_{j,k=1}^n}{\Delta(s)}\right](x).
		\end{equation}
		As before this integral can be quite complicated if the two determinants in the inverse multivariate Mellin transform do not simplify as it is for the multiplicative P\'olya ensembles the case.
		\item (Multiplicative P\'olya ensemble~\cite{KK19,FKK17}) A multiplicative P\'olya ensemble on $\herm_+(n)$ is another subclass of polynomial ensembles and its intersection with the additive P\'olya ensembles on $\herm(n)$ is not very big. While the Wishart-Laguerre ensemble is contained in both, the additive and multiplicative P\'olya ensembles, see~\cite{KK19}, this is not the case for the Jacobi ensemble which is only a multiplicative P\'olya ensemble.
		
		Also this time the multiplicative weight function $w$ factorises $w(x)=\prod_{j=1}^n\tilde w(x_j)$. In~\cite[Corollary 5.4]{KK19} it is shown that $w$ can be decomposed in such a way only when the diagonal entries of the $U$ matrix are independent, i.e. $f_u$ is decomposed, which coincides with our result. The crucial and only formal difference to the additive P\'olya ensemble is the derivative operator that is applied to the weight $w$.
	\end{enumerate}
\end{example}

\subsection{Derivative principle on $\U(n)$}\label{s4.2}

When carrying over the discussion for the multiplication on $\herm_+(n)$ to $\U(n)$, it is necessary to make use of an LU decomposition for unitary matrices in order to express the principal minors $\det X_{j\times j}$ properly. In Appendix~\ref{ab}, we give a suitable parametrisation of a matrix $X\in \U(n)$ in terms of the following variables:
\begin{enumerate}
\item the radii of the first $n-1$ principal minors $r_l, (l=1,\ldots,n-1)$,
\item the angles of the phases of all principal minors $\varphi_l, (l=1,\ldots,n)$,
\item and $(n-1)^2$ remaining angles $\theta_{j,k}, (2\leq j+1<k\leq n)$ and $\psi_{j,k}, (1\leq j<k\leq n)$.
\end{enumerate}
They are drawn from the following range
\begin{equation}
0<r_{l}<R_l,\quad -\pi<\varphi_l<\pi,\quad 0<\phi_{j,k},\psi_{j,k}<\pi/2,
\end{equation}
where $R_j$ are functions given by
\begin{equation}
R_j=R_j(\Theta)=\frac{\prod_{l=1}^j\prod_{k=j+1}^n\cos\phi_{l,k}}{\cos\phi_{j,j+1}}<1.
\end{equation}
Out of convenience, we introduce the abbreviation $\Theta$ for the set of all angles $\phi_{j,k}$ and $\psi_{j,k}$. The normalised Haar measure of $\mu(dX)$ has been computed in Proposition~\ref{prop:Haar.coord} and reads
\begin{equation}\label{haarmeasure}
\mu(\dv X)= \prod_{l=1}^{n-1}r_l\dv r_l\prod_{l=1}^{n}\dv\varphi_l\,\dv \Theta,
\end{equation} 
where $\dv \Theta$ is a measure given explicitly by
\begin{equation}
\dv\Theta=\left(\prod_{k=1}^n\frac{(k-1)!}{2\pi^k}\right)\prod_{\substack{2\leq j+1<k\leq n}}\tan\phi_{j,k}\dv\phi_{j,k}\prod_{1\leq j<k\leq n}\dv\psi_{j,k}.
\end{equation}

The problem of deriving a derivative principle for $\U(n)$ is that the principal minors of the matrix $X$ describe a two-dimensional space, meaning it is generically a complex number inside the unit ball, in contrast to the case $\herm_+(n)$ where the principal minors have been still positive numbers. This is certainly reflected in the derivative principle for the distribution of the eigenangles $\theta=(\theta_1,\ldots,\theta_n)\in(-\pi,\pi]^n$ of the invariant random matrix $X\in\U(n)$. Let us go over to an equivalent set of angles $\zeta_{\rho}^{(l)}=\sum_{j=1}^l\theta_{\rho(j)}$. Then, we need to introduce a function
\begin{equation}\label{4.2.5}
g(\theta):=\sum_{\rho\in\mathrm{S}_n}\left\langle\delta\left(\varphi_n-\zeta_{\rho}^{(n)}\right)\prod_{l=1}^{n-1}\frac{1}{2\pi}\frac{e^{-i\zeta_{\rho}^{(l)}}}{1-r_le^{i(\varphi_l-\zeta_{\rho}^{(l)})}}\right\rangle
\end{equation}
for stating the desired derivative principle. The angular bracket represents the ensemble average over $X\in\U(n)$. The Dirac delta function $\delta\left(\varphi_n-\zeta_{\rho}^{(n)}\right)$ has to be understood as $2\pi$-periodic to guarantee the proper symmetry of $g$.

\begin{proposition}[Derivative principle on $\U(n)$]\label{p4.4}
	Let $X\in\U(n)$ be a $\U(n)$-invariant matrix with a joint eigenvalue distribution $f\in L^1((-\pi,\pi]^n)$. Defining the function $g\in L_{\f}^{1,n(n-1)/2}((-\pi,\pi]^n)$ as in~\eqref{4.2.5}, we find the derivative principle relating $f$ and $g$
	\begin{equation}\label{4.2.6}
	f(e^{i\theta})=\frac{1}{\prod_{j=0}^{n}j!}\Delta(e^{i\theta})\Delta(i\partial_\theta)g(\theta).
	\end{equation}
\end{proposition}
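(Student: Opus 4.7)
The plan is to proceed in parallel with Propositions~\ref{p3.1.1} and~\ref{prop:hermplus}: exhibit a multivariate transform that coincides with the spherical transform $\s f$ on the $\U(n)$-invariant sector, identify it with $\f g$, then invert through~\eqref{2.3.8} and convert the residual factor $\Delta(s)$ into the differential operator $\Delta(i\partial_\theta)$. The essential new ingredient is the LU-type parametrisation of Appendix~\ref{ab}, in which the principal minors read $\det X_{j\times j}=r_je^{i\varphi_j}$ for $j<n$ together with $\det X=e^{i\varphi_n}$. Plugging these into~\eqref{2.3.1} diagonalises the generalised power function:
\begin{equation*}
|X|^s=\prod_{j=1}^{n-1}r_j^{s_j-s_{j+1}-1}e^{i\varphi_j(s_j-s_{j+1}-1)}\,e^{i\varphi_ns_n},\qquad s_1>s_2>\cdots>s_n,
\end{equation*}
with non-negative integer exponents on the $r_j$'s. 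Combined with~\eqref{2.3.7a} and the Haar measure~\eqref{haarmeasure}, this shows that $\s f(s)$ depends on $F$ only through an average of a product of powers of the $r_l$'s and phases $e^{i\varphi_l}$, just as in the $\herm_+(n)$ case.

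Next I would compute $\f g(s)$ directly from~\eqref{4.2.5}. For each $\rho\in\mathrm{Sym}(n)$ the change of variables $\zeta_\rho^{(l)}=\sum_{k=1}^l\theta_{\rho(k)}$ has unit Jacobian and, by Abel summation, rewrites the Fourier phase as $\sum_{l=1}^{n-1}\zeta^{(l)}(s_{\rho(l)}-s_{\rho(l+1)})+\zeta^{(n)}s_{\rho(n)}$. The delta function in~\eqref{4.2.5} collapses the $\zeta^{(n)}$-integration to a factor $e^{i\varphi_ns_{\rho(n)}}$; for each $l<n$, expanding the geometric kernel $(1-r_le^{i(\varphi_l-\zeta^{(l)})})^{-1}=\sum_{k\ge0}r_l^ke^{ik(\varphi_l-\zeta^{(l)})}$ and integrating against $e^{i\zeta^{(l)}(s_{\rho(l)}-s_{\rho(l+1)}-1)}$ on $(-\pi,\pi]$ selects the single mode $k=s_{\rho(l)}-s_{\rho(l+1)}-1$ when this exponent is non-negative and contributes zero otherwise. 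Since $s$ has pairwise distinct integer entries, only the unique permutation $\rho^\ast$ that orders $s$ strictly decreasingly satisfies all these constraints, and its contribution is precisely the coordinate expression of $\s f(\rho^\ast(s))=\s f(s)$ from the first paragraph; hence $\f g(s)=\s f(s)$ for every $s\in\Z^n$ with pairwise distinct entries, the remaining $s$ being irrelevant for the inversion as they are killed by $\Delta(s)^2$ in~\eqref{2.3.8}.

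Finally I would substitute $\s f=\f g$ into~\eqref{2.3.8}, use $|\Delta(e^{i\theta})|^2=\Delta(e^{i\theta})\Delta(e^{-i\theta})$ to cancel one copy of $\Delta(e^{-i\theta})$ against the denominator of $\phi(e^{-i\theta},s)$, Laplace-expand $\det[e^{-i\theta_js_k}]=\sum_\rho\sgn(\rho)\prod_je^{-i\theta_js_{\rho(j)}}$, and relabel $s\mapsto\rho^{-1}(s)$ term by term. Since $g$ is manifestly symmetric in its arguments (by the $\rho$-sum in~\eqref{4.2.5}), $\f g$ is symmetric in $s$, and together with $\Delta(\rho(s))=\sgn(\rho)\Delta(s)$ all $n!$ terms coincide and produce a clean $n!\sum_s\f g(s)\Delta(s)\prod_je^{-i\theta_js_j}$. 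The factor $\Delta(s)$ is then produced by the differential operator $\Delta(i\partial_\theta)$ acting on the phases, which may be pulled outside the summation using $g\in L^{1,n(n-1)/2}_\f((-\pi,\pi]^n)$ to secure absolute convergence, and the residual sum is the inverse Fourier series of $\f g$ returning $(2\pi)^ng(\theta)$; bookkeeping the constants yields exactly~\eqref{4.2.6}. The main obstacle will be the matching $\f g=\s f$ of the second paragraph: one must verify absolute convergence of the geometric expansions (guaranteed by $r_l<R_l<1$), justify the interchange of $\zeta$-integrations with the infinite $k$-sums, and confirm that the permutations which would yield negative exponents in the $r_l$'s truly contribute zero rather than hidden singularities. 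Once this identification is secured, the remainder is a direct analogue of the manipulations already carried out in the proofs of Propositions~\ref{p3.1.1} and~\ref{prop:hermplus}.
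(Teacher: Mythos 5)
Your proposal is correct and follows essentially the same route as the paper's proof: express $\s f$ via the LU-coordinates from Appendix~\ref{ab}, expand the geometric kernel in~\eqref{4.2.5} and use Abel summation in the Fourier phase to show $\f g=\s f$ through the mode-selection argument, then Laplace-expand the inverse spherical transform and convert $\Delta(s)\sgn(\rho)$ into $\Delta(i\partial_\theta)$. The one cosmetic difference is that you relabel $s\mapsto\rho^{-1}(s)$ to collapse the permutation sum before extracting the derivative operator, whereas the paper absorbs $\sgn(\rho)$ directly into $\Delta(i\partial_\theta)$ acting on $\prod_j e^{-is_j\theta_{\rho(j)}}$; both are equivalent bookkeeping of the same identity.
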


\begin{proof}
	Our starting point is the explicit form of the spherical transform of $X$ in terms of our parametrisation introduced in Appendix~\ref{ab},
	\begin{equation}
	\begin{split}\label{4.2.7}
	\s f(s)=&\left\langle \prod_{j=1}^n\det(X_{j\times j})^{s_j-s_{j+1}-1}\det X^{s_n}\right\rangle=\left\langle \prod_{j=1}^n(r_je^{i\varphi_j})^{s_j-s_{j+1}-1}(e^{i\varphi_n})^{s_n}\right\rangle.
	\end{split}
	\end{equation}
	We would like to underline that this formula is only valid for $s_1>s_2>\ldots>s_{n}$ as otherwise the average might run through poles. For the other orders we exploit the permutation invariance of the spherical transform. Hence the whole spherical transform is
	\begin{equation}
	\begin{split}\label{4.2.7b}
	\s f(s)=&\sum_{\rho\in\mathrm{S}_n}\left\langle \prod_{j=1}^n(r_je^{i\varphi_j})^{s_{\rho(j)}-s_{\rho(j+1)}-1}(e^{i\varphi_n})^{s_{\rho(n)}}\right\rangle\prod_{j=1}^{n-1}H(s_{\rho(j)}-s_{\rho(j+1)})
	\end{split}
	\end{equation}
	with the aid of the Heaviside step function $H(s_{\rho (j)}-s_{\rho (j+1)})$ which is only one when $s_{\rho (j)}-s_{\rho (j+1)}>0$ and otherwise vanishes.

	We compare the expression~\eqref{4.2.7b} with the multivariate Fourier transform of
	\begin{equation}\label{4.2.5b}
	\begin{split}
		g(\theta)=\sum_{\rho\in\mathrm{S}_n}\left\langle\delta\left(\varphi_n-\zeta_{\rho}^{(n)}\right)\prod_{l=1}^{n-1}\frac{1}{2\pi}\sum_{\tilde z_l=1}^\infty (r_le^{i\varphi_l})^{\tilde z_l-1}e^{-i\tilde z_l\zeta_{\rho}^{(l)}}\right\rangle
	\end{split}
	\end{equation}
	where	$\zeta_{\rho}^{(l)}=\sum_{j=1}^l\theta_{\rho(j)}$. In this equation, we have identified the geometric series as $r_l\leq1$ where the event $r_l=1$ is only of measure zero. The Fourier transform can be computed as follows
	\begin{equation}\label{4.2.5c}
	\begin{split}
	\f g(s)=&\int_{-\pi}^\pi \dv\theta_1 e^{i\theta_1 s_1}\cdots\int_{-\pi}^\pi \dv\theta_n e^{i\theta_n s_n}\sum_{\rho\in\mathrm{S}_n}\left\langle\delta\left(\varphi_n-\zeta_{\rho}^{(n)}\right)\prod_{l=1}^{n-1}\frac{1}{2\pi}\sum_{\tilde z_l=1}^\infty (r_le^{i\varphi_l})^{\tilde z_l-1}e^{-i\tilde z_l\zeta_{\rho}^{(l)}}\right\rangle \\
	=&\sum_{\rho\in\mathrm{S}_n}\int_{-\pi}^\pi \frac{\dv\zeta_{\rho}^{(1)}}{2\pi}\cdots\int_{-\pi}^\pi \frac{\dv\zeta_{\rho}^{(n-1)}}{2\pi}\int_{-\pi}^\pi \dv\zeta_{\rho}^{(n)}\\
	&\times\left\langle e^{i\zeta_{\rho}^{(n)}s_{\rho(n)}}\delta\left(\varphi_n-\zeta_{\rho}^{(n)}\right)\prod_{l=1}^{n-1}\sum_{\tilde z_l=1}^\infty (r_le^{i\varphi_l})^{\tilde z_l-1}e^{i(s_{\rho(l)}-s_{\rho(l+1)}-\tilde z_l)\zeta_{\rho}^{(l)}}\right\rangle \\
	=&\s f(s).
	\end{split}
	\end{equation}
	In the second equality, we have changed the variables $\theta_{\rho(l)}=\zeta_{\rho}^{(l)}-\zeta_{\rho}^{(l-1)}$ with $\zeta_{\rho}^{(1)}=\theta_{\rho(1)}$. The new variables are still $n$ independent angles that run from $-\pi$ to $\pi$ exploiting the $2\pi$ periodicity.

	Equation~\eqref{4.2.5c} is at the core of the derivative principle which can then be derived after applying the inverse spherical transform,
	\begin{equation}
	\begin{split}
	f(e^{i\theta})
	&=\frac{\Delta(e^{i\theta})}{(2\pi)^nn!\prod_{j=0}^nj!}\sum_{s\in\Z^n}\sum_{\rho\in\mathrm{S}_n}\Delta(s)\sgn(\rho)\s f(s)\prod_{j=1}^ne^{-is_j\theta_{\rho(j)}}\\
	&=\frac{\Delta(e^{i\theta})}{(2\pi)^nn!\prod_{j=0}^nj!}\sum_{s\in\Z^n}\sum_{\rho\in\mathrm{S}_n}\Delta(s)\sgn(\rho)\f g(s)\prod_{j=1}^ne^{-is_j\theta_{\rho(j)}}\\
	&=\frac{\Delta(e^{i\theta})\Delta(i\partial_\theta)}{(2\pi)^nn!\prod_{j=0}^nj!}\sum_{\substack{s\in\Z^n\\s_j\ne s_k,\forall j,k}}\sum_{\rho\in\mathrm{S}_n}\f g(s)\prod_{j=1}^ne^{-is_j\theta_{\rho(j)}}.\label{4.2.9}
	\end{split}
	\end{equation}
	The cases $s_j=s_k$ can be excluded from the $s$-sum in the second line since their contributions are zero due to the Vandermonde determinant in $s$ as well as the missing Fourier coefficients of $g$. In the third line, we have re-expressed $\Delta(s)\sgn(\rho)$ in terms of the derivative operator $\Delta(i\partial_\theta)$ which will be pulled outside the series. As before this interchange of the series and the derivative is guaranteed by dominated convergence because the differentiability condition of $g\in L_{\f}^{1,n(n-1)/2}((-\pi,\pi]^n)$ leads to an absolutely convergent series of its Fourier transform. The sum over $s$ is equal to the inverse Fourier transform of $g$ and the sum over $\rho$ is trivial as $g$ is permutation invariant so that it yields a factor of $n!$. This finishes the proof.
\end{proof}

As before, we can ask how unique the representation~\eqref{4.2.6} is. Unfortunately, $\U(n)$ invariant ensembles do not allow a simple analogue of Propositions~\ref{p3.1.1},~\ref{p3.4.1} and Corollary~\ref{p4.1.3}. The simple reason is that solutions of the partial differential equation
\begin{equation}
	\Delta(\partial_\theta)u(\theta)=0
\end{equation}
for $u$ being a $2\pi$-periodic function allow the constant since it is still integrable on the compact domain $(-\pi,\pi]^n$. This is the reason why also any $2\pi$ periodic function which only depends on the sum of the angles $\sum_{j=1}^n\theta_j$ admits this differential equation. Those solutions were suppressed in the previous cases because they were not integrable.
However, Proposition~\ref{p4.4} suggests the following modified version of a uniqueness statement, such that multiplicative weight is unique if those Fourier coefficients vanish whenever $s_j=s_k$ for some $j,k\in\{1,\ldots,n\}$. Indeed this is a rather straightforward consequence when noticing that these points drop out due to the Vandermonde determinant $\Delta(s)$ in the series which have been carried out in~\eqref{4.2.9}. The rest follows from the injectivity of the spherical transform so that we can omit its proof.

\begin{corollary}[Uniqueness of invariant ensembles on $\U(n)$]
	Considering the requirements of Proposition~\ref{p4.4} for an invariant ensemble on $\U(n)$, its joint eigenvalue distribution
	\begin{equation}\label{4.2.14}
	f(e^{i\theta})=\frac{1}{\prod_{j=0}^nj!}\Delta(e^{i\theta})\Delta(i\partial_\theta)w(\theta),
	\end{equation}
	corresponds uniquely to a $2\pi$-periodic function $w\in L_{\f}^{1,n(n-1)/2}((-\pi,\pi]^n)$ if one furthermore requires its multivariate Fourier transform to satisfy
	\begin{equation}
	\f w(s)=0\ {\rm whenever}\ s_j=s_k\ {\rm for\ some}\ j,k\in\{1,\ldots,n\}.
	\end{equation}
	Moreover, we have $w=g$ and, thence, the multiplicative weight $w$ admits the relation
	\begin{equation}\label{4.2.16}
	\s f(s)=\f w(s).
	\end{equation}
\end{corollary}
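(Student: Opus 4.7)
The plan is to follow the template of Lemmas~\ref{p3.1.2} and~\ref{p3.4.3}, adapted to the compact torus: the Fourier series converts $\Delta(i\partial_\theta)$ into multiplication by $\Delta(s)$, and the constraint that $\f w(s)=0$ at coincident indices is exactly what is needed so that only the trivial solution of the resulting symbol equation survives.

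First, I would assume $w_1,w_2\in L^{1,n(n-1)/2}_{\f}((-\pi,\pi]^n)$ both represent the same $f$ via~\eqref{4.2.14} and both have Fourier coefficients vanishing at coincident multi-indices. Writing $u:=w_1-w_2$, linearity of~\eqref{4.2.14} yields $\Delta(e^{i\theta})\Delta(i\partial_\theta)u(\theta)=0$ almost everywhere. Since the zero locus of $\Delta(e^{i\theta})$ on the torus is a finite union of real-analytic hypersurfaces $\theta_j\equiv\theta_k\pmod{2\pi}$ and therefore has Lebesgue measure zero, this upgrades to $\Delta(i\partial_\theta)u(\theta)=0$ almost everywhere on $(-\pi,\pi]^n$. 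Taking Fourier series and invoking the differentiation rule~\eqref{2.2.4b} (valid because $u\in L^{1,n(n-1)/2}_{\f}$), I obtain $\Delta(s)\f u(s)=0$ for every $s\in\Z^n$. Hence $\f u(s)=0$ whenever $s$ has pairwise distinct entries, and the coincidence hypothesis on $w_1$ and $w_2$ forces $\f u(s)=0$ on the remaining locus as well. Injectivity of the Fourier series on $L^1((-\pi,\pi]^n)$ then gives $u=0$.

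To conclude that $w=g$ and to deduce~\eqref{4.2.16}, it remains to verify that $g$ itself meets the hypotheses of the corollary. The identity $\f g=\s f$ was already established en route to Proposition~\ref{p4.4}, see~\eqref{4.2.5c}. For the coincidence-vanishing, inspection of the symmetrised expression~\eqref{4.2.7b} shows that every summand carries an indicator $\prod_{j}\chi(s_{\rho(j)}>s_{\rho(j+1)})$ which vanishes identically whenever two components of $s$ coincide; thus $\s f(s)=\f g(s)=0$ on the coincidence set, and $g$ is admissible. The uniqueness half then forces $w=g$, and~\eqref{4.2.16} is immediate.

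The hard part, I expect, will be the passage from $\Delta(e^{i\theta})\Delta(i\partial_\theta)u=0$ to $\Delta(i\partial_\theta)u=0$ pointwise almost everywhere: one must ensure that the measure-zero vanishing of the Vandermonde does not interact pathologically with possible distributional singularities of $u$. Because $u\in L^{1,n(n-1)/2}_{\f}$ makes $\Delta(i\partial_\theta)u$ a bona fide $L^1$ function, this step is legitimate, but it is the one place where the regularity assumption plays a non-cosmetic role; the remainder is a direct Fourier-series calculation of the type already carried out in Section~\ref{s2.2}.
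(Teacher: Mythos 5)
Your argument is correct and matches the approach the paper itself sketches (and then omits): it reduces the uniqueness to the vanishing of $\Delta(s)\f u(s)$ on $\Z^n$, uses the coincidence hypothesis to handle the degenerate indices that the Vandermonde factor in~\eqref{4.2.9} would otherwise leave unconstrained, and concludes by injectivity of the Fourier series, with $w=g$ coming from the identity $\f g=\s f$ established in~\eqref{4.2.5c} together with the indicator structure in~\eqref{4.2.7b}. The only minor imprecision is the claim that $\s f(s)=0$ on the coincidence set — that identity is conventional (formulas~\eqref{4.2.7} and~\eqref{4.2.7b} are derived only for pairwise distinct $s$), but what you actually need and correctly establish is $\f g(s)=0$ there, so the proof stands.
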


As an example, we would like to conclude the section with computing $g(\theta)$ for the circular unitary ensemble meaning Haar distributed unitary matrices. It has the simple matrix distribution $F(X)\equiv 1$.

\begin{proposition}[Multiplicative weight of CUE]\label{p4.2.2} For the circular unitary ensemble, the multiplicative weight $g$ is explicitly given by
	\begin{equation}
	g(\theta)=\frac{1}{(2\pi)^n}\sum_{\rho\in\mathrm{S}_n}\prod_{l=1}^{n-1}e^{-i(\theta_{\rho(1)}+\ldots+\theta_{\rho(l)})}=\frac{1}{(2\pi)^n}\mathrm{perm}[e^{-i(j-1)\theta_k}]_{j,k=1}^{n},
	\end{equation}
	where $\mathrm{perm}$ specifies the permanent of a matrix. This yields an alternative expression for the joint probability density of the eigenvalues which is
	\begin{equation}
	f(e^{i\theta})=\frac{1}{(2\pi)^n\prod_{j=0}^nj!}\Delta(e^{i\theta})\,\Delta(i\partial_\theta)\,\mathrm{perm}[e^{-i(j-1)\theta}]_{j,k=1}^n.
	\end{equation}
\end{proposition}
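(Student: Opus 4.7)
The plan is to bypass direct evaluation of the average in~\eqref{4.2.5} and instead invoke the identity $\f g=\s f$ that was derived inside the proof of Proposition~\ref{p4.4}, namely equation~\eqref{4.2.5c}. Thus the task will reduce to computing the spherical transform $\s f$ of the CUE joint eigenvalue density and then applying Fourier inversion~\eqref{2.2.16}. Once $g$ is in hand, the alternative expression for $f$ in the statement follows immediately by plugging $g$ into the derivative principle~\eqref{4.2.6}.

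For the CUE one has $F(X)\equiv 1$, so by~\eqref{2.1.12} the joint eigenvalue density is $f(e^{i\theta})=|\Delta(e^{i\theta})|^2/((2\pi)^nn!)$. Inserting this into the definition~\eqref{2.3.7} of $\s f$ together with the explicit form of $\phi$ from~\eqref{2.3.3}, one Vandermonde factor $\Delta(e^{i\theta})$ in the numerator cancels against the one in the denominator of $\phi$, leaving an integrand proportional to $\det[e^{-i(k-1)\theta_j}]_{j,k=1}^n\det[e^{is_k\theta_j}]_{j,k=1}^n$. Applying Andre\'ief's identity together with the orthogonality $\int_{-\pi}^\pi e^{im\theta}\dv\theta=2\pi\delta_{m,0}$ will collapse this double determinant to $(2\pi)^nn!\det[\delta_{s_l,j-1}]_{j,l=1}^n$, which is nonzero exactly when the multiset $\{s_1,\ldots,s_n\}$ equals $\{0,\ldots,n-1\}$. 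On each such supporting point, writing $s_k=\sigma(k)-1$ for $\sigma\in\mathrm{Sym}(n)$, the identity $\Delta(s)=\sgn(\sigma)\prod_{j=0}^{n-1}j!$ holds and the two $\sgn(\sigma)$ factors cancel, so that $\s f(s)$ becomes the indicator function of that multiset condition. Fourier inversion (with no regularisation needed because the support of $\s f$ is finite) then yields
\begin{equation*}
g(\theta)=\frac{1}{(2\pi)^n}\sum_{\sigma\in\mathrm{Sym}(n)}\prod_{k=1}^ne^{-i(\sigma(k)-1)\theta_k}=\frac{1}{(2\pi)^n}\mathrm{perm}\bigl[e^{-i(j-1)\theta_k}\bigr]_{j,k=1}^n.
\end{equation*}
The ``partial-sum'' form from the statement then follows via the reindexing $\rho(k)=n+1-\sigma(k)$ upon noting that $\sum_{l=1}^{n-1}(\theta_{\rho(1)}+\cdots+\theta_{\rho(l)})=\sum_{k=1}^n(n-\rho^{-1}(k))\theta_k$.

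The hard part will be the bookkeeping of signs: one must verify that the signature $\sgn(\sigma)$ produced by the Leibniz expansion of $\det[\delta_{s_l,j-1}]$ cancels exactly against the $\sgn(\sigma)$ sitting inside $\Delta(s)$, so that $\s f$ emerges as a genuine (unsigned) indicator rather than a signed one. Without this cancellation the alternating signs would destroy the permanent structure of $g$. Beyond this, the argument is a routine application of Andre\'ief's identity, Fourier series inversion, and the derivative principle~\eqref{4.2.6}.
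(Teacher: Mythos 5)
Your approach is correct and takes a genuinely different route from the paper. The paper evaluates the ensemble average in the definition~\eqref{4.2.5} directly: it exploits the fact that for $F\equiv 1$ the Haar measure in the parametrisation of Appendix~\ref{ab} gives a flat distribution for the angles $\varphi_l$, so the Dirac delta kills the $\varphi_n$ integral and the elementary integral $\int_{-\pi}^\pi \dv\varphi/(1-re^{i(\varphi-\zeta)})=2\pi$ dispatches the remaining $\varphi_l$ integrals, leaving only the phase factors $e^{-i\zeta_\rho^{(l)}}$. You instead work from the other end: you compute $\s f$ for the known CUE eigenvalue density via Andreief's identity and Fourier orthogonality, find that $\s f(s)$ is the (unsigned) indicator of the multiset condition $\{s_1,\ldots,s_n\}=\{0,\ldots,n-1\}$, and then Fourier-invert using $\f g=\s f$ from~\eqref{4.2.5c}. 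Your sign bookkeeping does go through: writing $s_l=\sigma(l)-1$ gives $\det[\delta_{s_l,j-1}]_{j,l}=\sgn(\sigma)$ and $\Delta(s)=\sgn(\sigma)\prod_{j=0}^{n-1}j!$, so the $\sgn(\sigma)$'s cancel against each other and against the constant $\prod_{j=0}^{n-1}j!$ coming from $\phi$, leaving $\s f\equiv 1$ on the supporting permutations. Your reindexing argument to recover the partial-sum form is also correct, since $\sum_{l=1}^{n-1}\zeta_\rho^{(l)}=\sum_{j=1}^{n}(n-j)\theta_{\rho(j)}$.

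The trade-off: the paper's route is self-contained within the $\U(n)$ parametrisation framework and shows how $g$ is computed for a concrete ensemble from its defining formula, which is instructive if one wants to compute $g$ for other $F$ where $\s f$ is not a priori known. Your route is shorter for the CUE specifically because $\s f$ reduces to a very simple object (it is just the spherical transform recognising the trivial irreducible content of the Haar measure), and it avoids the Appendix~\ref{ab} parametrisation entirely; it also serves as a consistency check on the identity $\f g=\s f$. Both are valid proofs. One thing worth stating explicitly in your write-up is that uniqueness of Fourier coefficients for $L^1$ periodic functions is what lets you conclude $g$ from $\f g=\s f$; this is implicit but should be flagged since it is the logical hinge of the ``work backwards'' strategy.
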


\begin{proof}
	The angles $\varphi_l\in(-\pi,\pi]$ come with a flat measure, cf., Eq.~\eqref{haarmeasure}. Therefore, the integral over $\varphi_n$ is trivial as it evaluates the Dirac delta function yielding a factor of $1/(2\pi)$ which has been the normalisation of this integration variable. The other integrals in $\varphi_l$ are given by
	\begin{equation}
	\int_{-\pi}^\pi\frac{\dv \varphi}{1-r e^{i(\varphi-\zeta)}}=\int_{-\pi}^\pi \dv \varphi\left(1+\sum_{z=1}^\infty re^{i(\varphi-\zeta)z}\right)=2\pi
	\end{equation}
	because $r\leq1$ and the subsets described by $r=1$ are of measure zero.
	When we substitute this into~\eqref{4.2.5}, we obtain
	\begin{equation}
	\begin{split}
	g(\theta)&=\frac{1}{2\pi}\left\langle\int_{-\pi}^\pi\frac{\dv\varphi_1}{2\pi}\ldots\int_{-\pi}^\pi\frac{\dv\varphi_{n-1}}{2\pi}\prod_{l=1}^{n-1}\frac{1}{2\pi}\frac{e^{-i\zeta_{\rho}^{(l)}}}{1-r_le^{i(\varphi_l-\zeta_{\rho}^{(l)})}}\right\rangle_{\Theta,r}\\
	=&\frac{\langle 1\rangle_{\Theta,r}}{(2\pi)^{n}}\sum_{\rho\in\mathrm{S}_n}\prod_{l=1}^{n-1}e^{-i\zeta_{\rho}^{(l)}}\\
	=&\frac{1}{(2\pi)^n}\sum_{\rho\in\mathrm{S}_n}\prod_{l=1}^{n}e^{-i(n-j)\theta_{\rho(j)}},
	\end{split}
	\end{equation}
	where $\langle\cdot\rangle_{\Theta,r}$ denotes the $\Theta$ and $r_l$-integrals in the ensemble average. Clearly, $\langle 1\rangle_{\Theta,r}$ equals to $1$, and, in the last step we have extended the product to $l=n$ as $e^{-i(n-n)\theta_{\rho(n)}}=1$. The sum over $\rho$ is exactly the definition of the permanent, which concludes the proof.
\end{proof}

It is quite interesting and surprising that the multiplicative weight of a Haar distributed unitary random matrix is equal to the permanent and not the product of some weights as we have seen for P\'olya ensembles on $\herm_+(n)$. We defer the discussion of this problem to future work.

\section{CONCLUDING REMARKS}\label{s5}

In the present work, we generalised the derivative principle on $\herm(n)$ matrices to the matrix spaces $M_\nu=\{i\o(n),i\usp(2n),{\rm Mat}(n,n+\nu)\}$, $ \herm_+(n)$ and $\U(n)$. Intriguingly, the differential operators involved are also found for P\'olya ensembles~\cite{FKK17,Ki19,KFI19,KK16,KK19,KR19,ZKF19} on the respective matrix spaces. Here, we would like to point out that the theory for P\'olya ensembles on $\U(n)$ has not been done, yet, but the derivative principle suggests a way how to do it. A work considering this problem is currently in preparation.

As we have shown, each of the derivative principles uniquely link the joint eigenvalue/singular value statistics with another quantity of the random matrix for which classical multivariate probability theory applies. For the additive matrix spaces, these quantities exhibit the nice property that they follow simple additive convolutions when adding two independent random matrices. This allows for further studies like central limit theorems of invariant random matrix ensembles, but also computations of quantities like the level density or similar observables for general ensembles might be possible.

To briefly summarise our proofs, we made use of the HCIZ-type integrals (HCIZ~\eqref{2.2.3}, Berezin-Karpelevich~\eqref{2.2.6}, and Gelfand-Na\u\i mark~\eqref{2.3.3}) in those matrix spaces. In this way, we could show that the respective spherical transforms of the joint eigenvalue/singular value distribution agrees with the multivariate transforms of other matrix quantities such as the Fourier or Hankel transform of the joint probability distribution of the diagonal or pseudo-diagonal matrix entries. We believe that this method can be carried over to other HCIZ-type integrals and similar discussions can be made. Recent studies of harmonic analysis on other matrix spaces are being discussed in~\cite{KFI19}, and a future work of this paper is to generalise our idea to those matrix spaces.

In addition, in Appendix~\ref{ab} we presented a new way of parametrising unitary matrices, which is suitable for LU decompositions. This parametrisation is inspired by Hurwitz' parametrisation, see~\cite{Hu97,DF17}. In~\cite{DF17} similar parametrisations for orthogonal and symplectic groups have been given. Thus, it is also natural to ask for a generalisation of our new parametrisation to those two compact groups as it is quite likely to find derivative principles for those sets, too.


Altland and Zirnbauer~\cite{AZ01} gave a full classification of symmetric matrix symmetric spaces of Hermitian type in terms of their Cartan symbols. What we have considered in the present work are the Lie algebras $\herm(n)$ with symbol $A$, $i\o(N)$ with symbol $BD$, $i\usp(2n)$ with symbol $C$ and the flat symmetric space $\mathrm{Mat}(n,n+\nu)$ with symbol $AIII$, as well as the compact and non-compact types of the class $A$, which are exactly $\U(n)$ and $\herm_+(n)$. Therefore an intriguing question as mentioned in Section~\ref{s3.4}, is whether we could find a unified approach to all symmetric matrix spaces. 

However, here we show that not all such symmetric matrix spaces inherit a derivative principle consisting of a finite order differential operator. Let us consider the $N\times N$ real symmetric matrices with orthogonal invariance, that is, $F(X)=F(QXQ^\top)$ for all $X\in\mathrm{Sym}(N)$ (real symmetric matrices) and$Q\in\O(N)$, the $N\times N$ orthogonal group. Denote its matrix distribution as $F$ and eigenvalue distribution as $f$. Unlike the Hermitian case, the relationship between $f$ and $F$ does not contain two Vandermonde determinants but instead the absolute value of a Vandermonde determinant:
\begin{equation}
	f(x)\propto|\Delta(x)|F(x).
\end{equation}

Therefore the first thing to do is to propose a different ansatz for the expression of the derivative principle in this case. To do so we firstly define the spherical transform of $X$ as the ordinary Fourier transform for all its upper triangular entries,
\begin{equation}
\begin{split}
	\s f(s):= \f F(S)& =\int_{\R^{n(n+1)/2}}F(X)\exp(i\tr XS)\dv X.
\end{split}
\end{equation}
Integrating over the special orthogonal group $\SO(N)$ by the invariance, one obtains an alternative expression in terms of the eigenvalues namely
\begin{equation}
	\s f(s) = \int_{\R^n}f(x)\phi(x,s)\dv x
\end{equation}
where $\phi(x,s)$ is the spherical function given by
\begin{equation}
	\phi(x,s):=\int_{\SO(n)}\exp(i\tr Q\diag(x)Q^\top\diag(s))\dv\mu(Q)
\end{equation}
with $\mu$ being the Haar measure on $\SO(n)$. It can be seen from the inverse matrix Fourier transform that after performing the same integration over $\SO(n)$, one obtains the inverse spherical transform
\begin{equation}
	f(x)=\s^{-1}[\s f](x) =c_n|\Delta(x)|\int_{\R^n}|\Delta(s)|\phi(x,-s)\s f(s).
\end{equation}
where $c_n$ is an appropriate normalisation constant depending only on the dimension $n$. 

We notice that the spherical transform of $f$ is equal to the Fourier transform of $f_\diag$. Therefore the derivative principle in this case is similar to all the previous cases:
\begin{equation}\label{5.6}
\begin{split}
f(x)&=\s^{-1}\f f_\diag(x)=c_n|\Delta(x)|\int_{\R^n}|\Delta(s)|\phi(x,-s)\f f_\diag(s).
\end{split}
\end{equation}
Equation~\eqref{5.6} has a similar form to~\eqref{3.1.7}. Hence a systematic way to rewrite this derivative principle is
\begin{equation}
	f(x)=|\Delta(x)|D_x^{(n)}f_\diag(x)
\end{equation}
where $D_x^{(n)}$ is the integral operator given by the integral in~\eqref{5.6} with the Fourier transform.

However, unlike the Hermitian case where a similar integral operator can be simplified to a Vandermonde determinant of partial derivatives, in the real symmetric case this is no longer possible. The following proposition shows that when $n\ge 2$, $D_x^{(n)}$ can not be a linear differential operator of finite order.

\begin{proposition}
	Consider a random $n\times n$ real symmetric matrices with $\O(n)$-invariance with $n\ge 2$. Denote its eigenvalue distribution as $f$ and its diagonal entry distribution as $f_\diag$. Let $D_x^{(n)}$ be an operator satisfying
	\begin{equation}\label{dp_o(n)}
		f(x)=|\Delta(x)|D_x^{(n)}f_\diag(x).
	\end{equation}
	Then $D_x^{(n)}$ can not be a finite order linear differential operator, that is, it is impossible that there exist functions $c_\pi(x)$ for each multi-index $\pi=(\pi_1,\pi_2,\ldots,\pi_n), \pi_1,\ldots,\pi_n\ge 0$ and integer $p>0$ such that
	\begin{equation}
		D_x^{(n)}=\sum_{\pi_1+\ldots+\pi_N\le p} c_\pi(x)\prod_{j=1}^n\partial_{x_j}^{\pi_j}.
	\end{equation}
\end{proposition}

\begin{proof}
	We prove by contradiction using the Laguerre orthogonal ensemble for an arbitrary $n,M\in\mathbb{N}$ as~\eqref{dp_o(n)} has to be true for an arbitrary differentiable $f_{\rm diag}$. Let $Y$ be an $M\times n$ ($M\ge n$) matrix with independent real standard Gaussian entries. Then the matrix $X=Y^\top Y$ has the eigenvalue distribution
	\begin{equation}
		f(x)=\frac{1}{Z_n}\prod_{j<k}|x_k-x_j|\prod_{j=1}^nx_j^{(M-n-1)/2}e^{-x_j/2}
	\end{equation}
	with the normalisation constant (see e.g.~\cite{Fo10})
	\begin{equation}
	\frac{1}{Z_n}=\prod_{j=1}^n\frac{\Gamma(3/2)}{2^{M/2}\Gamma((j+3)/2)\Gamma((j+M-n)/2)}.
	\end{equation}
	Also the diagonal entries are independent $\chi^2$-distributions with parameter $M$, that is
	\begin{equation}
		f_\diag(x)=\frac{1}{2^{Mn/2}\Gamma(M/2)^n}\prod_{j=1}^nx_j^{M/2-1}e^{-x_j/2}.
	\end{equation}
	
	Assuming that there exists a linear differential operator $D_x^{(N)}$ such that~\eqref{dp_o(n)} satisfies, then we should have
	\begin{equation}
	\frac{1}{Z_n}\prod_{j=1}^nx^{(M-n-1)/2}e^{-x_j/2} = \frac{1}{2^{Mn/2}\Gamma(M/2)^n}D_x^{(N)}\prod_{j=1}^nx_j^{M/2-1}e^{-x_j/2}.
	\end{equation}
	By Rodrigues' formula, the action of the partial derivatives $\partial_{x_j}^{\pi_j}$ on the Laguerre weight $x_j^{M/2-1}e^{-x_j/2}$ gives a polynomial $p_{\pi(j)}$ in both $x_j$ and $M$, and a prefactor $x^{-\pi(j)}$. Therefore,
	\begin{align}
	&\frac{1}{Z_n}\prod_{j=1}^nx^{(M-n-1)/2}e^{-x_j/2} = \frac{1}{2^{Mn/2}\Gamma(M/2)^n}\sum_{\pi_1+\ldots+\pi_N\le p} c_\pi(x)\prod_{j=1}^np_{\pi(j)}(x_j,M)x_j^{M/2-1-\pi(j)}e^{-x_j/2}\\
	\Rightarrow & \prod_{j=1}^n\frac{\Gamma(3/2)\Gamma(M/2)}{\Gamma((j+3)/2)\Gamma((M-n+j)/2)}= \sum_{\pi_1+\ldots+\pi_N\le p} c_\pi(x)\prod_{j=1}^np_{\pi(j)}(x_j,M)x_j^{n/2-1/2-\pi(j)}.\label{5.13}
	\end{align}
	Now clearly the right hand side of~\eqref{5.13} is a polynomial in $M$. However for the left hand side, for $n-j\ge 1$ each function $\Gamma(M/2)/\Gamma((M-n+j)/2)$ has a singularity at $M=0$, which is in a contradiction to the equation~\eqref{5.13}.
\end{proof}

\section*{Acknowledgements}
JZ acknowledges the support of a Melbourne postgraduate award, and an ACEMS top-up scholarship. MK acknowledges financial support from the Australian Research Council of the Discovery Project grant DP210102887. We are grateful for fruitful discussions with Peter Forrester and Shi-Hao Li.

\bibliography{DPbib.bib}{}
\bibliographystyle{plain}
\

\newpage

\appendix
\numberwithin{theorem}{section}

\section{Some proofs for Sec.~\ref{s2}}\label{aa}

\begin{proof}(Proof of Proposition~\ref{p2.2.2})

Without loss of generality we can simply consider the $n=1$ case as each individual integration in~\eqref{2.2.8a} for one diagonal element is carried out in exactly the same way.
 Essentially, we have to consider the integral
	\begin{equation}
	\h^{-1}_\nu  [\f \tilde{f}(2\sqrt{s})](x)=\int_0^\infty\dv s J_\nu(2\sqrt{xs})(xs)^{\nu/2}\int_{-\infty}^\infty\dv \lambda\,\tilde{f}(\lambda)e^{2i\lambda\sqrt{s}}.
	\end{equation}
	After changing the integration variable $s=u^2/4$ and introducing a regularisation $e^{-\varepsilon u}$ to guarantee Fubini's theorem and the interchange of variables, we obtain the following expression
	\begin{equation}
	\begin{split}
	\h^{-1}_\nu  [\f \tilde{f}(2\sqrt{s})](x)&=\frac{x^\nu}{2^{\nu+1}}\lim_{\varepsilon\rightarrow 0}\int_0^\infty\dv u\, J_\nu(\sqrt{x}u)u^{\nu+1}e^{-\varepsilon u}\int_{-\infty}^\infty\dv \lambda\,\tilde{f}(\lambda)e^{i\lambda u}
	\\
	&=\frac{x^\nu}{2^{\nu+1}}\lim_{\varepsilon\rightarrow 0}\int_{-\infty}^\infty\dv \lambda\,\int_0^\infty\dv u\, J_\nu(\sqrt{x}u)u^{\nu+1}\tilde{f}(\lambda)e^{-(\varepsilon-i\lambda)u}.
	\end{split}
	\end{equation}
	For the $u$-integral we employ the integral formula~\cite[\S 8.6 eqn.(4)]{Ba55}
	\begin{equation}
	\int_0^\infty\dv u\, J_\nu(\sqrt{x}u)u^{\nu+1}e^{-(\varepsilon-i\lambda)u}=\frac{2^{\nu+1}\Gamma(\nu+3/2)}{\sqrt{\pi}}\frac{x^{\nu/2}(\varepsilon-i\lambda)}{((\varepsilon-i\lambda)^2+x)^{\nu+3/2}}
	\end{equation}		
	 which can be applied to obtain
	\begin{equation}
	\begin{split}
	\h^{-1}_\nu [\f \tilde{f}(2\sqrt{s})](x)&=\frac{\Gamma(\nu+3/2)x^{\nu}}{\pi^{1/2}}\lim_{\varepsilon\rightarrow 0}\int_{-\infty}^\infty \dv \lambda\, \tilde f(\lambda)\frac{\varepsilon-i\lambda}{((\varepsilon-i\lambda)^2+x)^{\nu+3/2}}\\
	&=\frac{x^\nu}{2^{\nu+1}}\lim_{\varepsilon\rightarrow 0}\int_{-\infty}^{\infty}\dv \lambda\,\frac{\varepsilon-i\lambda}{\sqrt{(\varepsilon-i\lambda)^2+x}} (\lambda^{-1}\partial_\lambda)^{\nu+1} \tilde{f}(\lambda).
	\end{split}
	\end{equation}
	In the second line we have integrated by parts and omitted terms of order $\varepsilon $ and smaller. The $\varepsilon$-dependence in the factors $1/(\varepsilon-i\lambda)$ in combination with the derivatives can be neglected due to the assumed integrability and differentiability condition of the function $\tilde{f}$.
	
	The limit $\varepsilon\to0$ can be carried out almost everywhere pointwise via Lebesgue's dominated convergence theorem as the integrand is bounded by the integrable function $| (\lambda^{-1}\partial_\lambda)^{\nu+1} \tilde{f}(\lambda)|\,|\lambda|/\sqrt{|\lambda^2-x|}$ for each $\varepsilon>0$ and almost all $x>0$. We employ
	\begin{equation}
	\begin{split}
	\lim_{\varepsilon\to0}\frac{\varepsilon-i\lambda}{\sqrt{(\varepsilon-i\lambda)^2+x}}=\frac{|\lambda|}{\sqrt{\lambda^2-x}}H(\lambda^2-x)-i\frac{\lambda}{\sqrt{x-\lambda^2}}H(x-\lambda^2)
	\end{split}
	\end{equation}
	with $H$ the Heaviside step function which is $1$ for positive arguments and otherwise vanishes. Since the function $\tilde{f}$ is also symmetric about the origin, one can simplify the integral to
	\begin{equation}
	\begin{split}
	\h^{-1}_\nu  [\f \tilde{f}(2\sqrt{s})](x)=&\frac{x^{\nu}}{2^{\nu+1}}\bigg(2\int_{\sqrt{x}}^\infty \dv \lambda\, \frac{\lambda}{\sqrt{\lambda^2-x}}(\lambda^{-1}\partial_\lambda)^{\nu+1}\tilde{f}(\lambda)-i\int_{-\sqrt{x}}^{\sqrt{x}}\dv \lambda\, \frac{\lambda}{\sqrt{x-\lambda^2}}(\lambda^{-1}\partial_\lambda)^{\nu+1}\tilde{f}(\lambda)\bigg),
	\end{split}
	\end{equation}
	where the second integral vanishes because of its symmetries. Upon changing $\lambda^2=y_j$, Proposition~\ref{p2.2.2} eventually results.
\end{proof}

\begin{proof}(Proof of Proposition~\ref{p2.2.1})

	In order to show the inverse transform one needs to introduce anew a regularisation $\prod_{j=1}^ne^{-\varepsilon s_j}$ in the limit $\varepsilon\rightarrow 0^+$  so that we can interchange the integrals. For $n=2m$, by Fubini's theorem the integral in~\eqref{2.2.10} is written as
	\begin{equation}
	\begin{split}\label{3.2.9b}
	\s^{-1}[\s f](x)=&\frac{1}{(n!)^2}\lim_{\varepsilon\rightarrow 0^+}\int_{\R^n_+} \dv y \,f(y)\frac{\Delta(x)}{\Delta(y)}\prod_{j=1}^n\left(\frac{x_j}{y_j}\right)^{\nu/2}\\
	&\times \int_{\R^n_+}\dv s \det\left[J_\nu(2\sqrt{x_js_k}) \right]_{j,k=1}^n\det\left[J_\nu(2\sqrt{y_js_k}) \right]_{j,k=1}^n\prod_{j=1}^ne^{-\varepsilon s_j}
	.
	\end{split}
	\end{equation}
	To evaluate the $s$-integral, we push the regularisation in one of the determinants and then apply Andreief's formula to obtain the following.
	\begin{equation}
	\begin{split}
	&\int_{\R^n_+}\dv s \det\left[J_\nu(2\sqrt{y_js_k}) \right]_{j,k=1}^n\det\left[J_\nu(\sqrt{x_js_k})e^{-\varepsilon s_j} \right]_{j,k=1}^n
	\\
	=&n!\det\left[\int_{\R_+}\dv sJ_\nu(2\sqrt{x_js})J_\nu(2\sqrt{y_ks})e^{-\varepsilon s}
	\right]_{j,k=1}^n.
	\end{split}
	\end{equation}
	
	This new determinant can be again understood via a Berezin-Karpelevich integral~\eqref{2.2.6} (see also~\cite[Thm. 2.13 Case (3)]{FKK17}) so that we have
	\begin{equation}
	\begin{split}
	\s^{-1} [\s f](x)=&\frac{1}{n!C_\nu}\lim_{\varepsilon\rightarrow 0^+}\int_{\R^n_+} \dv y \,f(y)\Delta^2(x)\left(\prod_{j=1}^nx_j^{\nu}\right)\\
	&\times\frac{1}{\epsilon^{n(n+\nu)}}\int_{K_\nu}d\mu(K)\exp\left[-\frac{1}{2\varepsilon}\tr(\iota(\sqrt{x})-K\iota(\sqrt{y})K^{-1})^2\right]\\
	=&\frac{1}{n!C_\nu}\Delta^2(x)\left(\prod_{j=1}^nx_j^{\nu}\right)\lim_{\varepsilon\rightarrow 0^+}
	\frac{1}{\epsilon^{n(n+\nu)}}\int_{M_\nu}dX F(X)\exp\left[-\frac{1}{2\varepsilon}\tr(\iota(\sqrt{x})-X)^2\right].
	\end{split}
	\end{equation}
	To derive this result one has to note that $\tr(\iota(\sqrt{x}))^2=2\sum_{j=1}^nx_j$ for all matrix spaces $M_\nu$ and identify the joint probability density $f$ of the squared singular values with a probability density $F$ on $M_\nu$, see~\eqref{1.1.10}. Finally we shift $X$ by $\iota(\sqrt{x})$ and afterwards rescale it by $\varepsilon$. As $F\in L^1(M_\nu)$, the limit $\varepsilon\to0$ can be carried out for almost all $x\in\mathbb{R}_+^n$. Thus we arrive at
	\begin{equation}
	\s^{-1} [\s f](x)=\frac{\pi^{n(n+\nu)}}{n!C_\nu}\Delta^2(x)\left(\prod_{j=1}^nx_j^{\nu}\right) F(\iota(x))=f(x).
	\end{equation}
	This closes the proof.
\end{proof}

\section{Parametrising unitary matrices}\label{ab}

In 1897 Hurwitz~\cite{Hu97} gave a full parametrisation of the unitary groups by generalising the well-known Euler angles for $\SO(3)$ matrices. This parametrisation is summarised in~\cite[\S 2.3.1]{Fo10} and~\cite{DF17}. We will introduce a similar parametrisation for $\U(n)$, under which the diagonal entries of the $U$ matrix in its LU decomposition are factorised with the other parameters, described in proposition~\ref{pb.4}.

\begin{proposition}[Parametrisation for $\U(n)$]\label{pb.1}
	An $\U(n)$ matrix $V_n$ with $n>1$ has an iterative parametrisation specified by
	\begin{equation}\label{b.1}
	V_{n}=\begin{bmatrix}
	V_{n-1}&\\&1
	\end{bmatrix}H_n,\quad H_n=\Phi_{1,n}\Phi_{2,n}\ldots\Phi_{n-1,n},
	\end{equation}
	where $V_{n-1}\in\U(n)$. The $n\times n$ matrix $\Phi_{j,n}$ is given by
	\begin{equation}\label{b.2}
	\Phi_{j,n}=\begin{bmatrix}
	I_{j-1}	&													\\
	&\cos\phi_{j,n}	&			&	e^{i\psi_{j,n}}\sin\phi_{j,n}\\
	&					&I_{n-j-1}	&					\\
	&- e^{-i\psi_{j,n}}\sin\phi_{j,n}	&			&\cos\phi_{j,n}
	\end{bmatrix}
	\end{equation}
	 for $1<j<n$
	and
	\begin{equation}
	\Phi_{1,n}=\begin{bmatrix}
	e^{i\alpha_{n}}\cos\phi_{1,n} 	&			&e^{i\psi_{1,n}}\sin\phi_{1,n}	\\
	&I_{n-2}	&					\\
	-e^{-i\psi_{1,n}}\sin\phi_{1,n} 	&			&e^{-i\alpha_{n}}\cos\phi_{1,n}
	\end{bmatrix}.
	\end{equation}
	All empty entries are filled with $0$ and $I_{j}$ is the $j$th dimensional identity matrix.
	The non-zero entries comprising the angles $\phi_{j,n}$ are at positions $(j,j), (j,n), (n,j)$ and $(n,n)$. The ranges of the angles $\alpha_n,\psi_{j,n}$ and $\phi_{j,n}$ are given by
	\begin{equation}\label{b.3}
	-\pi\le \alpha_{n},\psi_{j,n}< \pi,\quad 0\le \phi_{j,n}\leq\pi/2,\quad (1\le j< n).
	\end{equation}
	For $n=1$, one can choose the standard parametrisation $V_1=e^{i\alpha_1}$ with $-\pi\leq\alpha_1< \pi$.
\end{proposition}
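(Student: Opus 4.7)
The plan is to proceed by induction on $n$. The base case $n=1$ is given directly by the definition of $\U(1)$. For the inductive step, assume the parametrisation has been established for $\U(n-1)$, and let $V_n\in\U(n)$ be arbitrary.

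The key reduction is that~\eqref{b.1} holds with $V_{n-1}\in\U(n-1)$ if and only if $V_nH_n^{-1}$ has block form $\diag(V_{n-1},1)$. Since $V_nH_n^{-1}$ is unitary, this is equivalent to merely requiring that its last row equals $(0,\ldots,0,1)$: unitarity then forces the last column to be $(0,\ldots,0,1)^T$ and the top-left $(n-1)\times(n-1)$ block to lie in $\U(n-1)$. Thus the inductive step reduces to showing that, for every unit vector $v=(v_1,\ldots,v_n)\in\C^n$, one can choose parameters $\phi_{j,n}\in[0,\pi/2]$ and $\psi_{j,n},\alpha_n\in[-\pi,\pi)$ such that the last row of $H_n=\Phi_{1,n}\Phi_{2,n}\cdots\Phi_{n-1,n}$ equals $v$.

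I would compute this last row explicitly by multiplying $e_n^T$ by the $\Phi_{j,n}$'s from left to right. Since each $\Phi_{j,n}$ acts nontrivially only on coordinates $j$ and $n$, after $\Phi_{1,n}$ positions $1$ and $n$ carry $-e^{-i\psi_{1,n}}\sin\phi_{1,n}$ and $e^{-i\alpha_n}\cos\phi_{1,n}$, respectively, while each subsequent $\Phi_{j,n}$ transfers a fraction $\sin\phi_{j,n}$ of the current $n$-th entry into position $j$ (picking up an additional phase $-e^{-i\psi_{j,n}}$) and rescales the $n$-th entry by $\cos\phi_{j,n}$. The resulting entries are
\begin{equation*}
(H_n)_{n,1}=-e^{-i\psi_{1,n}}\sin\phi_{1,n},\quad (H_n)_{n,j}=-e^{-i(\alpha_n+\psi_{j,n})}\Big(\prod_{k=1}^{j-1}\cos\phi_{k,n}\Big)\sin\phi_{j,n}\ \ (2\le j<n),
\end{equation*}
\begin{equation*}
(H_n)_{n,n}=e^{-i\alpha_n}\prod_{k=1}^{n-1}\cos\phi_{k,n}.
\end{equation*}
Matching moduli against $|v_j|$ determines each $\phi_{j,n}$ recursively via $\sin\phi_{j,n}=|v_j|/\prod_{k<j}\cos\phi_{k,n}$; this is well defined in $[0,\pi/2]$ because the unit-norm identity $\sum_j|v_j|^2=1$ telescopes so that the argument of the arcsine lies in $[0,1]$ at every step. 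The phases $\psi_{j,n}$ and $\alpha_n$ are then read off from $\arg(v_j)$ and $\arg(v_n)$ in the generic case. Once $V_nH_n^{-1}=\diag(V_{n-1},1)$ is achieved, the inductive hypothesis parametrises $V_{n-1}$, completing the induction.

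The main obstacle will be the degenerate configurations where some $v_j=0$, so $\sin\phi_{j,n}$ or $\cos\phi_{k,n}$ vanishes and the associated phase is not uniquely pinned down. These form a measure-zero subset of $\U(n)$ and can be handled by choosing the undetermined phases arbitrarily within the stated ranges; this is the same coordinate-chart singularity one encounters with the Euler-angle parametrisation of $\SO(3)$, and it does not obstruct the existence statement that is the substance of the proposition.
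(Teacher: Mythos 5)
Your proof is correct and in essence is the paper's own argument: the reduction of the inductive step to showing that the last row of $H_n$ can realize any unit vector in $\C^n$, with unitarity then forcing $V_nH_n^{-1}=\diag(V_{n-1},1)$, is exactly the paper's strategy. The only presentational difference is that you pre-compute the explicit last-row formula (which the paper instead isolates as its Lemma~\ref{pb.2} and proves by a recursion on the partial products $\Phi_{1,n}\cdots\Phi_{m,n}$) and determine the angles by matching moduli forward from $j=1$, whereas the paper reduces the given last row of $V_n$ to $(0,\ldots,0,1)$ by applying the $\Phi_{j,n}^\dagger$'s right-to-left — the same argument read in opposite directions.
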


\begin{proof}
	
	Denote $V_{j,k}$ the $(j,k)$-th entry of the unitary matrix $V_n$. The proposition essentially states that we can iteratively reduce the last row of $V_n$ from the generic form $\vec v_n^{\,(1)}=(V_{n,1},\ldots,V_{n,n-1},V_{n,n}^{(1)})$ to $\vec v_n^{\,(2)}=(V_{n,1},\ldots,V_{n,n-2},0,V_{n,n}^{(2)})$, then to $\vec v_n^{\,(3)}=(V_{n,1},\ldots,V_{n,n-3},0,0,V_{n,n}^{(3)})$ and so forth till we arrive at $\vec v_n^{\,(n)}=(0,\ldots,0,V_{n,n}^{(n)})$ where $V_{n,n}^{(1)}=V_{n,n}$ and $V_{n,n}^{(n)}=1$. For $n=4$, one can sketch it in the following diagram:
	\begin{equation}
	V_4=\begin{bmatrix}
	*&*&*&*\\
	*&*&*&*\\
	*&*&*&*\\
	*&*&*&*
	\end{bmatrix}\xrightarrow{\displaystyle\Phi_{3,4}^\dagger}\begin{bmatrix}
	*&*&*&*\\
	*&*&*&*\\
	*&*&*&*\\
	*&*&0&*
	\end{bmatrix}\xrightarrow{\displaystyle\Phi_{2,4}^\dagger}\begin{bmatrix}
	*&*&*&*\\
	*&*&*&*\\
	*&*&*&*\\
	*&0&0&*
	\end{bmatrix}\xrightarrow{\displaystyle\Phi_{1,4}^\dagger}\begin{bmatrix}
	*&*&*&*\\
	*&*&*&*\\
	*&*&*&*\\
	0&0&0&*
	\end{bmatrix}.
	\end{equation}
	Here, we have to take into account that $\vec v_n^{\,(1)}$ describes a $(2n-1)$-dimensional sphere, i.e., $\vec v_n^{\,(1)}(\vec v_n^{\,(1)})^\dagger=1$. Since the matrices $\Phi_{j,n}$ shall be unitary themselves also the new vectors $\vec v_n^{\,(j)}$ with $j=1,\ldots,n-1$ are normalised though they only describe $(2n-2j+1)$-dimensional spheres.
	
	 The idea is to consider the two dimensional complex vectors $(V_{n,n-j},V^{(j)})\in\mathbb{C}^{2}$ which describes a real four dimensional unit ball because of the normalisation $\vec v_n^{\,(j)}(\vec v_n^{\,(j)})^\dagger=1$. Let us start with $1<j<n-1$. The non-trivial part of the matrix $\Phi_{n-j,n}$ reduces this four-dimensional ball to a two-dimensional one given by $V^{(j+1)}$, especially, we have
	 \begin{equation}
	 (V_{n,n-j},V^{(j)})=(0,V^{(j+1)})\left[\begin{array}{cc} \cos\phi_{n-j,n} & e^{i\psi_{n-j,n}} \sin\phi_{n-j,n} \\ -e^{-i\psi_{n-j,n}} \sin\phi_{n-j,n} & \cos\phi_{n-j,n} \end{array}\right].
	 \end{equation}
	 Explicitly this means
	 \begin{equation}\label{V-para}
	 V_{n,n-j}=- e^{-i\psi_{n-j,n}}\sin\phi_{n-j,n}\ V^{(j+1)}\quad {\rm and}\quad V^{(j)}= \cos\phi_{n-j,n}\ V^{(j+1)}.
	 \end{equation}
	The norm of $(V_{n,n-j},V^{(j)})$ carries over to the amplitude of $V^{(j+1)}$. We are even able to fix the complex phase of $V^{(j+1)}$ by choosing it the same of $V^{(j)}$ so that the angle $\phi_{n-j,n}$ only runs over the interval $0\leq \phi_{n-j,n}\leq \pi/2$. The independent complex phase of $V_{n,n-j}$ is taken care of by the angle $\psi_{n-j,n}$ and since it can be any value on the complex unit circle also the angle $\psi_{n-j,n}$ takes any value from $-\pi$ to $\pi$. The embedding of this transformation in the $n\times n$-dimensional matrix $\Phi_{n-j,n}\in\U(n)$ becomes clear on which columns this unitary matrix has to act and that the action on the other columns has to be via the identity as they have to be kept unchanged. Although $\Phi_{n-j,n}$ also acts on the other rows and not only on the last one, its action can be absorbed there as their orthogonality conditions with the last line as well as with each other are not affected.
	
	In the last step when $j=1$, the vector $(V_{n,1},V^{(n-1)})\in\mathbb{C}^{2}$ is essentially a three-dimensional sphere. It is well known that it can be parametrised by $(-e^{-i\psi_{1,n}}\sin\phi_{1,n},^{-i\alpha_{n}}\cos\phi_{1,n})$ with $-\pi\leq \psi_{1,n},\alpha_{n}<\pi$ describing the two independent complex phases and $0\leq\phi_{1,n}\leq\pi/2$ which give the amplitude of the two components of this vector. This two dimensional vector are the non-trivial components of the last row of $\Phi_{1,n}\in\U(n)$. The argument why $\Phi_{1,n}$ can be absorbed in the other rows is the same as for the case $j>1$.
	
	Once the last row of $V_n$ has been transformed to $(0,\ldots,0,1)$ the unitarity of the remaining matrix $V_nH_N^\dagger$, particularly the orthogonality of the last row with the others, implies that also the last column of $V_nH_N^\dagger$ is zero everywhere except for its last entry. This finishes the proof.
\end{proof}

As an example, we would like to explicitly present the case $V_3\in\U(3)$. Our decomposition reads
\begin{equation}
\begin{split}
V_3&=\begin{bmatrix}
e^{i\alpha_{1}}	& 0&0
\\
0&1&0
\\
0&0&1
\end{bmatrix}\begin{bmatrix}
\cos\phi_{1,2} e^{i\alpha_{2}}	& \sin\phi_{1,2}e^{i\psi_{1,2}}&0
\\
-\sin\phi_{1,2} e^{-i\psi_{1,2}}&\cos\phi_{1,2}e^{-i\alpha_{2}}&0
\\
0&0&1
\end{bmatrix}
H_3,
\\
H_3&=\begin{bmatrix}
\cos\phi_{1,3} e^{i\alpha_{3}}	& 0& \sin\phi_{1,3}e^{i\psi_{1,3}}
\\
0&1&0
\\
-\sin\phi_{1,3} e^{-i\psi_{1,3}}& 0&\cos\phi_{1,3}e^{-i\alpha_{3}}
\end{bmatrix}
\begin{bmatrix}
1&0&0
\\
0&\cos\phi_{2,3}	& \sin\phi_{2,3}e^{i\psi_{2,3}}
\\
0&-\sin\phi_{2,3} e^{-i\psi_{2,3}}&\cos\phi_{2,3}
\end{bmatrix}.
\end{split}
\end{equation}
One can check that $H_3$ has an upper triangular $2\times 2$ top left block, compatible with Lemma~\ref{pb.2}. This is the essential property of this parametrisation, allowing us to compute its LU decomposition using Proposition~\ref{pb.4}

It is also important to compare our parametrisation with Hurwitz' parametrisation, which is based on the following Euler's parametrisation for $\U(3)$:
\begin{equation}
\begin{split}
V_3&=
\begin{bmatrix}
e^{i\alpha_{1}}	& 0&0
\\
0&1&0
\\
0&0&1
\end{bmatrix}\begin{bmatrix}
\cos\phi_{1,2} e^{i\alpha_{2}}	& \sin\phi_{1,2}e^{i\psi_{1,2}}&0
\\
-\sin\phi_{1,2} e^{-i\psi_{1,2}}&\cos\phi_{1,2}e^{-i\alpha_{2}}&0
\\
0&0&1
\end{bmatrix}
E_3,
\\
E_3&=\begin{bmatrix}
1&0&0
\\
0&\cos\phi_{2,3}	& \sin\phi_{2,3}e^{i\psi_{2,3}}
\\
0&-\sin\phi_{2,3} e^{-i\psi_{2,3}}&\cos\phi_{2,3}
\end{bmatrix}
\begin{bmatrix}
\cos\phi_{1,3} e^{i\alpha_{3}}	& \sin\phi_{1,3}e^{i\psi_{1,2}}&0
\\
-\sin\phi_{1,3} e^{-i\psi_{1,2}}&\cos\phi_{1,3}e^{-i\alpha_{3}}&0
\\
0&0&1
\end{bmatrix}.
\end{split}
\end{equation}

The main difference between its Hurwitz' generalisation and our parametrisation is that the angles are in the block given by the $j$-th and $(j+1)$-th rows and columns in Hurwitz' case. In our choice those angles are associated with the $j$-th and $k$-th rows and columns instead.

We carry out the product $H_n=\Phi_{1,n}\Phi_{2,n}\ldots\Phi_{n-1,n}$, and the next lemma states what the matrix entries of $H_n$ explicitly are. In particular, the top-left $(n-1)\times(n-1)$ block of $H_n$ has an upper triangular form.

\begin{lemma}\label{pb.2}
	For $n>1$, let $H_n$ be defined as in~\eqref{2.3.1}. Its matrix entries $h_{j,k}^{(n)}$ with $j,k=1,\ldots,n$ are
	\begin{equation}\label{b.7a}
	h_{j,k}^{(n)}=\begin{cases}
	0, &k<j<n,\\
	e^{i\alpha_n}\cos\phi_{1,n} , &1=j=k,\\
	\cos\phi_{j,n},& 1<j=k<n,\\
	-e^{i(\psi_{j,n}-\psi_{k,n})}\sin\phi_{k,n}\sin\phi_{j,n}\prod_{l=j+1}^{k-1}\cos\phi_{l,n}, &j<k<n,\\
	e^{i\psi_{j,n}}\sin\phi_{j,n}\prod_{l=j+1}^{n-1}\cos\phi_{l,n}, &j<k=n,
	\end{cases}
	\end{equation}
	\begin{equation}\label{b.7}
	h_{n,k}^{(n)}=\begin{cases}
	-e^{-i\psi_{1,n}}\sin\phi_{1,n},& k=1,\\
	-e^{-i(\psi_{k,n}+\alpha_n)}\sin\phi_{k,n}\prod_{l=1}^{k-1}\cos\phi_{l,n},&1<k<n,\\
	e^{-i\alpha_n}\prod_{l=1}^{n-1}\cos\phi_{l,n},&k=n.
	\end{cases}
	\end{equation}
\end{lemma}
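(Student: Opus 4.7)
The natural approach is a straightforward downward induction, multiplying the factors in $H_n=\Phi_{1,n}\Phi_{2,n}\cdots\Phi_{n-1,n}$ one at a time, starting from the right. For each $1\le m\le n-1$ I would define the partial product
\begin{equation*}
P^{(m)}:=\Phi_{m,n}\Phi_{m+1,n}\cdots\Phi_{n-1,n}, \qquad P^{(n)}:=I_n,
\end{equation*}
so that $H_n=P^{(1)}$. The key observation driving the induction is that, for $1<j<n$, the matrix $\Phi_{j,n}$ agrees with $I_n$ except on rows $j$ and $n$; consequently left-multiplication by $\Phi_{j,n}$ alters only those two rows of the matrix it acts on, leaving all other rows untouched. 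This means that once row $j$ is written down at stage $P^{(j)}$ it is frozen thereafter, while row $n$ keeps being updated at every stage.

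The precise claim I would prove by downward induction on $m$ (from $m=n-1$ down to $m=2$) is: rows $1,\dots,m-1$ of $P^{(m)}$ are the standard basis vectors $e_1,\dots,e_{m-1}$; for each $m\le j\le n-1$ the $j$-th row of $P^{(m)}$ already coincides with the formula \eqref{b.7a} (which involves only the angles $\phi_{l,n},\psi_{l,n}$ with $l\ge j$); and the $n$-th row of $P^{(m)}$ has the explicit form
\begin{equation*}
\bigl(P^{(m)}\bigr)_{n,k}=\begin{cases}0,& k<m,\\ -e^{-i\psi_{k,n}}\sin\phi_{k,n}\prod_{l=m}^{k-1}\cos\phi_{l,n},& m\le k<n,\\ \prod_{l=m}^{n-1}\cos\phi_{l,n},& k=n,\end{cases}
\end{equation*}
(with the convention that empty products equal $1$, so the $m=k$ case reproduces $-e^{-i\psi_{m,n}}\sin\phi_{m,n}$). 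The base case $m=n-1$ is just reading off $\Phi_{n-1,n}$. For the inductive step, since $P^{(m)}=\Phi_{m,n}P^{(m+1)}$ touches only rows $m$ and $n$, one has
\begin{equation*}
\text{(row }m\text{ of }P^{(m)}\text{)}=\cos\phi_{m,n}\,e_m+e^{i\psi_{m,n}}\sin\phi_{m,n}\cdot\text{(row }n\text{ of }P^{(m+1)}\text{)},
\end{equation*}
\begin{equation*}
\text{(row }n\text{ of }P^{(m)}\text{)}=-e^{-i\psi_{m,n}}\sin\phi_{m,n}\,e_m+\cos\phi_{m,n}\cdot\text{(row }n\text{ of }P^{(m+1)}\text{)},
\end{equation*}
after which one substitutes the inductive expression for row $n$ of $P^{(m+1)}$ and reads off entries, verifying both \eqref{b.7a} for the new row $m$ and the prescribed form for the new row $n$.

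The final step is to compute $H_n=\Phi_{1,n}P^{(2)}$; because $\Phi_{1,n}$ differs from the generic $\Phi_{j,n}$ only by the phase $e^{\pm i\alpha_n}$ placed at the $(1,1)$ and $(n,n)$ slots, this multiplication modifies just rows $1$ and $n$ of $P^{(2)}$ by the same two-line recursion as above, with $\cos\phi_{1,n}$ replaced by $e^{i\alpha_n}\cos\phi_{1,n}$ (respectively $e^{-i\alpha_n}\cos\phi_{1,n}$) in the appropriate positions. Plugging in the description of row $n$ of $P^{(2)}$ yields the $\alpha_n$-dependent entries $h_{1,1}^{(n)}$, $h_{1,k}^{(n)}$, $h_{1,n}^{(n)}$ and the whole last row \eqref{b.7}, which completes the derivation.

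There is no real conceptual obstacle: the proof is pure bookkeeping made transparent by the two-row action of each $\Phi_{j,n}$. The only mildly delicate point, and what I would be most careful about, is tracking the cosine product $\prod_{l=m}^{k-1}\cos\phi_{l,n}$ in row $n$ correctly through the induction — in particular checking that when $\Phi_{1,n}$ is applied and its $e^{-i\alpha_n}\cos\phi_{1,n}$ factor multiplies row $n$ of $P^{(2)}$, the range of the cosine product in \eqref{b.7} shifts from $l=2,\ldots$ up to $l=1,\ldots$, matching the stated formula exactly.
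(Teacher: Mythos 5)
Your proof is correct and amounts to the same direct computation the paper carries out: iterate the multiplication of the factors $\Phi_{j,n}$, exploiting that each one is the identity off a single $2\times2$ block so that only two lines of the intermediate product change at each step. The only difference is organizational — you accumulate partial products from the right and track rows, whereas the paper sets $H_n^{(m)}=\Phi_{1,n}\cdots\Phi_{m,n}$ and tracks columns — and both inductions verify the same entries.
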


\begin{proof}
	The statement for the last row $(h_{n,1}^{(n)},\ldots,h_{n,n}^{(n)})$ follows from the iteration~\eqref{V-para}. Let $H_n^{(m)}=\Phi_{1,n}\Phi_{2,n}\ldots\Phi_{m,n}=\{h_{j,k}^{(m)}\}_{j,k=1,\ldots,n}$ for $m\geq1$. Then, we have indeed the recursion
	\begin{equation}
	\begin{split}
	h_{n,k}^{(m+1)}=&h_{n,k}^{(m)},\qquad {\rm for}\ k\neq m+1,n,\\
	h_{n,m+1}^{(m+1)}=&-e^{i\psi_{m+1,n}}\sin\phi_{m+1,n}\ h_{n,n}^{(m)},\\
	h_{n,n}^{(m+1)}=&\cos\phi_{m+1,n}\ h_{n,n}^{(m)},
	\end{split}
	\end{equation}
	as the initial condition is $(h_{n,1}^{(1)},\ldots,h_{n,n}^{(1)})=(-e^{-i\psi_{1,n}}\sin \phi_{1,n},0,\ldots,0,e^{-i\alpha_{n}}\cos \phi_{1,n})$. In this way, the final row will be equal to
	\begin{equation}
	(h_{n,1}^{(n)},\ldots,h_{n,n}^{(n)})=(-e^{-i\psi_{1,n}}\sin \phi_{1,n},-e^{i\psi_{2,n}}\sin\phi_{2,n}\ h_{n,n}^{(2)},\ldots,-e^{i\psi_{n,n}}\sin\phi_{n,n}\ h_{n,n}^{(n-1)},h_{n,n}^{(n)}).
	\end{equation}
	The recursion for $h_{n,n}^{(m)}$ can be explicitly expressed in terms of the product shown in~\eqref{b.7}.
	
	A similar computation can be performed for the other rows of $H_n$. For this purpose, it is helpful to notice that
	the $j$th row stays the Kronecker symbol $h_{j,k}^{(l)}=\delta_{j,k}$ as long as $l<j$, as the matrices $\Phi_{l,n}$ act like the identity on this vector. Once $l=j$, we can set the row vector equal to the starting initial condition which is $(h_{1,1}^{(1)},\ldots,h_{1,n}^{(1)})=(e^{i\alpha_{n}}\cos \phi_{1,n},0,\ldots,0,e^{i\psi_{1,n}}\sin \phi_{1,n})$ for $j=1$ and $(h_{j,1}^{(1)},\ldots,h_{j,n}^{(1)})=(\cos \phi_{j,n},0,\ldots,0,e^{i\psi_{j,n}}\sin \phi_{j,n})$ for $1<j<n$. Then, we can write the recursion relations
	\begin{equation}
	\begin{split}
	h_{j,k}^{(m+1)}=&h_{j,k}^{(m)},\qquad {\rm for}\ k\neq m+1,n,\\
	h_{j,m+1}^{(m+1)}=&-e^{i\psi_{m+1,n}}\sin\phi_{m+1,n}\ h_{j,n}^{(m)},\\
	h_{j,n}^{(m+1)}=&\cos\phi_{m+1,n}\ h_{j,n}^{(m)},
	\end{split}
	\end{equation}
	for $m\geq j$ for the $j$th row of $H_n^{(m)}$. This means that $h_{j,k}^{(m)}$ stays unchanged for $k<j<n$ and remains zero the whole time. The recursions can be again traced back to the one for $h_{j,n}^{(m)}$. This yields~\eqref{b.7a} after resolving the one which anew is a simple product.
\end{proof}

For explicitly evaluating the spherical transform on $\U(N)$, we need the LU-decomposition which we give now in the new coordinates that have been developed. The ensuing proposition makes essential use of the fact, implied by Lemma~\ref{pb.2}, that the $(n-1)\times (n-1)$ block of $H_n$ is upper triangular.

\begin{proposition}[LU decomposition for $\U(n)$]\label{pb.4}
	Let $V_n\in\U(n)$ and has the LU decomposition $V_n=L_nU_n.$ where the matrix entries of $U_n\in\U(n)$ are denoted by $u_{j,k}^{(n)}$. Then, the diagonal element is given by
	\begin{equation}\label{b.16}
	u_{l,l}^{(n)}=\begin{cases}
	\displaystyle\exp\left[i\sum_{j=1}^n\alpha_{j}\right]\prod_{k=2}^{n}\cos\phi_{1,k}, &l=1,\\\\
	\displaystyle e^{-i\alpha_{l}}\frac{\prod_{k=l+1}^{n}\cos\phi_{l,k} }{\prod_{j=1}^{l-1}\cos\phi_{j,l}}, &l\ne 1,
	\end{cases}
	\end{equation} 
	 in the iterative coordinates introduced in Proposition~\ref{pb.1}.
	For the radius $r_{l}^{(n)}$ and complex phase $e^{i\varphi_l^{(n)}}$ of the product $u_{1,1}^{(n)}u_{2,2}^{(n)}\cdots u_{l,l}^{(n)}$, we find
	\begin{equation}\label{b.22}
	r_l^{(n)}=\prod_{j=1}^l\prod_{k=l+1}^n\cos\phi_{j,k},\quad \exp[i\varphi_l^{(n)}]=\exp\left[i\alpha_1+i\sum_{j=l+1}^{n}\alpha_j\right],\quad (l=1,\ldots,n-1).
	\end{equation}
	Also, the determinant $\exp[i\varphi_n^{(n)}]$ of $V_n$ is given by $\exp[i\varphi_n^{(n)}]=\exp[i\alpha_1]$.
\end{proposition}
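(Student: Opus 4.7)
The plan is to exploit the recursion $V_n=\diag(V_{n-1},1)\,H_n$ from Proposition~\ref{pb.1}, together with the entries of $H_n$ recorded in Lemma~\ref{pb.2}. The decisive observation is that Lemma~\ref{pb.2} gives $h_{j,k}^{(n)}=0$ for $k<j<n$, so the top-left $(n-1)\times(n-1)$ block of $H_n$ is upper triangular with diagonal $(e^{i\alpha_n}\cos\phi_{1,n},\cos\phi_{2,n},\ldots,\cos\phi_{n-1,n})$. The whole statement will then fall out of tracking principal minors through this recursion.

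First I will verify the block-minor factorisation
\begin{equation*}
(V_n)_{l\times l}=(V_{n-1})_{l\times l}\,(H_n)_{l\times l},\qquad 1\le l<n.
\end{equation*}
This holds because the last row of $\diag(V_{n-1},1)$ has zeros in its first $n-1$ entries, and because the rows of $H_n$ indexed $m=l+1,\ldots,n-1$ contribute nothing in columns $k\le l$, thanks to the upper-triangular structure of the $(n-1)\times(n-1)$ block. Since $(H_n)_{l\times l}$ is itself triangular, taking determinants gives
\begin{equation*}
\det\bigl((V_n)_{l\times l}\bigr)=\det\bigl((V_{n-1})_{l\times l}\bigr)\,e^{i\alpha_n}\prod_{j=1}^{l}\cos\phi_{j,n},
\end{equation*}
and iterating the recursion down to $V_l$, for which $\det((V_l)_{l\times l})=\det(V_l)$, yields
\begin{equation*}
\det\bigl((V_n)_{l\times l}\bigr)=\det(V_l)\prod_{k=l+1}^{n}e^{i\alpha_k}\prod_{j=1}^{l}\cos\phi_{j,k}.
\end{equation*}

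It remains to compute $\det(V_l)=e^{i\alpha_1}$ for every $l$. Each $\Phi_{j,n}$ is the identity outside a $2\times 2$ block with determinant $\cos^{2}\phi_{j,n}+\sin^{2}\phi_{j,n}=1$; for $j=1$ the two phase factors $e^{\pm i\alpha_n}$ cancel inside that same $2\times 2$ determinant. Hence $\det(H_n)=\prod_{j=1}^{n-1}\det(\Phi_{j,n})=1$, so $\det(V_n)=\det(V_{n-1})$ and by induction $\det(V_l)=\det(V_1)=e^{i\alpha_1}$. Substituting this back and splitting modulus from phase reproduces the expressions for $r_l^{(n)}$ and $\varphi_l^{(n)}$ in~\eqref{b.22}, together with the final claim $\varphi_n^{(n)}=\alpha_1$. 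The diagonal entries $u_{l,l}^{(n)}$ then follow from the ratio $u_{l,l}^{(n)}=\det((V_n)_{l\times l})/\det((V_n)_{(l-1)\times(l-1)})$, which is immediate because $L_n$ has unit diagonal: the factor $e^{-i\alpha_l}$ appears directly, and telescoping the cosine products collapses them to the form in~\eqref{b.16}, with the case $l=1$ read off from $\det((V_n)_{1\times 1})$ itself. The main obstacle is solely the block-minor factorisation above; once Lemma~\ref{pb.2} is in hand, the remainder is elementary bookkeeping.
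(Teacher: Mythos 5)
Your proof is correct and uses essentially the same ingredients as the paper's: the recursion $V_n=\diag(V_{n-1},1)H_n$ together with the observation from Lemma~\ref{pb.2} that the top-left $(n-1)\times(n-1)$ block of $H_n$ is upper triangular. The only organizational difference is that you track the leading principal minors $\det\bigl((V_n)_{l\times l}\bigr)$ directly and recover each $u_{l,l}^{(n)}$ as a ratio of consecutive minors (which treats $l=n$ uniformly), whereas the paper recursively identifies the diagonal of the $U$-factor of $\diag(U_{n-1},1)H_n$ via $u_{l,l}^{(n)}=u_{l,l}^{(n-1)}h_{l,l}^{(n)}$ for $l<n$ and then fixes $u_{n,n}^{(n)}$ separately from the determinant constraint.
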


\begin{proof}
	
	Equation~\eqref{b.22} is a simple consequence of~\eqref{b.16}. Thus, we concentrate ourselves on proving the latter.

	We obtain the expression for $u_{l,l}^{(n)}$ using a recursive procedure for $n>1$ as for $n=1$ we trivially have $u_{1,1}^{(1)}=e^{i\alpha_1}$. To this aim, we decompose $V_{n-1}$ into $L_{n-1}$ and $U_{n-1}$ as follows
	\begin{equation}
	Q_n
	=\begin{bmatrix}
	L_{n-1}&0\\0&1
	\end{bmatrix}\begin{bmatrix}
	U_{n-1}&0\\0&1
	\end{bmatrix}H_n.
	\end{equation}
	Now we decompose the product of $\begin{bmatrix}
	U_{n-1}&0\\0&1
	\end{bmatrix}$ and $H_n$. This is best done in a block-wise multiplication and decomposing it afterwards, i.e.,
	\begin{equation}
	\begin{bmatrix}
	U_{n-1}&0\\0&1
	\end{bmatrix}
	\begin{bmatrix}
	[h_{j,k}^{(n)}]_{j,k=1}^{n-1}&[h_{j,n}^{(n)}]_{j=1}^{n-1}\\ \\
	[h_{n,k}^{(n)}]_{k=1}^{n-1}&h_{n,n}^{(n)}
	\end{bmatrix}=\begin{bmatrix}
	U_{n-1}[h_{j,k}^{(n)}]_{j,k=1}^{n-1}&U_{n-1}[h_{j,n}^{(n)}]_{j=1}^{n-1}\\ \\
	[h_{n,k}^{(n)}]_{k=1}^{n-1}&h_{n,n}^{(n)}
	\end{bmatrix}=L'U_{n},
	\end{equation}
	where $L'$ is a lower triangular matrix, obviously satisfying $L_{n}=\diag(L_{n-1},1)L'$ because we aim at the decomposition $V_n=L_nU_n$.
	By Lemma~\ref{pb.2}, $[h_{j,k}^{(n)}]_{j,k=1}^{n-1}$ is upper triangular, which implies that the first $(n-1)$ diagonal entries of $U_n$ are products of the corresponding diagonal entries of $U_{n-1}$ and $H_n$, i.e.
	\begin{equation}
	u_{l,l}^{(n)}=u_{l,l}^{(n-1)}h_{l,l}^{(n)},\quad {\rm for}\ l=1,\cdots,n-1\ {\rm and}\ n>1.
	\end{equation}
	Also $u_{n,n}^{(n)}$ is obtained from the determinant requirement $u_{1,1}^{(n)}u_{2,2}^{(n)}\cdots u_{n,n}^{(n)}=e^{i\alpha_1}$ as apart from $V_1$ all the other matrices $\Phi_{j,k}$ have the determinant equal to unity.
\end{proof}

Eventually, we would like to give the Haar measure of $\U(n)$ in these new coordinates.

\begin{proposition}[Haar measure for $\U(n)$]\label{prop:Haar.coord}
	In the parametrisation of Proposition~\ref{pb.1}, the normalised Haar measure of $V_n\in\U(n)$ reads
	\begin{equation}\label{b.8a}
	\mu(\dv V_n)=\left(\prod_{1\le j<k\le n}2(k-j)(\cos\phi_{j,k})^{2(k-j)-1}\sin\phi_{j,k}\dv\phi_{j,k}\right)\left(\prod_{j=1}^n\frac{\dv\alpha_j}{2\pi}\right)\left(\prod_{1\le j<k\le n}\frac{\dv\psi_{j,k}}{2\pi}\right).
	\end{equation}
\end{proposition}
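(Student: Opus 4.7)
The plan is to prove~\eqref{b.8a} by induction on $n$, exploiting both the iterative decomposition $V_n=\diag(V_{n-1},1)\,H_n$ of Proposition~\ref{pb.1} and the classical principal fibration $\U(n-1)\hookrightarrow\U(n)\to S^{2n-1}$. The base case $n=1$ is immediate: $V_1=e^{i\alpha_1}$, the Haar measure on $\U(1)$ is $d\alpha_1/(2\pi)$, and both products in~\eqref{b.8a} indexed by $1\le j<k\le 1$ are empty.

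For the inductive step I use that $\diag(V_{n-1},1)$ acts as the identity on the last row, so the last row of $V_n$ coincides with the last row of $H_n$ given in~\eqref{b.7}. Hence the map $V_n\mapsto\bigl(V_{n-1},(h_{n,1}^{(n)},\ldots,h_{n,n}^{(n)})\bigr)$ identifies (a full-measure subset of) $\U(n)$ with $\U(n-1)\times S^{2n-1}$: from the last row one reconstructs $H_n$ uniquely through its parameters $(\phi_{j,n},\psi_{j,n},\alpha_n)$, and then $\diag(V_{n-1},1)=V_nH_n^{-1}$ provides $V_{n-1}$. By uniqueness of bi-invariant Haar measure and $\U(n-1)$-invariance of this fibration, $\mu_n=\mu_{n-1}\otimes\sigma_{2n-1}$ where $\sigma_{2n-1}$ is the normalised round measure on $S^{2n-1}$. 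The inductive hypothesis then supplies the factors of~\eqref{b.8a} for indices $1\le j<k\le n-1$ and $1\le j\le n-1$, so it only remains to express $\sigma_{2n-1}$ in the stage-$n$ coordinates.

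To this end, the phases in~\eqref{b.7} enter only through the affine combinations $\psi_{1,n}$, $\psi_{k,n}+\alpha_n$ for $1<k<n$, and $\alpha_n$; this is a triangular change of variables with unit Jacobian on the torus, so up to relabelling into $n$ independent phases $\vartheta_1,\ldots,\vartheta_n\in(-\pi,\pi]$ one is left with the standard complex-spherical parametrisation $v_k=e^{i\vartheta_k}\sin\phi_{k,n}\prod_{l<k}\cos\phi_{l,n}$ for $1\le k<n$ and $v_n=e^{i\vartheta_n}\prod_{l<n}\cos\phi_{l,n}$ of $S^{2n-1}\subset\C^n$. Pulling back the round metric recursively via $v_1=e^{i\vartheta_1}\sin\phi_{1,n}$ and $(v_2,\ldots,v_n)=\cos\phi_{1,n}\,u$ with $u\in S^{2n-3}$, the constraint $d|u|^2=0$ kills the cross terms between $\phi_{1,n}$ and the remaining sphere, yielding $ds^2_{S^{2n-1}}=d\phi_{1,n}^2+\sin^2\phi_{1,n}\,d\vartheta_1^2+\cos^2\phi_{1,n}\,ds^2_{S^{2n-3}}$. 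Iterating produces the volume form $\prod_{k=1}^{n-1}\sin\phi_{k,n}(\cos\phi_{k,n})^{2(n-k)-1}d\phi_{k,n}\prod_{k=1}^nd\vartheta_k$, and dividing by the total surface area $2\pi^n/(n-1)!=(2\pi)^n/(2^{n-1}(n-1)!)$ yields precisely the constants $\prod_{k=1}^{n-1}2(n-k)$ together with factors $1/(2\pi)$ for each phase. This matches the stage-$n$ contribution in~\eqref{b.8a} and closes the induction.

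The main technical step is the metric/Jacobian computation on $S^{2n-1}$, which is slightly bookkeeping-heavy but standard. A clean shortcut is to invoke the Hurwitz parametrisation of the sphere used in~\cite{Hu97,DF17} and observe that the change of variables between their coordinates and those in~\eqref{b.7} amounts to the phase relabelling described above. The only subtlety is that the identification $\U(n)\cong\U(n-1)\times S^{2n-1}$ via our coordinates fails on a lower-dimensional (hence null) set where $\phi_{j,n}\in\{0,\pi/2\}$, which is harmless for an absolutely continuous measure.
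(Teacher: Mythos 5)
Your proof is correct and takes essentially the same approach as the paper: both exploit the recursion $V_n=\diag(V_{n-1},1)\,H_n$ to peel off the last-row sphere $S^{2n-1}$, change to the affine phase combinations $\psi_{1,n},\,\psi_{k,n}+\alpha_n,\,\alpha_n$ (which the paper packages into the diagonal phase matrix $\Psi$, and you note gives unit Jacobian on the torus), and then compute the round metric on $S^{2n-1}$ recursively from the factorisation $v_1=e^{i\vartheta_1}\sin\phi_{1,n}$, $(v_2,\ldots,v_n)=\cos\phi_{1,n}\,u$. The only cosmetic difference is that you invoke the principal fibration $\U(n-1)\hookrightarrow\U(n)\to S^{2n-1}$ and uniqueness of Haar measure to justify the product-measure decomposition, whereas the paper quotes the explicit Jacobian-determinant formula from Forrester's book; the underlying computation is the same.
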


\begin{proof}
	We follow the proof in~\cite[\S 2.3.1]{Fo10} and compute the Jacobian recursively, too. In particular, we adapt the formula presented therein which is
	\begin{equation}\label{b.8}
	\mu(\dv V_n)=C_n\mu(\dv V_{n-1})\left|\det\left[\begin{array}{cc|c}
	\displaystyle\left\{\frac{\partial \vec h_{n}^{(n)}}{\partial \alpha_{n}}H_n^\dagger\right\}_{j,r}
	&\displaystyle\left\{\frac{\partial \vec h_{n}^{(n)}}{\partial \alpha_{n}}H_n^\dagger\right\}_{j,i}
	&\displaystyle\left\{\frac{\partial \vec h_{n}^{(n)}}{\partial \alpha_{n}}H_n^\dagger\right\}_{n,i}\\ \hline
	\displaystyle\left\{\frac{\partial \vec h_{n}^{(n)}}{\partial \phi_{k,n}}H_n^\dagger\right\}_{j,r}
	&\displaystyle\left\{\frac{\partial \vec h_{n}^{(n)}}{\partial \phi_{k,n}}H_n^\dagger\right\}_{j,i}
	&\displaystyle\left\{\frac{\partial \vec h_{n}^{(n)}}{\partial \phi_{k,n}}H_n^\dagger\right\}_{n,i}\\
	\displaystyle\left\{\frac{\partial \vec h_{n}^{(n)}}{\partial \psi_{k,n}}H_n^\dagger\right\}_{j,r}
	&\displaystyle\left\{\frac{\partial \vec h_{n}^{(n)}}{\partial \psi_{k,n}}H_n^\dagger\right\}_{j,i}
	&\displaystyle\left\{\frac{\partial \vec h_{n}^{(n)}}{\partial \psi_{k,n}}H_n^\dagger\right\}_{n,i}
	\end{array}\right]_{j,k=1,\ldots,n-1}\right|,
	\end{equation}
	where $\vec h_{j}^{(n)}$ for the $j$th row of $H_n$ and $\{\}_{j,r}$ and $\{\}_{j,i}$ specifies the real and imaginary part of the $j$-th entry. The size of the matrix in~\eqref{b.8} is $(2n-1)\times (2n-1)$ since the last entry of $\dv \vec h_n^{(n)} H_n^\dagger$ is purely imaginary. The factorisation of the measures is not surprising as it reflects the group factorisation $\U(n)=\U(n-1)\times [\U(n)/\U(n-1)]$. The coset $\U(n)/\U(n-1)$ is the $(2n-1)$-dimensional sphere which is parametrised by the last row of $H_n$. Finally, let us underline that the angles $\alpha_n$, $\psi_{k,n}$ and $\psi_{k,n}$ are only comprised in $H_n$ and $V_{n-1}$ is independent of those.
	
	To evaluate the determinant in~\eqref{b.8}, we work out the invariant length element of the vector $\vec h_{n}^{(n)}$ as the determinant of the corresponding Riemannian metric is proportional to the square of this determinant. This computation becomes simpler when writing $\vec h_{n}^{(n)}=\Psi\vec\chi_n$ with the diagonal matrix of complex phases $\Psi=\diag(-e^{-i\psi_{1,n}},-e^{-i(\psi_{2,n}+\alpha_n)},\ldots,-e^{-i(\psi_{n-1,n}+\alpha_n)},e^{-i\alpha_{n}})$ and the remaining real vector $\vec\chi_n$. This real vector describes an $(n-1)$-dimensional real sphere, in particular it is $\vec\chi_n\vec\chi_n^{\,T}=1$, and admits the recursion
	\begin{equation}
	\vec\chi_j=(\sin\phi_{n-j+1,n},\ \cos\phi_{n-j+1,n}\,\vec\chi_{j-1})
	\end{equation}
	for all $j=2,\ldots,n$ with $\vec\chi_{1}=1$. With the help of this convention, we compute the length element
	\begin{equation}
	\begin{split}
	\dv\vec h_{n}^{(n)}\dv(\vec h_{n}^{(n)})^\dagger=&\vec\chi_n(\dv\Psi)(\dv\Psi^\dagger)\vec\chi_n^{\,T}+\dv\vec\chi_n\dv\vec\chi_n^{\,T}\\
	=&\sin^2\phi_{1,n}\,\dv\psi_{1,n}^2+\sum_{k=2}^{n-1}\sin^2\phi_{k,n}\left(\prod_{l=1}^{k-1}\cos^2\phi_{l,n}\right)(\dv \psi_{k,n}+\dv\alpha_n)^2\\
	&+\left(\prod_{l=1}^{n-1}\cos^2\phi_{l,n}\right)\dv\alpha_n^2+\dv\phi_{1,n}^2+\cos^2\phi_{1,n}\dv\vec\chi_{n-1}\dv\vec\chi_{n-1}^{\,T}\\
	=&\sin^2\phi_{1,n}\,\dv\psi_{1,n}^2+\sum_{k=2}^{n-1}\sin^2\phi_{k,n}\left(\prod_{l=1}^{k-1}\cos^2\phi_{l,n}\right)(\dv \psi_{k,n}+\dv\alpha_n)^2\\
	&+\left(\prod_{l=1}^{n-1}\cos^2\phi_{l,n}\right)\dv\alpha_n^2+\dv\phi_{1,n}^2+\sum_{k=2}^{n-1}\left(\prod_{l=1}^{k-1}\cos^2\phi_{l,n}\right)\dv\phi_{k,n}^2.
	\end{split}
	\end{equation}
	In the first line, the mixed terms vanish as $\vec\chi_n(\dv\Psi)\Psi^\dagger d\vec\chi_n^{\,T}=-d\vec\chi_n\Psi( \dv\Psi^\dagger)\vec\chi_n^{\,T}$. In the second step, we have made used of the fact that also $\vec\chi_{n-1}$ is a real unit vector so that we have $\vec\chi_{n-1}\dv\vec\chi_{n-1}^{\,T}=0$. 
	
	 Despite that we encounter in the invariant length element the combination $\dv \psi_{k,n}+\dv\alpha_n$, the determinant of the corresponding metric is up to a numerical factor equal to the product
	 \begin{equation}
	 \begin{split}
	 &\sin^2\phi_{1,n}\left[\prod_{k=2}^{n-1}\sin^2\phi_{k,n}\left(\prod_{l=1}^{k-1}\cos^2\phi_{l,n}\right)\right]\left(\prod_{l=1}^{n-1}\cos^2\phi_{l,n}\right)\left[\prod_{k=2}^{n-1}\left(\prod_{l=1}^{k-1}\cos^2\phi_{l,n}\right)\right]\\
	 =&\prod_{j=1}^{n-1}\sin^2\phi_{j,n}(\cos\phi_{j,n})^{2(2n-2j-1)}.
	 \end{split}
	 \end{equation}
	 After we take the root, we obtain the factor in~\eqref{b.8a} that depends on $\phi_{j,n}$. Resolving the recursion from $V_{n-1}\in\U(n-1)$ to $V_{1}\in\U(1)$ we get the remaining terms in the same way.
	
	The normalisation can be readily computed as all angles are independent.
\end{proof}

Let us emphasise that the radii and complex phases in~\eqref{b.22} are the only parts of $V_n\in\U(n)$ which enter the spherical transform. All other variables do not play an important role and need to be integrated out. This motivates us to rewrite the Haar measure even further in these variables.

\begin{corollary}
	For $n>1$, the Haar measure~\eqref{b.8a} can be rewritten as
	\begin{equation}\label{b.18}
	\mu(\dv V_n)=\left(\prod_{k=1}^n\frac{(k-1)!}{2\pi^k}\right)\dv\alpha_1\prod_{l=1}^{n-1}r_l^{(n)}\dv r_l^{(n)}\prod_{l=1}^{n-1}\dv\varphi_l^{(n)} \prod_{\substack{2\leq j+1<k\leq n}}\tan\phi_{j,k}\dv\phi_{j,k}\prod_{1\leq j<k\leq n}\dv\psi_{j,k},
	\end{equation}
	where $-\pi\le \varphi_l^{(n)}<\pi$, the ranges of $r_l^{(n)}$ are specified by
	\begin{equation}\label{b.18b}
	0<r_j^{(n)}<R_j=\frac{\prod_{l=1}^j\prod_{k=j+1}^n\cos\phi_{l,k}}{\cos\phi_{j,j+1}}\leq1,\quad(j=1,\ldots, n-1),
	\end{equation}
	and the ranges of the other angles are the same as in~\eqref{b.3}.
\end{corollary}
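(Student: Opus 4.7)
The plan is to prove this corollary as a direct change of variables from Proposition~\ref{prop:Haar.coord}. The transformation decouples cleanly into two independent pieces: one on the near-diagonal angles $\phi_{l,l+1}$ (replaced by the radii $r_l^{(n)}$) and one on the phases $(\alpha_2,\ldots,\alpha_n)$ (replaced by $(\varphi_1^{(n)},\ldots,\varphi_{n-1}^{(n)})$), while $\alpha_1$, the remaining $\phi_{j,k}$ with $k-j\geq2$, and all the $\psi_{j,k}$ are left untouched.

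First I would split the product in~\eqref{b.8a} according to whether $k-j=1$ or $k-j\geq2$. From~\eqref{b.22}, $r_l^{(n)}=R_l\cos\phi_{l,l+1}$ with $R_l$ depending only on $\phi_{j,k}$ with $k-j\geq2$. Holding the latter fixed, the substitution $\phi_{l,l+1}\mapsto r_l^{(n)}$ gives $2\cos\phi_{l,l+1}\sin\phi_{l,l+1}\,d\phi_{l,l+1}=(2r_l^{(n)}/R_l^{2})\,dr_l^{(n)}$ after accounting for orientation, and the range $0\le\phi_{l,l+1}\le\pi/2$ translates to $0<r_l^{(n)}<R_l$. The surviving factor for pairs with $k-j\geq2$ is rewritten as $(\cos\phi_{j,k})^{2(k-j)-1}\sin\phi_{j,k}\,d\phi_{j,k}=(\cos\phi_{j,k})^{2(k-j)}\tan\phi_{j,k}\,d\phi_{j,k}$.

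The key combinatorial identity to verify is
\begin{equation*}
\prod_{l=1}^{n-1}R_l^{-2}\;\cdot\;\prod_{\substack{1\le j<k\le n\\k-j\geq2}}(\cos\phi_{j,k})^{2(k-j)}=1.
\end{equation*}
I would verify this by counting: for a fixed pair $(j,k)$ with $k-j\geq2$, the factor $\cos\phi_{j,k}$ appears in $R_l$ exactly when $j\le l\le k-1$, i.e., in $k-j$ of the $R_l$'s, contributing $(\cos\phi_{j,k})^{-2(k-j)}$ to $\prod_l R_l^{-2}$, which cancels the numerator. For the phases, the linear map $\varphi_l^{(n)}=\alpha_1+\sum_{j=l+1}^n\alpha_j$ of $(\alpha_2,\ldots,\alpha_n)\mapsto(\varphi_1^{(n)},\ldots,\varphi_{n-1}^{(n)})$ has a triangular Jacobian matrix with unit diagonal, so its determinant is $1$ and $d\alpha_2\cdots d\alpha_n=d\varphi_1^{(n)}\cdots d\varphi_{n-1}^{(n)}$.

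Finally I would collect constants. The surviving numerical factor is $2^{n-1}\prod_{d=2}^{n-1}(2d)^{n-d}$ divided by $(2\pi)^n(2\pi)^{n(n-1)/2}$; using the telescoping identity $\prod_{d=1}^{n-1}d^{n-d}=\prod_{k=1}^{n}(k-1)!$ together with $n-1+(n-2)(n-1)/2=n(n-1)/2$ for the exponents of $2$, this simplifies to $\prod_{k=1}^{n}(k-1)!/(2\pi^{k})$, matching the stated prefactor. The main (and only non-routine) obstacle is the combinatorial identity cancelling the $R_l^{-2}$ factors against the $\cos^{2(k-j)}$ powers; everything else is direct Jacobian bookkeeping already carried out in spirit in the proof of Proposition~\ref{prop:Haar.coord}.
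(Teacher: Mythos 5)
Your proposal is correct and follows essentially the same route as the paper's proof: the same change of variables $\phi_{l,l+1}\mapsto r_l^{(n)}$ and $\alpha_{l+1}\mapsto\varphi_l^{(n)}$, the same observation that the phase Jacobian is unity (the paper phrases it via the recursion $\varphi_l^{(n)}=\varphi_{l+1}^{(n)}+\alpha_{l+1}$; your triangular-matrix framing is equivalent), and the same cosine-cancellation. The only cosmetic difference is that the paper states the cancellation as $\prod_{1\le j<k\le n}(\cos\phi_{j,k})^{2(k-j)}=\prod_{l=1}^{n-1}(r_l^{(n)})^2$ whereas you factor out $\cos^2\phi_{l,l+1}$ and write the equivalent identity in terms of $R_l$; your explicit counting argument is a nice addition since the paper simply asserts the identity.
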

\begin{proof}
	The inequality~\eqref{b.18b} can be obtained by using Corollary~\ref{pb.2} and changing the variables $\phi_{j,j+1}$ to $r_j^{(n)}$ and $\alpha_{j+1}$ to $\varphi_j^{(n)}$ with $j=1,\ldots,n-1$. The computation of the Jacobian of the change of variables from $\alpha_{j+1}$ to $\varphi_j^{(n)}$ is equal to unity because they end up in trivial shifts due to $\varphi_{l}^{(n)}=\varphi_{l+1}^{(n)}+\alpha_{l+1}$. The Jacobian for the change from $\phi_{l,l+1}$ to $r_l^{(n)}$ gives $\dv r_l^{(n)}/r_l^{(n)}=\tan \phi_{j,j+1}\dv \phi_{j,j+1}$ which has to be done successively starting with $l=1$. The product in the measure~\eqref{b.8a} is in these coordinates
	\begin{equation}
	\prod_{k=2}^n\prod_{j=1}^{k-1}\sin\phi_{j,k}(\cos\phi_{j,n})^{2k-2j-1}=\left(\prod_{k=2}^n (r_{k-1}^{(n)})^2\right)\left(\prod_{k=2}^n\prod_{j=1}^{k-1}\tan\phi_{j,k}\right),
	\end{equation}
	 which eventually shows~\eqref{b.18}. 
\end{proof}


\end{document}